\documentclass{article}
\usepackage{amsmath}
\usepackage{amssymb}
\usepackage{amsthm}
\usepackage{hyperref}
\usepackage[capitalise]{cleveref}
\usepackage[sort,compress,numbers]{natbib}
\usepackage{xcolor}
\usepackage{graphicx}
\usepackage[margin=1.15in]{geometry}
\usepackage{caption}
\usepackage{subcaption}
\usepackage{wrapfig}

\newtheorem{theorem}{Theorem}[section]
\newtheorem{lemma}{Lemma}[section]

\newcommand{\dwt}{\widetilde{\Delta}}
\newcommand{\E}{\mathbb{E}}

\title{Multiserver-job Response Time under Multilevel Scaling}
\author{Isaac Grosof\footnote{\url{izzy.grosof@northwestern.edu}, Northwestern University, Department of Industrial Engineering and Management Sciences.}~
and Hayriye Ayhan\footnote{\url{hayriye.ayhan@isye.gatech.edu}, Georgia institute of Technology, H. Milton School of Industrial and Systems Engineering.}}

\begin{document}

\maketitle

\begin{abstract}
    We study the multiserver-job setting in the load-focused multilevel scaling limit, where system load approaches capacity much faster than the growth of the number of servers $n$.

    We consider the ``1 and $n$'' system, where each job requires either one server or all $n$.
    Within the multilevel scaling limit, we examine three regimes: load dominated by $n$-server jobs, 1-server jobs, or balanced.
    In each regime, we characterize the asymptotic growth rate of the boundary of the stability region and the scaled mean queue length.

    \textcolor{black}{We demonstrate that mean queue length peaks near balanced load via theory, numerics, and simulation.}
\end{abstract}

\section{Introduction}

Queueing theory primarily emphasizes models where only a single job can be served at a time, or where multiple jobs can be served but each requires the same amount of resources.
Both classes of models, single-server and homogeneous-multi-server models, capture some fraction of the space of real queueing systems while remaining conducive to theoretical study. However, homogeneous models do not reflect the behavior of many important systems, including many modern computing systems.

\begin{figure}
    \centering
    \includegraphics[width=0.8\linewidth]{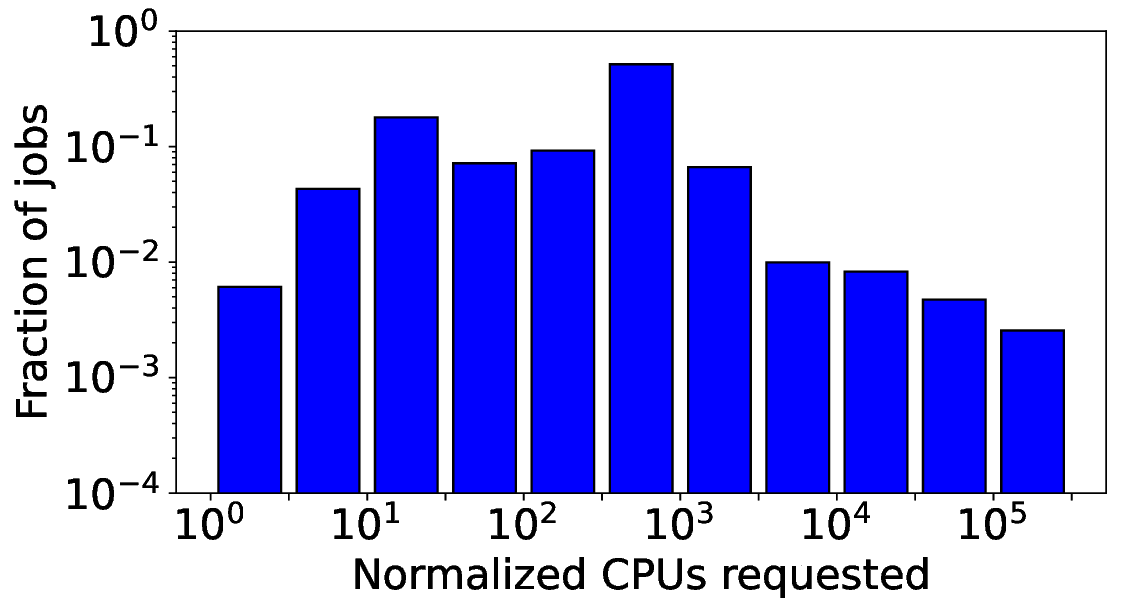}
    \caption{The distribution of number of CPUs requested
    in Google's Borg trace \cite{tirmazi_borg_2020}.
    Number of CPUs is normalized to the size of the smallest request observed,
    not an absolute value.
    The peak of the distribution is around 500 normalized CPUs,
    and there is significant probability mass anywhere from $1$ to $10^5$ normalized CPUs.
}
    \label{fig:cpu-request}
\end{figure}

For example, in modern datacenters, such as those of Microsoft, Google, Amazon, etc., each job requests an amount of computing resources (cores, processors, etc.) specific to the needs of that job. In Google's recently published trace of its ``Borg'' computing cluster \cite{tirmazi_borg_2020}, the requested resources of different jobs vary by a factor of $10^5$, as shown in \cref{fig:cpu-request}.
Models in which different jobs request different amounts of resources are also important for bandwidth sharing applications, where a job's resource requirement is its bandwidth need, rather than computing resources, or in high-performance computing settings with dedicated computing nodes.

Throughout this paper, we therefore focus on a ``multiserver-job model'' (MSJ), in which each job requests some number of servers, and concurrently occupies that many servers throughout its time in service.

There has been recent progress in analyzing the performance of MSJ queueing models, both under basic First-Come First-Served (FCFS) service \cite{grosof_reset_marc_2023}, as well as under more advanced scheduling policies \cite{grosof_wcfs_2021,anselmi_balanced_2025,grosof_optimal_2022}.
The scheduling policies which are popular in today's datacenters focus on heuristics closely based off of FCFS scheduling, such as backfilling heuristics \cite{srinivasan_characterization_2002,carastan_one_2019,wang_application_2009}.
Analyzing FCFS scheduling is therefore important to understanding the behavior of these systems.

Recent results on mean response time in MSJ FCFS models have focused on asymptotic limits, both the conventional heavy-traffic limit \cite{grosof_reset_marc_2023},
and a dual-scaling limit, where both the load and the number of servers grow asymptotically \cite{wang_zero_2021,hong_sharp_2024}.
The latter work focuses on scalings where the number of servers grows so quickly relative to the load that the system experiences zero queueing or zero expected waiting time in the asymptotic limit.
We call this limit the ``server-focused multilevel scaling'' limit.

This recent work leaves open the problem of performance analysis in the ``load-focused multilevel scaling'' (LFMS) limit, where load approaches capacity much faster than the asymptotic growth of the number of servers.
Stability results and numerical results exist in the LFMS limit, but no asymptotic scaling results exist.
In this paper, we analyze the asymptotic scaling of mean queue length of a MSJ system in the LFMS limit. To the best of our knowledge, this is the first such analysis. In particular, we inaugurate the study of the LFMS limit by focusing on a specific MSJ system. We study the ``1 and $n$'' system, where each job's server need is either a single server, or all $n$ servers in the system, with FCFS service. \textcolor{black}{Our main objective in considering a system with two extreme server needs is to understand the impact of this variability in server needs on system performance. The insights derived from our setting are applicable in more general settings with highly-variable server needs across more than two classes. As a result, our insights have practical implications since systems with extreme server needs arise in applications such as Large Language Model training \cite{wan_robust_2025} and \cite{Mega} and supercomputers such as Frontier \cite{Frontier}}.

\textcolor{black}{Our key insight, as discussed in \cref{extensions}, is that settings with load spread evenly across high and low-resource job types exhibit much higher mean queue lengths than settings with load concentrated on just high-resource jobs, or just low-resource jobs, controlling for system utilization. This insight generalizes beyond the 1-and-$n$ system that we study, to cover general high-variability MSJ settings, such as AI training and supercomputing.}

Within the LFMS limit in the 1 and $n$ system, we encounter three regimes with distinct behavior, depending on the fraction of system load corresponding to jobs of each type.
We call these regimes the 1-server dominated regime, the balanced regime, and the $n$-server dominated regime.

We provide the first analysis of mean queue length scaling with respect to the number of servers $n$ in each of these three LFMS regimes.

Specifically, we characterize the asymptotic growth rate of the completion rate $\mu$ and the scaled mean queue length $\E[Q(1-\rho)]$, where $Q$ is the queue length and $\rho = \lambda/\mu$ is the fraction of capacity in use.
We characterize these growth rates in each of three LFMS regimes: The $n$-server dominated, balanced, and 1-server dominated regimes, \textcolor{black}{in \cref{thm:n-server-dominated,thm:1-server-dominated} respectively. Note that as demonstrated in \cref{thm:n-server-dominated}, the asymptotics are the same in the first two regimes.}
We also empirically evaluate our asymptotic expressions in comparison to prior fixed-$n$ characterizations in \cref{sec:empirical}.

The rest of the paper  is organized as follows:
\cref{sec:prior} reviews relevant literature in relation to our results.
\cref{sec:model} provides the details of our MSJ model and the related saturated system.
\cref{sec:main} states our main results.
\cref{sec:prelim} provides preliminary characterizations of system performance.
\cref{sec:regime-1,sec:regime-2,sec:regime-3} prove our main results in the three regimes discussed above. \textcolor{black}{\cref{sec:empirical} compares our asymptotic results against fixed-$n$ formulas. Section \ref{extensions} focuses on generalizations of our original $1$ and $n$ system. Finally, Section \ref{conclusions} concludes the paper.}


















\section{Prior work}
\label{sec:prior}

We review prior theoretical work on multiserver-job (MSJ) systems -- for an initial overview, see \cite{harchol_multiserver_2022}.
We discuss the MSJ stability region under the First-Come First-Served (FCFS) policy in \cref{sec:fcfs-stability}, mean response time under FCFS in \cref{sec:fcfs-response}, and results for other scheduling policies in \cref{sec:other-scheduling}.
\subsection{Stability of the MSJ system under FCFS}
\label{sec:fcfs-stability}

In the MSJ setting, the First-Come First-Served (FCFS) policy with head-of-line blocking does not consistently keep all servers occupied, reducing its stability region relative to more optimized policies.
Characterizing this stability region is an important line of work.

An important approach for characterizing the stability region of an MSJ system under FCFS is the \emph{saturated system} approach \cite{baccelli_saturation_1995}. This work provides a general framework under which one can relate the stability region of a queueing system which completes jobs in near-FCFS order to the throughput of a \emph{saturated} queueing system.
In the MSJ system, jobs enter service in FCFS order, causing completion order to be very close to FCFS order.
In the saturated system, an unlimited amount of jobs are available at all times, replacing the external arrival process.

This approach  has been used to characterize the stability region of a variety of MSJ FCFS systems \cite{grosof_reset_marc_2023}, though often requiring the solution of a system of linear equations to obtain the stability region.
More explicit symbolic results have been proven for special cases of job duration distribution and server need \cite{grosof_new_2023,rumyantsev_stability_2017,rumyantsev_stability_2020}.

These results have also been generalized to certain asymptotic limits, in the case of exponential duration and 1-or-$n$ server need \cite{olliaro_saturated_2023},
specifically the limit as the $n$-server jobs complete much faster than the $1$-server jobs, and the limit as the number of servers $n$ diverges.

We study the same 1-and-$n$ exponential setting as \cite{olliaro_saturated_2023}.
The asymptotic limits in that paper are simpler than our asymptotic limit -- we allow both $n \to \infty$ and $p_n \to 0$\textcolor{black}{, where $p_n$ is the probability that a job needs all $n$ servers,} allowing us to explore a richer variety of asymptotic behavior. Our results cover both stability region and mean response time.

\subsection{Response time of MSJ system under FCFS}
\label{sec:fcfs-response}

Much less is known about multiserver-job (MSJ) FCFS mean response time.
The only setting in which mean response time has been exactly characterized is the setting with $n=2$ servers and where 1-server jobs require an exponential service duration \cite{brill_queues_1984,fillippopoulos_mm2_2007}.

In a related but distinct model has been considered where a job's service must start at the same time at each of its servers, but may end at different times on different servers.
If completion times are independent across the servers, exact results on mean waiting time and the distribution of waiting time are known \cite{green_queueing_1980,seila_technical_1984}. In contrast, the MSJ model requires that service ends at the same time across all of a job's servers.

In more general MSJ settings, one must turn either to asymptotic results or numerical algorithms and approximations.

As for numerical results for response time, the standard technique is to employ matrix geometric and matrix analytic methods, and we see no reason why off-the-shelf methods would not be applicable \cite{latouche_introduction_1999}. These methods were also used to derive some of the stability results discussed above \cite{rumyantsev_stability_2017}. Such numerical results are far more efficient than simulation, but are inherently restricted to examining a single set of parameters at a time. They cannot provide asymptotic results, which are the focus of this paper.

Mean response time, or equivalently mean waiting time, has been explored in several asymptotic limits.
One type of limit that has received significant attention is a scaling limit where the number of servers, the largest server need in the system, and the load all scale polynomially, mimicking the classical Halfin-Whitt limit \cite{wang_zero_2021,hong_sharp_2024}.
However, results in this limit have been limited to the case where the load is low enough to allow stability with the maximal number of wasted servers: If the largest server need of a job is $m_{\max}$, this line of work requires that the mean number of servers occupied is below $n - m_{\max}$.
This limit becomes trivial when $m_{\max} = n$, as in this work.
Results in this limit focus on the asymptotic decay of queueing probability \cite{wang_zero_2021} or of mean waiting time in the scaling limit \cite{hong_sharp_2024}.

We refer to this limit as the server-focused multilevel scaling (SFMS) limit.
The SFMS limit has much lighter load than the \textcolor{black}{LFMS} limit of this paper.

The remaining asymptotic work has focused on the fixed-$n$ limit, as load approaches capacity. This limit mirrors the classical heavy-traffic limit.
Tight asymptotic characterization of mean response time is known for MSJ FCFS in this limit \cite{grosof_reset_marc_2023}. However, these results do not immediately generalize to any multilevel-scaling limit. This generalization is the contribution of this paper.

\subsection{MSJ system under other scheduling policies}
\label{sec:other-scheduling}

Outside of FCFS scheduling, more results are known in the MSJ setting under nontrivial scheduling policies.

Throughput-optimality results have been established both in a preemptive setting, for the MaxWeight policy \cite{maguluri_stochastic_2012},
and in a nonpreemptive setting,
for the Randomized Timers policy \cite{psychas_randomized_2018}, though the Randomized Timers policy has extremely poor response time in certain settings.
The recent Markovian Service Rate policy uses a simple timer-based scheme to achieve throughput optimality, including in a nonpreemptive setting, with predictable and acceptable response time in all settings \cite{chen_analyzing_2024}.

Moving to policies whose analyses primarily focus on mean response time, a significant line of work has focused on settings where all jobs' server needs are powers of 2 and $n$ is a power of 2, and settings with related divisibility assumptions. First, the preemptive ServerFilling policy has been shown to match the response time of a resource-pooled M/G/1/FCFS, in the power-of-2 setting with a fixed number of servers $n$ \cite{grosof_wcfs_2021,grosof_serverfilling_2023}.
In this setting, the ServerFilling-SRPT policy has been shown to achieve optimal mean response time in heavy traffic \cite{grosof_optimal_2022}.
Finally, in the power-of-2 setting, the Balanced Splitting dispatching policy has been exactly analyzed, with both fixed-$n$ and scaling analysis results \cite{anselmi_balanced_2025}.

In the SFMS limit discussed in \cref{sec:fcfs-response},
the Smallest Need First policy has also been studied, resulting in an expanded set of asymptotic limits in which expected waiting time decays to 0 \cite{hong_sharp_2024}.

Many practically used scheduling policies are backfilling policies, which can be seen as starting with the FCFS scheduling policy, with additional steps taken to use capacity that FCFS would otherwise leave unused \cite{srinivasan_characterization_2002,carastan_one_2019,wang_application_2009}. These policies have been investigated in simulation and in practice, but little is known about their theoretical properties. We hope to strengthen the theoretical understanding of the FCFS policy, to build towards a theoretical understanding of backfilling policies.

\section{Model}
\label{sec:model}

We introduce the multiserver-job (MSJ) model in \cref{sec:general-msj},
and discuss
the saturated system, a key theoretical tool for analyzing the MSJ system, in \cref{sec:def-saturated}.

\subsection{Multiserver-job system}
\label{sec:general-msj}

We consider a multiserver-job (MSJ) system with  $n$ identical servers. Customers arrive according to a Poisson process with rate $\lambda$. We specifically consider a model in which there are two types of jobs: Jobs which need to be served by one server, and jobs which need all $n$ servers. The probability that a job needs one server is $p_1$, and \textcolor{black}{(as defined in Section \ref{sec:fcfs-stability})} the probability that a job needs all $n$ servers is $p_n = 1-p_1$, i.i.d. The service time for a 1-server job is $Exp(\mu_1)$, and for an $n$-server job is $Exp(\mu_n)$, both independent of all other jobs.
Jobs are served in First-Come First-Served (FCFS) order with head-of-line blocking, meaning that jobs are admitted into service in FCFS order until a job is reached which cannot be served in the remaining available servers. No further jobs are served.
We define $q$ to be the queue length at a particular point in time, and $Q$ to be the stationary queue length random variable.

Note that the symbol $p_n$ always refers to the fraction of jobs that use all of the servers,
even as the number of servers $n$ changes.

Prior work \textcolor{black}{(see for example \cite{grosof_new_2023,rumyantsev_stability_2017,rumyantsev_stability_2020}, \cite{grosof_reset_marc_2023}, or \cite{olliaro_saturated_2023})}  has demonstrated the existence of a stability threshold $\mu$, where the system is stable if and only if $\lambda < \mu$. Note that $\mu$ has been explicitly characterized by several methods, including analysis of the corresponding saturated system, which we define and discuss further in \cref{sec:def-saturated}. We build on these prior explicit results to establish the scaling of the stability threshold $\mu$. We define $\rho = \lambda / \mu$ to be the fraction of the stability region that is in use.

A natural asymptotic limit is the $\lambda \to \mu$ scaling limit, known as the ``heavy traffic'' limit.
The limiting value of the mean scaled queue length, $\lim_{\rho \to 1} \E[Q(1-\rho)]$, has been characterized by analyzing the saturated system, which we likewise discuss in \cref{sec:def-saturated}.

However, the focus of this paper is the ``load-focused multilevel scaling'' (LFMS) limit where we allow $\rho$ to grow quickly enough that $\E[Q(1-\rho)]$ reaches its limiting value, and only then scale the number of servers to study the secondary effects of that scaling.

When scaling the number of  servers, it is natural to also consider scaling the fraction of single-server jobs $p_1$ towards 1, as the number of single-server jobs that can be served at once grows asymptotically.

Our goal is to characterize asymptotic behavior of the mean scaled queue length for this system under two secondary asymptotic limits:  $p_1 \to 1$, and $n \to \infty$. Both of these asymptotic limits are secondary to our load scaling.

In particular, it makes sense to consider taking both limits simultaneously as there is  a natural separation, based on whether the system load is dominated by 1-server jobs, by $n$-server jobs, or is balanced between the two.

We refer to the inherent workload of a job as the product of its server need and its duration. We refer to the overall load as the rate at which this workload is arriving to the system. The overall load of 1-server jobs is $\lambda \frac{p_1}{\mu_1}$, and the overall load of $n$-server jobs is  $\lambda \frac{n p_n}{\mu_n}$.
As a result, 1-server jobs require $\frac{p_1}{p_1 + n p_n}$ fraction of the total load.

We focus on the limit where $n \to \infty$ and $p_n \to 0$, holding $\mu_1$ and $\mu_n$ constant, and adjusting $\lambda$ as necessary to focus on the $\rho \to 1$ limit.
As $p_1 \to 1$, the fraction of the total load required by 1 server jobs can be approximated as $\frac{1}{1 + n p_n}$.

In this paper, we define the notation $f(n) = o(g(n))$ and $f(n) = O(g(n))$ as follows:
\begin{align*}
    f(n) = o(g(n)) \Leftrightarrow \lim_{n \to \infty} \frac{f(n)}{g(n)} = 0, \quad
    f(n) = O(g(n)) \Leftrightarrow \lim_{n \to \infty} \frac{f(n)}{g(n)} \in (-\infty, \infty).
\end{align*}
Note that when $f(n) = o(g(n))$ or $f(n) = O(g(n))$, $f(n)$ may be negative while $g(n)$ is positive.

We define $\omega(\cdot), \Omega(\cdot),$ and $\theta(\cdot)$ similarly.
If $f(n) = o(g(n)),$ then $g(n) = \omega(f(n))$.
If $f(n) = O(g(n)),$ then $g(n) = \Omega(f(n))$.
If $f(n) = O(g(n))$ and $f(n) = \Omega(g(n))$, then $f(n) = \theta(g(n))$.

If $n p_n = \omega(1)$, or equivalently $p_n = \omega(\frac{1}{n})$, the fraction of load occupied by 1-server jobs converges to 0 as $p_n \to 0, n \to \infty$.
We call this the ``$n$-server dominated'' regime.

If \textcolor{black}{$p_n = c/n$, for an arbitrary positive constant $c$  (which would imply that $n p_n = \theta(1)$, or equivalently $p_n = \theta(\frac{1}{n})$)}, the fraction of load occupied by 1-server jobs converges to a value strictly between 0 and 1. We call this the ``balanced'' regime.

Finally, if $n p_n = o(1)$, or equivalently $p_n = o(\frac{1}{n})$, the fraction of load occupied by 1-server jobs converges to 1.
We call this the ``1-server dominated'' regime.
We specifically focus on the polynomial-scaling subset of the 1-server dominated regime, where $p_n = 1/n^\alpha, \alpha > 1$.

\textcolor{black}{Note that through out the paper, all limits are taken as the number of servers $n$ tends to infinity.}

\subsection{Saturated System}
\label{sec:def-saturated}

A key tool for analyzing the MSJ system is the ``saturated system''.
The saturated system is a closed system with the same MSJ completion setup as the original open system.
In the saturated system, whenever a server is unoccupied and there is no job blocking the head of the line,
jobs enter the system with the same i.i.d. probabilities $p_1$ and $p_n$ of 1 and $n$-server jobs, respectively.
There is no separate stochastic arrival process. Jobs enter in response to a completion until either all $n$ servers are occupied, or an $n$-server job is generated that cannot  be served immediately.

Saturated system analysis can be used to derive the stability region of the original MSJ system, because the stability region of the original system is bounded by the throughput of the saturated system \cite{grosof_reset_marc_2023,baccelli_saturation_1995,foss_overview_2004}.

Saturated system analysis can also be used to derive the mean queue length of open MSJ systems under FCFS service.
To do so, \cite{grosof_reset_marc_2023} makes use of the \emph{relative completions} function $\Delta$, which maps states of the saturated system to real values.
Note that the concept of relative arrivals and relative completions was also present in work prior to that result \cite{falin_heavy_1999,dimitrov_single_2011}.

To define $\Delta$, we must first define several quantities in the saturated system.
Let $y$ denote a state of the saturated system and define $\mu_y$ as the completion rate in state $y$. Let $\nu_y$ denote the total transition rate out of state $y$, and $\nu_{y,y'}$ denote the specific transition rate from state $y$ to another state $y'$, where $y'$ is an arbitrary state of the saturated system. Finally, define $\mu$ as the time-average completion rate of the saturated system (e.g. the throughput).

Letting $y$ denote a state of the saturated system,
$\Delta(y)$ is defined as the solution to the following system of equations:
\begin{align}
    \label{eq:poisson}
    \Delta(y) := \frac{\mu_y - \mu}{\nu_y} + \sum_{y'} \frac{\nu_{y,y'}}{\nu_y} \Delta(y').
\end{align}
This equation \eqref{eq:poisson} can be seen as the Poisson equation for a Markov Reward Process whose instantaneous reward is the completion rate $\mu_y$. \textcolor{black}{This relative completions function impacts mean queue length in the MSJ system similarly to the impact of job size variance in an M/G/1 queue: It effectively measures the variability in completion rate. If the system is in a state $y$ with a large value of $\Delta(y)$, the system is likely to experience a high completion rate in the near future.}

Note that the solution to this system of equations is only defined up to an additive constant. Letting $Y$ denote the time-average random variable for the saturated system state, \cite{grosof_reset_marc_2023} adopts the convention that $\E[\Delta(Y)] = 0$,
which results in a unique solution.

Let $Y_d$ denote the state-average random variable corresponding to the embedded discrete-time Markov chain (DTMC), updating at completion epochs.
\cite[Theorem 4.2]{grosof_reset_marc_2023} proves the following result, for an arbitrary MSJ FCFS system, in the heavy traffic limit:
\begin{align}
    \label{eq:delta-limiting-scaled}
    \lim_{\rho \to 1} \E[Q(1-\rho)] = \E[\Delta(Y_d)] + 1.
\end{align}

Our goal in this paper is to analyze the asymptotic growth rate of $\E[\Delta(Y_d)]$ in each of our three regimes, within the surrounding load-focused multilevel scaling context.


\section{Main results} \label{sec:main}


We study the heavy traffic limit under load-focused multilevel scaling in each of three regimes:

\begin{enumerate}
    \item $n$-server dominated: $p_n = \omega(1/n)$,
    \item Balanced: $p_n = c/n$, where $c > 0$ is an arbitrary positive constant,
    \item 1-server dominated: $p_n = o(1/n)$, with a specific focus on the case where $p_n = 1/n^\alpha, \alpha > 1$.
\end{enumerate}

The main results of the paper are characterizations of $\E[\Delta(Y_d)]$, which determines the scaled mean queue length $\E[Q(1-\rho)]$ as discussed in \cref{sec:def-saturated}, and $\mu$, which determines the stability region, in the limit as $p_n \to 0$ and $n \to \infty$, in each of these three regimes.

We now state our results for each of these regimes:

\begin{theorem}
    \label{thm:n-server-dominated}
\textcolor{black}{ In regime 1 which is  the $n$-server dominated regime, namely $p_n = \omega(1/n)$, as $p_n \to 0, n \to \infty$, and in regime 2,  which is the balanced regime, namely $p_n = c/n$, for some positive constant $c$, as $p_n \to 0, n \to \infty$},
    \begin{align*}
         \mu &=  (1+\textcolor{black}{o(1)}) \frac{\mu_1}{p_n \ln(1/p_n)} \mbox{ and}\\
        \E[\Delta(Y_d)] &= (1+\textcolor{black}{o(1)})\frac{1}{2p_n}.
    \end{align*}
\end{theorem}
\begin{proof}{\textcolor{black}{The proof for  regime 1 is given in \cref{sec:regime-1}, split into \cref{lem:r1-mu} and \cref{thm:r1-delta} and 
the proof for regime 2 is given in \cref{sec:regime-2}, split into \cref{lem:r2-mu} and \cref{thm:r2-delta}.}} 
\end{proof}

\textcolor{black}{Intuitively, in the $n$-server dominated regime, our result shows that the throughput is dominated by 1-server jobs which are served while there is an $n$-server job at the head of the queue, and most of the servers are idle.
This is why $\mu$ is far below $n \mu_1$ (the service rate when all servers are busy with 1-server jobs).
This gives rise to periodic fluctuation in service rate with cycles of length $\frac{1}{p_n}$ (the number of $1$-server jobs between each $n$-server job). These fluctuations control the size of $\E[\Delta(Y_d)]$,
and hence the scaled mean queue length.}


Note that the asymptotic growth rates in the balanced regime do not depend on $c$. \textcolor{black}{The intuition behind this is that  the balanced load regime exhibits the same behavior as regime 1, namely,  throughput and scaled mean queue length are dominated by 1-server jobs served while there is an $n$-server job at the head of the queue, and most servers are idle.}
\textcolor{black}{Regime 2 represents the boundary of parameters exhibiting this behavior,} whereas in regime 3 we show that a distinct behavior emerges.

\textcolor{black}{Phrased another way, regime 2 represents the peak of mean queue length, when controlling for utilization: $\E[\Delta(Y_d)]$ rises until regime 2 is reached, and falls thereafter in regime 3. See \cref{extensions} for an exploration of this mean-queue-length peak via numerical evaluation and simulation, and its generalization into further settings.}

In regime 3, the 1-server dominated regime, we primarily focus on the setting where $p_n$ is a polynomial function of $n$, namely $p_n = 1/n^\alpha$ for some constant $\alpha > 1$.
This setting can be seen as equivalent to the generalized Halfin-Whitt regime, which has been explored for instance
in a load-balancing setting \cite{liu_steady_2020,liu_universal_2022}.

\begin{theorem}
    \label{thm:1-server-dominated}
    In regime 3, which is the 1-server dominated regime, namely  $p_n = 1/n^\alpha$ for some constant $\alpha > 1$,
    as $p_n \to 0, n \to \infty$,
    \begin{align*}
        \mu &= \mu_1 \left(n - n^{2 - \alpha} (\ln n + \mu_1/\mu_n - 1 +\textcolor{black}{o(1)}) \right) \mbox{and} \\
        \E[\Delta(Y_d)] &= (1+\textcolor{black}{o(1)}) \frac{1}{2} n^{2 - \alpha} \ln^2 n.
    \end{align*}
\end{theorem}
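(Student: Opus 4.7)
The plan is to analyze the saturated system for this model explicitly and then perform asymptotic analysis tailored to regime 3. Under FCFS the saturated chain has a one-dimensional recurrent structure: state $(n, 0)$ with all $n$ servers running 1-server jobs, draining states $(k, 1)$ for $k \in \{1, \ldots, n-1\}$ with $k$ 1-server jobs in service and an $n$-server blocked at the head of the queue, and state $N$ with the $n$-server in service. Tracing the refill dynamics after each completion yields the embedded DTMC transitions, and solving balance equations gives the closed form $\pi_{(n, 0)} = p_1^n$, $\pi_{(k, 1)} = p_n p_1^k$, $\pi_N = p_n$. The first claim then follows from $\mu = 1/Z$ where
\[
Z = \sum_y \pi_y/\nu_y = \frac{p_1^n}{n\mu_1} + \frac{p_n}{\mu_1}\sum_{k=1}^{n-1}\frac{p_1^k}{k} + \frac{p_n}{\mu_n},
\]
using that in regime 3 we have $n p_n = n^{1-\alpha} \to 0$, so $p_1^n = 1 - np_n + O((np_n)^2)$ and $\sum_{k=1}^{n-1} p_1^k/k = H_{n-1} + O(np_n) = \ln n + O(1)$; inverting $Z$ gives the claimed asymptotic for $\mu$.

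For the $\Delta$ claim I would solve the Poisson equation by exploiting the deterministic drain. Writing $r_j := 1 - \mu/(j\mu_1)$ and $s_k := \sum_{j=1}^{k} r_j$, telescoping gives $\Delta((k, 1)) = s_k + \Delta(N)$, and the self-loop at $(n, 0)$ gives $\Delta((n, 0)) = r_n/p_n + s_{n-1} + \Delta(N)$. The equation at $N$ reduces to an identity by the consistency $\sum_y \pi_y r(y) = 0$ where $r(y) := (\mu_y - \mu)/\nu_y$, so $\Delta(N)$ remains the free additive constant fixed by $\E[\Delta(Y)] = 0$. Rather than computing that constant directly, I would use the identity
\[
\E[\Delta(Y_d)] = \sum_y \pi_y\, r(y)\,(\Delta(y) - \Delta(N)),
\]
which follows by combining the Poisson equation with the normalization and the fact that the continuous-time stationary weight at state $y$ is proportional to $\pi_y/\nu_y$. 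Substituting the explicit $\Delta$ gives
\[
\E[\Delta(Y_d)] = \frac{p_1^n r_n^2}{p_n} + p_1^n r_n s_{n-1} + p_n \sum_{k=1}^{n-1} p_1^k r_k s_k.
\]

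The final step is the asymptotic expansion of these three terms. Using the leading-order $r_n = n^{1-\alpha}(\ln n + O(1))$ from step 1 and $p_1^n \to 1$, the first term is $(1+o(1))\,n^{2-\alpha}\ln^2 n$. Using $s_{n-1} = -n H_{n-1} + (n-1) + (n - \mu/\mu_1)H_{n-1} = -n\ln n + O(n)$, the second term is $-(1+o(1))\,n^{2-\alpha}\ln^2 n$, so the first two cancel at leading order. For the third, expanding $r_k s_k = k - (n-\delta)(1+H_k) + (n-\delta)^2 H_k/k$ with $\delta := n - \mu/\mu_1$, using $p_1^k = 1 + o(1)$ uniformly for $k \leq n-1$, and applying the classical identity $\sum_{k=1}^{m} H_k/k = \tfrac12(H_m^2 + H_m^{(2)}) \sim \tfrac12 \ln^2 m$, the dominant contribution is $p_n(n-\delta)^2 \cdot \tfrac12 \ln^2 n = (1+o(1))\,\tfrac12\, n^{2-\alpha}\ln^2 n$, yielding the stated claim.

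The hard part will be verifying the leading-order cancellation between the first two terms: both are $\pm(1+o(1))\,n^{2-\alpha}\ln^2 n$, so I must show their sum is genuinely $o(n^{2-\alpha}\ln^2 n)$ rather than of the same order. This requires substituting the leading asymptotic of $\mu$ consistently across both terms and tracking next-order corrections to $s_{n-1}$. Once that cancellation is established, the factor $\tfrac12$ in the final answer arises entirely from the harmonic-number identity $\sum_{k=1}^{m} H_k/k \sim \tfrac12 \ln^2 m$.
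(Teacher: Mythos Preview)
Your proposal is correct and follows essentially the same route as the paper: you derive the same closed-form expression for $\E[\Delta(Y_d)]$ (the paper writes it as $\sum_i(\pi^d_i - P_i)\widetilde\Delta(i)$, which is identical to your $\sum_y \pi_y\, r(y)\,(\Delta(y)-\Delta(N))$ once one notes $P_i=(\mu/\mu_i)\pi^d_i$), and both analyses isolate the $\sum p_1^k H_k/k$ contribution as the dominant $\tfrac12 n^{2-\alpha}\ln^2 n$ term. Your use of the exact identity $\sum_{k=1}^m H_k/k = \tfrac12(H_m^2+H_m^{(2)})$ is a clean substitute for the paper's integral approximation $\int_1^n p_1^i(\ln i)/i\,di \sim \tfrac12\ln^2 n$.

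One remark: the ``hard part'' you flag --- the leading-order cancellation between $p_1^n r_n^2/p_n$ and $p_1^n r_n s_{n-1}$ --- is self-inflicted by your choice of splitting. The paper keeps these two together as the single factor $p_1^n\, r_n\,(r_n/p_n + s_{n-1})$ and evaluates the bracket directly: the $\pm n\ln n$ pieces in $r_n/p_n = n\ln n + O(n)$ and $s_{n-1} = -n\ln n + O(n)$ cancel arithmetically inside the bracket, leaving $(\mu_1/\mu_n - \gamma)n + o(n)$, so the whole non-summation term is $O(n^{2-\alpha}\ln n)$ without ever producing two $\Theta(n^{2-\alpha}\ln^2 n)$ quantities that must be shown to cancel. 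You can short-circuit your hard part the same way.
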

\begin{proof}[Proof is given in \cref{sec:r3-spec}, split into \cref{lem:r3-spec-mu} and \cref{thm:r3-spec-delta}]
\end{proof}

In the 1-server dominated regime, we see a different behavior than in the two previous regimes. Here, throughput nearly matches $n \mu_1$, the service rate when all $n$ servers are serving $1$-server jobs. The service that occurs when all $n$ servers are serving $1$-server jobs dominates the behavior of the system.
Correspondingly, we see smaller values of $\E[\Delta(Y_d)]$.
Specifically, in this regime, the expected number of small jobs between a pair of large jobs is still $1/p_n$, but  that period is spent predominantly with a consistent $n \mu_1$ service rate, causing $\E[\Delta(Y_d)]$ to fall from the peak of $\frac{1}{2} \frac{n}{c}$ reached in the balanced load regime, with slower and slower growth rates for larger values of $\alpha>1$.

We also prove more general but less detailed results in the general 1-server dominated regime, in \cref{sec:r3-gen} (see \cref{lem:r3-gen-mu} and \cref{lem:r3-gen-delta}).

\cref{sec:prelim} provides some preliminary analysis and introduces the notation that will be used throughout the paper.
\section{Preliminary Analysis -- Saturated System}
\label{sec:prelim}

In this section, we study the saturated MSJ system.
There are three stationary distributions of importance in this system:
the time-average stationary distribution $P$,
and two distributions based on embedded Markov chains.
Note that the random variable $Y$ is distributed according to $P$.
The first embedded DTMC that we consider is the transition-based DTMC which updates on either arrivals or completions, giving rise to the transition-average stationary distribution $\pi$.
Note that we separate each arrival as its own transition in this embedded DTMC, even if multiple arrivals occur at the same time.
The second embedded DTMC is the completion-based DTMC which only updates on completion,
giving rise to the completion-average stationary distribution $\pi^{d}$.
Note that the random variable $Y_d$ is distributed according to $\pi^{d}$.

 In this section, we will characterize $\pi$, $P$, $\pi^d$, and the throughput $\mu$, for a given $n$.
 We will use this characterization to derive the asymptotic growth rates of $\mu$ and $\E[\Delta(Y_d)]$.

\subsection{Transition-average stationary distribution}
\label{sec:prelim-transition}

The state of the saturated system in the transition-based embedded DTMC
can be captured by a pair $(a, b)$
where $a$ denotes the number of $n$-server jobs present, $a \in \{0, 1\}$,
and  $b$ denotes the number of $1$-server jobs present, $0 \le b \le n$.

There are two kinds of states in this embedded DTMC:
states where jobs are completing, and states where jobs are arriving.
If $a = 0$ and $b < n$, then the state is an arriving state, because room for additional jobs is available,
while if $a = 1$ or $b = n$, the state is a completing state.
If $a = 1$ and $b = 0$, then an $n$-server job is in service, while otherwise only 1-server jobs are in service.

The transition probabilities are as follows:
\begin{itemize}
    \item From an arriving state  $(0, b)$, where $b < n$, the arriving job is a 1-server job with probability $p_1$, resulting in a transition to $(0, b+1)$, or an $n$-server job with probability $p_n$,
    resulting in a transition to $(1, b)$.
    \item From a completing state of the form $(1, b)$, where $1 \le b < n$,
    $b$ 1-server jobs are in service, resulting in a guaranteed transition to $(1, b-1)$.
    \item From the state $(1, 0)$, the $n$-server job is in service, resulting in a transition to $(0, 0)$.
    \item From the state $(0, n)$, $n$ 1-server jobs are in service, resulting in a transition to $(0, n-1)$.
\end{itemize}

Now, we characterize the transition-average stationary distribution $\pi$:

\begin{lemma} \label{pi}
    \label{lem:pi}
    The transition-average stationary distribution $\pi$ can be written as:
    \begin{align*}
    \forall b < n-1, \pi_{0, b} &= \pi_{0, 0} p_1^b, \\
    \pi_{0, n-1} &= \pi_{0, 0} \frac{p_1^{n-1}}{p_n}, \\
    \pi_{0, n} &= \pi_{0, 0} \frac{p_1^n}{p_n}, \\
    \forall b < n, \pi_{1, b} &= \pi_{0, 0} p_1^b.
\end{align*}   
\end{lemma}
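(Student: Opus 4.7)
The plan is to verify that the proposed expressions satisfy the balance equations of the transition-based embedded DTMC, namely $\pi_y = \sum_{y'} \pi_{y'} P(y', y)$ for every state $y$, leaving $\pi_{0,0}$ as the free multiplicative constant to be fixed by normalization. Since the chain is finite and irreducible, verifying the balance equations suffices to identify the stationary distribution up to that constant.

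First, I would read off, from the transition rules in \cref{sec:prelim-transition}, the incoming transitions into each state. The arriving states split into three qualitatively different cases: interior states $(0,b)$ with $1 \le b \le n-2$ receive mass only from $(0,b-1)$ with probability $p_1$; the state $(0,n-1)$ additionally receives a deterministic contribution from the completing state $(0,n)$; and $(0,n)$ is fed only by $(0,n-1)$ with probability $p_1$. The state $(0,0)$ is fed only by $(1,0)$ with probability $1$. Among the remaining completing states, $(1,b)$ with $0 \le b \le n-2$ is fed by $(0,b)$ with probability $p_n$ and by $(1,b+1)$ with probability $1$, while $(1,n-1)$ is fed only by $(0,n-1)$ with probability $p_n$.

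Then I would substitute the proposed expressions into each balance equation and check that it reduces to an algebraic identity. The interior arriving states yield the geometric recursion $\pi_{0,b} = \pi_{0,b-1} p_1$, which matches $\pi_{0,0} p_1^b$. The boundary equations at $(0,n-1)$ and $(0,n)$ couple the two inflated expressions $\pi_{0,n-1} = \pi_{0,0} p_1^{n-1}/p_n$ and $\pi_{0,n} = \pi_{0,0} p_1^n/p_n$, and the identity $1-p_1 = p_n$ clears the $1/p_n$ factor. The completing states $(1,b)$ with $b \le n-2$ reduce to $\pi_{0,b} p_n = \pi_{0,0} p_1^b - \pi_{0,0} p_1^{b+1} = \pi_{0,0} p_1^b p_n$, while at $b=n-1$ the inflated $\pi_{0,n-1}$ exactly cancels the factor $p_n$ to give $\pi_{1,n-1} = \pi_{0,0} p_1^{n-1}$. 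Finally, the $(0,0)$ equation $\pi_{0,0} = \pi_{1,0}$ matches $\pi_{0,0} p_1^0$.

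The only real subtlety is the boundary bookkeeping: the geometric pattern $\pi_{0,0} p_1^b$ is broken at $b=n-1$ and $b=n$ because the completing state $(0,n)$ funnels its mass back into $(0,n-1)$, inflating it by a factor of $1/p_n$. Once those two boundary equations are cross-checked using $1-p_1=p_n$, the result follows, with $\pi_{0,0}$ pinned down by the subsequent normalization step.
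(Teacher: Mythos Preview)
Your proposal is correct and takes essentially the same approach as the paper: write down the balance equations $\pi_y = \sum_{y'} \pi_{y'} P(y',y)$ for the transition-based embedded DTMC (distinguishing the boundary cases at $b=0$, $b=n-1$, and $b=n$) and check that the stated formulas satisfy them, using $p_1+p_n=1$ at the boundary. If anything, you carry out the verification more explicitly than the paper, which simply lists the balance equations and says ``the result follows.''
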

\begin{proof}
Note that the arrivals occur in states $(0, b)$ for $0 \le b < n$, and job service completions occur in states  $(0, n)$ and $(1, b)$ for $0 \le b < n$. Furthermore, in  state $(1, 0)$, the $n$-server job is in service. In all other job service completion states, only $1$-server jobs are in service.  Then we have the following balance equations  
    \begin{align*}
    \pi_{0, 0} &= \pi_{1, 0}, \\
    \forall b,  0 < b < n-1, \pi_{0, b} &= \pi_{0, b-1} p_1, \\
    \pi_{0, n-1} &= \pi_{0, n-2} p_1 + \pi_{0, n}, \\
    \pi_{0, n} &= \pi_{0, n-1} p_1, \\
    \forall b, 0 \le b < n-1, \pi_{1, b} &= \pi_{1, b+1} + \pi_{0, b} p_n, \\
    \pi_{1, n-1} &= \pi_{0, n-1} p_n,
    \end{align*}
and the result follows.
\end{proof}

\subsection{Time-average stationary distribution}

We now use our results from \cref{sec:prelim-transition} to characterize the time-average stationary distribution.

\begin{lemma}
    \label{lem:p}
    The time-average stationary distribution $P$ is given as 
    \begin{align*}
        P_{1, 0} &= C' \frac{1}{\mu_n},\\
        \forall b, 1 < b < n, P_{1, b} &= \frac{1}{b \mu_1} = C' \frac{1}{b \mu_1} p_1^b, \\
        P_{0, n} &= C' \frac{1}{n \mu_1} \frac{p_1^n}{p_n},
    \end{align*}
where $\mu_1$ and $\mu_n$ is the service completion rate of size 1 and size  $n$ jobs, respectively and 
\begin{align*}
 C' &= \left( \frac{1}{\mu_n} + \frac{1}{n\mu_1} \frac{p_1^n}{p_n} + \sum_{b=1}^{n-1} \frac{p_1^b}{b \mu_1}\right)^{-1}.    
\end{align*}
\end{lemma}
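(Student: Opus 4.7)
The plan is to derive $P$ from $\pi$ (which is given in \cref{lem:pi}) via the standard relationship between an embedded transition-based chain and its time-average stationary distribution: for each state $y$, we have $P_y \propto \pi_y \cdot \mathbb{E}[\text{sojourn time in } y]$, normalized so $P$ sums to one. The heart of the argument is correctly identifying the sojourn time in each state of the saturated system.

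First, I would observe that in the saturated system arrivals are instantaneous: whenever a completion leaves spare capacity, jobs keep arriving (with no intervening clock) until either all $n$ servers are occupied or an $n$-server job is blocked at the head of the line. Therefore the arriving states $(0,b)$ with $b < n$ are passed through in zero time and contribute $P_{0,b} = 0$. Only the completing states carry positive time-average mass: $(1,0)$, where the $n$-server job is being served at rate $\mu_n$; $(1,b)$ for $1 \le b < n$, where $b$ one-server jobs are being served at total rate $b\mu_1$; and $(0,n)$, where all $n$ servers are busy with 1-server jobs at total rate $n\mu_1$.

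Next, I would substitute the $\pi$ values from \cref{lem:pi} and the sojourn times $1/\nu_y$ just identified. Writing $Z := \sum_{y \text{ completing}} \pi_y / \nu_y$, the formula $P_y = (\pi_y/\nu_y)/Z$ yields
\begin{align*}
P_{1,0} &= \frac{\pi_{0,0}/\mu_n}{Z}, \qquad
P_{1,b} = \frac{\pi_{0,0}\, p_1^b/(b\mu_1)}{Z} \ \text{ for } 1 \le b < n, \qquad
P_{0,n} = \frac{\pi_{0,0}\, p_1^n/(p_n \cdot n \mu_1)}{Z}.
\end{align*}
Setting $C' := \pi_{0,0}/Z$, these match the stated expressions, and the normalization $\sum P_y = 1$ forces $C'$ to equal the reciprocal of $1/\mu_n + \sum_{b=1}^{n-1} p_1^b/(b\mu_1) + p_1^n/(p_n n \mu_1)$, which is the claimed formula.

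There is no real obstacle here; the only delicate point is the justification that arriving states contribute no time-average probability. I would handle this either by the informal ``instantaneous arrival'' argument above, or more formally by viewing the saturated system as a semi-Markov process in which arriving states have sojourn time identically zero, so that the time-average measure $P_y \propto \pi_y \mathbb{E}[\text{sojourn}]$ gives zero weight to those states and the normalization $Z$ ranges only over completing states. Everything else is direct substitution of \cref{lem:pi} into this identity.
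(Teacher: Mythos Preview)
Your proposal is correct and follows essentially the same route as the paper: both arguments use the semi-Markov relationship $P_y \propto \pi_y/\nu_y$, note that arriving states have zero sojourn time so only completing states carry mass, substitute the $\pi$ values from \cref{lem:pi}, and normalize to obtain $C'$. Your definition $C' = \pi_{0,0}/Z$ coincides with the paper's $C' = C\pi_{0,0}$ (where $C = 1/Z$), and you are in fact slightly more careful in spelling out the semi-Markov justification for discarding the arriving states.
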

\begin{proof}
    The time-average stationary distribution is proportional to the transition-average stationary distribution, rescaled by the time spent in each state. The only states in which time is spent are completion states, and the average time per visit is the inverse of the completion rate. Thus, we define a proportionality constant $C$, such that $P_y = C \frac{\pi_y}{\mu_y}$, where recall that $\mu_y$ is the completion rate in state $y$.
    We also define $C' = C \pi_{0, 0}$.
    Now, applying Lemma \ref{pi}, we have 
   \begin{align*}
        P_{1, 0} &= C\pi_{1, 0} \frac{1}{\mu_n} = C' \frac{1}{\mu_n},\\
        \forall b, 1 < b < n, P_{1, b} &= C\pi_{1, b} \frac{1}{b \mu_1} = C' \frac{1}{b \mu_1} p_1^b, \\
        P_{0, n} &= C\pi_{0, n}\frac{1}{n \mu_1} = C' \frac{1}{n \mu_1} \frac{p_1^n}{p_n},
    \end{align*} 
and $C'$ can be computed by using the fact that $P$ is a distribution:
    \begin{align*}
        1 &= \sum_y P_y = C' \left( \frac{1}{\mu_n} + \frac{1}{n\mu_1} \frac{p_1^n}{p_n} + \sum_{b=1}^{n-1} \frac{p_1^b}{b \mu_1}\right) \qedhere
    \end{align*}
\end{proof}

\begin{lemma}
    \label{lem:explicit-mu}
    The explicit expression for $\mu$ is given as 
    \begin{align*}
        \mu &= \frac{1}{p_n} \left( \frac{1}{\mu_n} + \frac{1}{n\mu_1} \frac{p_1^n}{p_n} + \sum_{b=1}^{n-1} \frac{p_1^b}{b \mu_1}\right)^{-1}.
    \end{align*}
\end{lemma}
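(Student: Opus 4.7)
The plan is to compute the throughput as the expected completion rate under the time-average distribution and then simplify. In any positive-recurrent continuous-time Markov chain, the long-run rate of completions equals $\mu = \sum_y P_y \mu_y$, where $\mu_y$ is the instantaneous completion rate in state $y$. So the strategy is to substitute the closed form for $P$ from \cref{lem:p}, identify the completion rate in each state, and collapse the resulting sum.

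First, I would read off the completion rate in each state that receives positive probability under $P$: in $(1,0)$ the $n$-server job is in service so $\mu_{1,0} = \mu_n$; in $(1,b)$ for $1 \le b < n$ the $b$ one-server jobs are in service so $\mu_{1,b} = b\mu_1$; and in $(0,n)$ all $n$ servers are busy with one-server jobs so $\mu_{0,n} = n\mu_1$. Plugging in the expressions from \cref{lem:p}, each $1/\mu_y$ factor in $P_y$ cancels against $\mu_y$, and the contributions become
\begin{align*}
    P_{1,0}\mu_n = C', \qquad P_{1,b}(b\mu_1) = C'\, p_1^b \ \text{ for } 1 \le b < n, \qquad P_{0,n}(n\mu_1) = C'\,\frac{p_1^n}{p_n}.
\end{align*}
Summing gives $\mu = C'\bigl(1 + \sum_{b=1}^{n-1} p_1^b + p_1^n/p_n\bigr)$.

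Next, I would apply the geometric sum formula: $\sum_{b=0}^{n-1} p_1^b = (1-p_1^n)/(1-p_1) = (1-p_1^n)/p_n$, since $1-p_1 = p_n$. Therefore
\begin{align*}
    \mu = C'\left(\frac{1-p_1^n}{p_n} + \frac{p_1^n}{p_n}\right) = \frac{C'}{p_n},
\end{align*}
which, after substituting the explicit form of $C'$ from \cref{lem:p}, yields the stated expression.

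There is no real obstacle here: the result is a one-line consequence of $\mu = \sum_y P_y \mu_y$ once the cancellation $P_y \mu_y \propto \pi_y$ is noticed and the geometric series is collapsed using $1-p_1 = p_n$. The only thing to be careful about is that the sum over $P$ in \cref{lem:p} includes the $b=1$ term (the statement's range $1 < b < n$ is a typo; the formula $P_{1,b} = C' p_1^b/(b\mu_1)$ also holds at $b=1$, as is clear from the derivation $P_y = C\pi_y/\mu_y$), so that all completion states are accounted for.
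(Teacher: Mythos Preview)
Your proof is correct and follows essentially the same route as the paper: compute $\mu = \sum_y P_y \mu_y$, observe that each $P_y \mu_y$ collapses to $C'$ times a power of $p_1$ (or $C' p_1^n/p_n$), and telescope the geometric sum using $1-p_1 = p_n$ to obtain $\mu = C'/p_n$. Your observation about the $b=1$ endpoint is also correct.
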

\begin{proof}
We apply \cref{lem:p}. Note that we have
    \begin{align*}
        \mu &= P_{1, 0} \mu_n + P_{0, n} n \mu_1 + \sum_{b=1}^{n-1} P_{1, b} b \mu_1 \\
        &= C' (1 + \frac{p_1^n}{p_n} + \sum_{b=1}^{n-1} p_1^b) 
         = C' (1 + \frac{p_1^n}{p_n} + \frac{p_1-p_1^n}{1-p_1}) 
         = C' \frac{1}{p_n}.
    \end{align*}
Plugging in the closed form expression of $C'$ gives the desired result.
\end{proof}

    Note that semantically, $C'$ is the rate of completion of $n$-server jobs, and $\frac{1}{C'}$ is the expected time per $n$-server job completion. We have 
    \begin{align*}
        \frac{1}{C'} = \frac{1}{\mu_n} + \frac{1}{n\mu_1} \frac{p_1^n}{p_n} + \sum_{b=1}^{n-1} \frac{p_1^b}{b \mu_1}.
    \end{align*}

    Intuitively, one can see that there are three phases to the completion of an $n$-server job, namely, time spent in state $(1, 0)$, time spent in state $(0, n)$, and  time spent in the states $(1, b)$ for $0 < b < n$. The expected lengths of these phases are the first, second, and third  terms, respectively, in the above expression.

\subsection{Completion-average stationary distribution}

Now we characterize the completion-average distribution $\pi^d$.
Note that there are only $n+1$ possible states in which a completion can occur, with 0 to $n$ 1-server jobs in service. 

\begin{lemma}
    \label{lem:pi_d}
    The completion-average stationary distribution $\pi^d$ is given as
    \begin{align*}
        \forall b < n, \, \pi^d_{1, b} &= p_1^b p_n, \\
        \pi^d_{0, n} &= p_1^n.
    \end{align*}
\end{lemma}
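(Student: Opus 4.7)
The plan is to derive $\pi^d$ from $\pi$ via restriction and renormalization. The embedded DTMC that $\pi^d$ governs updates only at completion epochs, so $\pi^d$ is obtained from the finer transition DTMC (governed by $\pi$) by conditioning on the event that the transition is a completion, i.e.\ the state lies in the completion set $\{(1,b): 0 \le b < n\} \cup \{(0,n)\}$. Concretely, for any completion state $y$,
\begin{equation*}
    \pi^d_y \;=\; \frac{\pi_y}{\sum_{y' \text{ a completion state}} \pi_{y'}}.
\end{equation*}

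First I would plug in the expressions from \cref{lem:pi}. The numerators are $\pi_{1,b} = \pi_{0,0}\, p_1^b$ for $b < n$ and $\pi_{0,n} = \pi_{0,0}\, p_1^n/p_n$. For the denominator I would compute
\begin{equation*}
    \sum_{b=0}^{n-1} \pi_{1,b} + \pi_{0,n} \;=\; \pi_{0,0}\!\left(\sum_{b=0}^{n-1} p_1^b + \frac{p_1^n}{p_n}\right) \;=\; \pi_{0,0}\!\left(\frac{1-p_1^n}{p_n} + \frac{p_1^n}{p_n}\right) \;=\; \frac{\pi_{0,0}}{p_n},
\end{equation*}
using the geometric series identity $\sum_{b=0}^{n-1} p_1^b = (1-p_1^n)/(1-p_1) = (1-p_1^n)/p_n$. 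Dividing the numerators by this common denominator cancels $\pi_{0,0}$ and yields exactly $\pi^d_{1,b} = p_1^b p_n$ for $b < n$ and $\pi^d_{0,n} = p_1^n$.

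The only ``step'' that requires any care is justifying the restrict-and-renormalize passage from the transition-based DTMC to the completion-based DTMC. This is standard: the completion DTMC is the DTMC induced on the completion states by observing the transition DTMC only at completion epochs, and a small general argument (or directly writing the balance equations of $\pi^d$ from the transition probabilities of \cref{lem:pi}) confirms that the stationary distribution of the induced chain is proportional to $\pi$ restricted to completion states. Given that justification, everything else is a one-line geometric-series calculation, so I do not expect a substantive obstacle in this lemma.
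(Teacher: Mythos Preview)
Your proposal is correct and follows essentially the same approach as the paper: restrict the transition-average stationary distribution $\pi$ to the completion states and renormalize, obtaining the normalizing constant $p_n/\pi_{0,0}$ via the geometric-series identity. The paper's proof is slightly terser but identical in substance.
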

\begin{proof}
    Note that the distribution $\pi^d$ is equal to the transition-average stationary distribution $\pi$, conditioned on the saturated system being in a completion state. In particular, there exists a normalizing constant $C''$ such that for any completing state $s$,
    \begin{align*}
        \pi^d_s = C'' \pi_s.
    \end{align*}
    Using the expression for $\pi$ given in \cref{lem:pi}, we can solve for the proportionality constant $C''$ as simply $\frac{p_n}{\pi_{0, 0}}$, giving rise to the stated expression for $\pi^d$.
\end{proof}

\subsection{Relative completions}

We now characterize the relative completions in the saturated system, which is key to the characterization of the scaling behavior of mean queue length in the original open system (see \cref{sec:def-saturated} for details).

\begin{lemma}
    \label{lem:explicit-delta}
    The explicit formula for $\E[\Delta(Y_d)]$ is
    \begin{align*}
        \E[\Delta(Y_d)] &=  p_1^n (1 - \frac{\mu}{n \mu_1}) ( n-1 + \frac{1}{p_n} - \frac{\mu}{\mu_1} H_{n-1} - \frac{\mu}{n p_n \mu_1}) \nonumber\\
   & +  \sum_{i=1}^{n-1} p_1^i p_n(1 - \frac{\mu}{i \mu_1}) (i - \frac{\mu}{\mu_1} H_i).
    \end{align*}
    \textcolor{black}{where $H_i=\sum_{k=1}^i\frac{1}{k}$ is the $i^{\mbox{th}}$ harmonic number.}
\end{lemma}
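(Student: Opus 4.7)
The plan is to explicitly solve the Poisson equation~\eqref{eq:poisson} for $\Delta$ on every state of the saturated system, express the answer in terms of the unknown additive constant $\Delta(1,0)$, and then pin that constant down via the normalization $\E[\Delta(Y)] = 0$. The structural simplification that makes this tractable is that the completion-state transitions form a deterministic telescoping chain, while the arriving states $(0,b)$ with $b < n$ are instantaneous and therefore carry zero reward per transition.

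First I would write out \eqref{eq:poisson} for each kind of state. A completion state $(1,b)$ with $1 \le b \le n-1$ transitions deterministically to $(1,b-1)$, giving
\begin{align*}
    \Delta(1,b) = 1 - \frac{\mu}{b\mu_1} + \Delta(1,b-1),
\end{align*}
which telescopes to $\Delta(1,b) = \Delta(1,0) + b - \frac{\mu}{\mu_1} H_b$. The equation at $(0,n)$ yields $\Delta(0,n) = 1 - \frac{\mu}{n\mu_1} + \Delta(0,n-1)$. For an arriving state $(0,b)$ with $b \le n-1$ no time is spent, so \eqref{eq:poisson} collapses to $\Delta(0,b) = p_1 \Delta(0,b+1) + p_n \Delta(1,b)$. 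Combining the arrival equation at $(0,n-1)$ with the completion equation at $(0,n)$ eliminates $\Delta(0,n-1)$ and produces
\begin{align*}
    \Delta(0,n) = \Delta(1,n-1) + \frac{1}{p_n}\left(1 - \frac{\mu}{n\mu_1}\right).
\end{align*}
Setting $\delta(y) := \Delta(y) - \Delta(1,0)$, I read off $\delta(1,b) = b - \frac{\mu}{\mu_1} H_b$ and $\delta(0,n) = n - 1 + \frac{1}{p_n} - \frac{\mu}{\mu_1} H_{n-1} - \frac{\mu}{np_n\mu_1}$, which is exactly the bracket multiplying $p_1^n(1-\mu/(n\mu_1))$ in the claim.

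The crucial remaining ingredient is the level-crossing identity $P_y \mu_y = \pi^d_y \mu$ on every completion state $y$. I would verify this by comparing \cref{lem:p,lem:pi_d} together with \cref{lem:explicit-mu}, which force $C' = p_n \mu$. Since $P$ is supported only on completion states, the normalization $\E[\Delta(Y)] = 0$ rewrites as $\sum_y \pi^d_y (\mu/\mu_y)(\Delta(1,0) + \delta(y)) = 0$, and using $\sum_y \pi^d_y \mu/\mu_y = \sum_y P_y = 1$ this solves for $\Delta(1,0) = -\sum_y \pi^d_y (\mu/\mu_y)\delta(y)$. Substituting into $\E[\Delta(Y_d)] = \Delta(1,0) + \sum_y \pi^d_y \delta(y)$ yields the clean identity
\begin{align*}
    \E[\Delta(Y_d)] = \sum_y \pi^d_y \left(1 - \frac{\mu}{\mu_y}\right) \delta(y),
\end{align*}
in which the $y = (1,0)$ term vanishes because $\delta(1,0) = 0$. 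Plugging in the explicit $\pi^d_y$, $\mu_y$, and $\delta(y)$ on the remaining completion states $(1,i)$ for $1 \le i \le n-1$ and $(0,n)$ then gives the stated formula term for term.

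The main obstacle I anticipate is bookkeeping rather than any conceptual difficulty: correctly handling the instantaneous arriving states in~\eqref{eq:poisson} when propagating the recursion up to $\Delta(0,n)$, and tracking signs when inverting the arrival chain. The real conceptual hinge is spotting the identity $P_y \mu_y = \pi^d_y \mu$, which is what makes the unknown $\Delta(1,0)$ cancel and the factor $(1 - \mu/\mu_y)$ emerge naturally; once that is in hand, the proof is a direct substitution.
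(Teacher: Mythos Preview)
Your proposal is correct and follows essentially the same route as the paper: solve the Poisson equation by telescoping from the base state $(1,0)$, use the identity $P_y = (\mu/\mu_y)\,\pi^d_y$ to rewrite $\E[\Delta(Y_d)]$ as $\sum_y \pi^d_y(1-\mu/\mu_y)\,\delta(y)$, and substitute. The only cosmetic difference is that the paper collapses immediately to the one-dimensional completion-state chain rather than carrying the instantaneous arriving states $(0,b)$ through the recursion, but the computation of $\delta(0,n)$ and the final assembly are identical.
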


\begin{proof}
For notational convenience, we will now switch to a one-dimensional state representation.
More specifically, define the states of the time-average system and the completion-average embedded DTMC to be the integers
$\{0, 1, 2, \ldots, n\}$,
where the state is the number of 1-server jobs in the system.
Note that we ignore the arrival states $(0, 0), (0, 1), \ldots, (0, n-1)$,
as the system does not spend time in these states and completions cannot occur in them.

To characterize $\E[\Delta(Y_d)]$, we first characterize $\Delta$ up to an offset.
Let $\dwt(i) := \Delta(i) - \Delta(0)$, for any state $i$.
We will use the following identity:
\begin{align}
    \E[\dwt(Y)] = \E[\Delta(Y)] - \E[\Delta(0)] = - \E[\Delta(0)].
\end{align}

The second equality holds because $\E[\Delta(Y)]$ is defined to be 0, as discussed in \cref{sec:def-saturated}.
Thus, we can characterize $\E[\Delta(Y_d)]$ as follows:
\begin{align}
    \nonumber
    \E[\dwt(Y_d)] &= \E[\Delta(Y_d)] - \E[\Delta(0)] = \E[\Delta(Y_d)] + \E[\dwt(Y)] \\
    \label{eq:delta-dwt}
    \implies \E[\Delta(Y_d)] &= \E[\dwt(Y_d)] - \E[\dwt(Y)].
\end{align}

Thus, it suffices to characterize $\dwt$ and use \eqref{eq:delta-dwt} to characterize $\E[\Delta(Y_d)]$.

We can rewrite \eqref{eq:delta-dwt} more explicitly as 
\begin{align} \label{y_d}
    \E[\Delta(Y_d)] = \sum_{i=0}^n (\pi_i^d - P_i) \dwt(i),
\end{align}
where note that $P_i = P(Y = i),$ and $\pi_i^d = P(Y_d = i)$.
Recall from \cref{lem:pi_d} that the distribution of $\pi^d$ is:
\begin{align*}
    \forall i < n, \pi_i^d &= p_1^i p_n, \\
    \pi_n^d &= p_1^n.
\end{align*}

Rewriting the formula for $P_i$ from \cref{lem:p}, we have
\begin{align*}
    P_0 &= \mu \frac{p_n}{\mu_n}, \\
    \forall 1 \le i < n, P_i &= \mu \frac{p_1^i p_n}{i \mu_1}, \\
    P_n &= \mu \frac{p_1^n}{n \mu_1}.
\end{align*}

Note that $\pi_i^d = \frac{\mu}{\mu_i} P_i$, where $\mu_i$ is the instantaneous completion rate in state $i$. Applying (\ref{y_d}),
\begin{align}
    \E[\Delta(Y_d)] &= (p_1^n - \mu \frac{p_1^n}{n \mu_1}) \dwt(n) + \sum_{i=1}^{n-1} (p_1^i p_n - \mu \frac{p_1^i p_n}{i \mu_1}) \dwt(i) \nonumber\\
    &= p_1^n (1 - \frac{\mu}{n \mu_1}) \dwt(n) + \sum_{i=1}^{n-1} p_1^i p_n(1 - \frac{\mu}{i \mu_1}) \dwt(i). \label{delta_d}
\end{align}

Next, we calculate $\dwt(i)$. Using  the Poisson equation \eqref{eq:poisson}, which characterizes $\Delta(i)$ up to an absolute constant, we obtain
\begin{align*}
    \forall 1 \le i < n, \Delta(i) &= 1 - \frac{\mu}{i \mu_1} + \Delta(i-1) \\
    \implies \forall 1 \le i < n, \Delta(i) - \Delta(i-1) &= 1 - \frac{\mu}{i \mu_1} \\
    \implies \forall 1 \le i < n, \dwt(i) &= i - \frac{\mu}{\mu_1} H_i.
\end{align*}
Note that state $n$ is a special case:
\begin{align*}    
    \Delta(n) &= 1 - \frac{\mu}{n \mu_1} + p_1 \Delta(n) + p_n \Delta(n-1) \\
    \implies p_n \Delta(n) &= 1 - \frac{\mu}{n \mu_1} + p_n \Delta(n-1) \\
    \implies \Delta(n) - \Delta(n-1) &= \frac{1}{p_n} - \frac{\mu}{n p_n \mu_1} \\
    \implies \dwt(n) &= n-1 + \frac{1}{p_n} - \frac{\mu}{\mu_1} H_{n-1} - \frac{\mu}{n p_n \mu_1}.
\end{align*}

Plugging the expressions of $\dwt(i)$ for $i=1,\ldots,n$ into (\ref{delta_d}), we have 

\begin{align*}
    \E[\Delta(Y_d)] &=  p_1^n (1 - \frac{\mu}{n \mu_1}) ( n-1 + \frac{1}{p_n} - \frac{\mu}{\mu_1} H_{n-1} - \frac{\mu}{n p_n \mu_1}) \\
   & +  \sum_{i=1}^{n-1} p_1^i p_n(1 - \frac{\mu}{i \mu_1}) (i - \frac{\mu}{\mu_1} H_i).
   \qedhere
\end{align*}
\end{proof}

In the next three sections, we will characterize the asymptotic growth rate of the $\mu$ expression given in \cref{lem:explicit-mu} and the $\E[\Delta(Y_d)]$ expression given in \cref{lem:explicit-delta} for the three regimes discussed in \cref{sec:main}.

\section{Regime 1: $p_n = \omega(1/n)$: $n$-server dominated regime.}
\label{sec:regime-1}

This section focuses on  regime 1, namely the $n$-server dominated regime, where the load of $n$-server jobs asymptotically dominates the load of 1-server jobs.
To prove \cref{thm:n-server-dominated}, we start with \cref{lem:r1-mu} which characterizes the throughput $\mu$ in this regime.

\begin{lemma}
    \label{lem:r1-mu}
    In regime 1 which is the $n$-server dominated regime,
    namely $p_n = \omega(1/n)$, as $p_n \to 0, n \to \infty$,
    \begin{align*}
       \mu = (1+o(1))\frac{\mu_1}{p_n \ln 1/p_n}.
    \end{align*}
\end{lemma}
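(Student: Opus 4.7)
The plan is to start from the explicit formula for $\mu$ given in \cref{lem:explicit-mu}, namely
\[
\mu = \frac{1}{p_n S}, \qquad S := \frac{1}{\mu_n} + \frac{p_1^n}{n p_n \mu_1} + \sum_{b=1}^{n-1} \frac{p_1^b}{b \mu_1},
\]
and show that $S = (1+o(1)) \frac{\ln(1/p_n)}{\mu_1}$. Substituting this into $\mu = 1/(p_n S)$ immediately yields the claimed asymptotic.

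To analyze $S$, I would handle each of the three terms separately. The term $1/\mu_n$ is a fixed constant, so it is $O(1)$ and negligible once I show the sum diverges like $\ln(1/p_n)$. The middle term $\frac{p_1^n}{n p_n \mu_1}$ can be bounded by $\frac{1}{n p_n \mu_1}$ using $p_1 < 1$, and since we are in regime~1 with $n p_n = \omega(1)$, this term is $o(1)$. The main contribution is the sum $\sum_{b=1}^{n-1} \frac{p_1^b}{b \mu_1}$, which I would recognize as a truncation of the Taylor series
\[
-\ln(1-p_1) = -\ln(p_n) = \ln(1/p_n) = \sum_{b=1}^{\infty} \frac{p_1^b}{b}.
\]
The truncation error is $\sum_{b=n}^{\infty} \frac{p_1^b}{b} \le \frac{1}{n} \sum_{b=n}^\infty p_1^b = \frac{p_1^n}{n p_n} \le \frac{1}{n p_n}$, which is $o(1)$ in regime~1.

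Putting the three bounds together, $S = (1+o(1)) \frac{\ln(1/p_n)}{\mu_1} + O(1) + o(1)$. Since $\ln(1/p_n) \to \infty$ as $p_n \to 0$, the $O(1)$ and $o(1)$ pieces are absorbed into the $(1+o(1))$ factor, giving $S = (1+o(1)) \frac{\ln(1/p_n)}{\mu_1}$. Substituting back yields $\mu = (1+o(1)) \frac{\mu_1}{p_n \ln(1/p_n)}$ as claimed.

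The main obstacle is really just bookkeeping: verifying that the tail of the harmonic-like series and the constant-order terms are truly absorbed by $(1+o(1))$ in the regime $p_n = \omega(1/n)$. The condition $n p_n \to \infty$ is used precisely in two places: to bound the middle term of $S$ and to bound the tail of the Taylor series. Both bounds share the same quantity $\frac{1}{n p_n}$, so the regime assumption plays a clean, unified role.
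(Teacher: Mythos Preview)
Your proof is correct and in fact cleaner than the paper's own argument. Both start from \cref{lem:explicit-mu} and show that the first two terms of $S$ are $O(1)$ and $o(1)$ respectively, using $np_n \to \infty$. The difference lies in how the main sum $\sum_{b=1}^{n-1} p_1^b/b$ is handled. The paper extends the sum to infinity (via a block argument bounding the tail by a geometric-type series), then replaces the infinite sum by an integral, evaluates that integral as an incomplete gamma function, and finally appeals to the limit $\lim_{p_n\to 0}\bigl(\mathrm{Gamma}[0,-\ln(1-p_n)] + \ln p_n\bigr) = \gamma$ to obtain $\ln(1/p_n)+O(1)$. You instead invoke the exact Taylor identity $\sum_{b=1}^\infty p_1^b/b = -\ln(1-p_1) = \ln(1/p_n)$ and bound the tail directly by $\frac{p_1^n}{np_n}\le\frac{1}{np_n}=o(1)$. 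Your route avoids the integral approximation and the special-function asymptotics entirely, and makes the role of the hypothesis $np_n\to\infty$ more transparent (it is used exactly twice, for the same quantity $1/(np_n)$). The paper's approach, while longer here, does set up machinery (sum-to-integral comparisons, incomplete-gamma asymptotics) that is reused in the later $\E[\Delta(Y_d)]$ computations, so there is some economy across the full section.
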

\begin{proof}
    From \cref{lem:explicit-mu}, we have the following expression for $\mu$:
    \begin{align*}
        \mu &= \frac{1}{p_n} \left( \frac{1}{\mu_n} + \frac{1}{n\mu_1} \frac{p_1^n}{p_n} + \sum_{b=1}^{n-1} \frac{p_1^b}{b \mu_1}\right)^{-1}.
    \end{align*}
    Note that as $n \to \infty$, $p_1^n = (1-p_n)^n$ can only converge to a nonzero constant if $p_n = \theta(1/n)$. In this setting, $p_n = \omega(1/n)$, so $p_1^n = o(1)$. Note that $\frac{p_1^n}{n p_n} = \frac{o(1)}{\omega(1)} = o(1)$. Then
\begin{align*}
    \mu &= \frac{1}{p_n} \left( \frac{1}{\mu_n} + \frac{1}{n\mu_1} \frac{p_1^n}{p_n} + \sum_{b=1}^{n-1} \frac{p_1^b}{b \mu_1}\right)^{-1} \\
    &= \frac{1}{p_n} \left( \frac{1}{\mu_n} + o(1) + \sum_{b=1}^{n-1} \frac{p_1^b}{b \mu_1}\right)^{-1} + o(1) \\
    &= (1+o(1))\frac{1}{p_n} \left( \frac{1}{\mu_n} + \sum_{b=1}^{n-1} \frac{p_1^b}{b \mu_1}\right)^{-1}.
\end{align*}

Here we use the fact that $\frac{1}{\mu_n}$ is a constant, so the $o(1)$ term has a negligible multiplicative effect. Moreover, we will show that $\sum_{b=1}^{n-1} \frac{p_1^b}{b \mu_1} = \theta(\ln 1/p_n)$, so $\frac{1}{\mu_n}$ is also multiplicatively negligible.
\begin{align}
    \label{eq:r1-mu-mid}
    (1+o(1))\frac{1}{p_n} \left( \frac{1}{\mu_n} + \sum_{b=1}^{n-1} \frac{p_1^b}{b \mu_1}\right)^{-1}
    =(1+o(1))\frac{\mu_1}{p_n} \left(\sum_{b=1}^{n-1} \frac{p_1^b}{b }\right)^{-1}.
\end{align}
We next focus on  $\sum_{b=1}^{n-1} \frac{p_1^b}{b}$. First, we change from a finite summation to an infinite summation, and second we change to an integral.
\begin{align}
    \label{eq:r1-mu-next}
    \sum_{b=1}^{n-1} \frac{p_1^b}{b}  = (1+o(1)) \sum_{b=1}^{\infty} \frac{p_1^b}{b} =(1+o(1))\int_{b=1}^\infty \frac{p_1^b}{b} db.
\end{align}

To justify the change to an infinite summation,
note that $p_n = \omega(1/n)$ and hence, $1/p_n = o(n)$.
At index $b = \lfloor 1/p_n \rfloor$, $p_1^b \sim 1/e$.
In particular, all indices from $\lfloor 1/p_n \rfloor$ to $\lfloor 2/p_n \rfloor$ sum to at most $1/(p_n e)$:
\begin{align*}
    \sum_{b=\lfloor 1/p_n \rfloor}^{\lfloor 2/p_n \rfloor} \frac{p_1^b}{b} \le \frac{1}{p_n} \frac{p_n}{e} = \frac{1}{e}.
\end{align*}
By a similar argument, indices from $\lfloor 2/p_n  \rfloor$ to $\lfloor 3/p_n  \rfloor$ sum to at most $\frac{1}{2 e^2}$, and in general for any constant $k$,
\begin{align*}
    \sum_{b=\lfloor k/p_n \rfloor}^{\lfloor (k+1)/p_n \rfloor} \frac{p_1^b}{b} \le \frac{1}{p_n} \frac{k p_n}{e^k} = \frac{k}{e^k}.
\end{align*}
Summing this infinite series, we find that
\begin{align*}
    \sum_{b=\lfloor 1/p_n \rfloor}^{\infty} \frac{p_1^b}{b} \le \sum_{k=1}^\infty \frac{k}{e^k} = \frac{e}{(e-1)^2}.
\end{align*}
Thus, the sum over indices $b \ge \lfloor 1/p_n \rfloor$ is $O(1)$.
As a result, the sum over indices $b \ge n$ must also be $O(1)$, as $n$ is much larger than $1/p_n$.
This $O(1)$ difference is negligible compared to the $\theta(\ln 1/p_n)$ growth rate of the integral which we will demonstrate.

To justify the transformation from the summation to the integral, note that the difference
can be rewritten as follows:
\begin{align*}
    \sum_{b=1}^{\infty} \frac{p_1^b}{b} - \int_{b=1}^\infty \frac{p_1^b}{b} db
    = \sum_{b=1}^\infty \left(\frac{p_1^b}{b} - \int_{x=b}^{b+1} \frac{p_1^x}{x} dx\right).
\end{align*}

Next, note that $\frac{p_1^b}{b}$ is decreasing as a function of $b$, so we can upper bound the difference:
\begin{align*}
    \sum_{b=1}^\infty \left(\frac{p_1^b}{b} - \int_{x=b}^{b+1} \frac{p_1^x}{x} dx\right)
    &\le \sum_{b=1}^\infty \left(\frac{p_1^b}{b} - \frac{p_1^{b+1}}{b+1}\right).
\end{align*}

Now, we can upper bound the term inside the summation:
\begin{align*}
    \frac{p_1^b}{b} - \frac{p_1^{b+1}}{b+1}
    &\le \frac{p_1^b}{b} - \frac{p_1^{b+1}}{b}
    \le p_1^b - p_1^{b+1} \\
    \implies \sum_{b=1}^\infty \left(\frac{p_1^b}{b} - \frac{p_1^{b+1}}{b+1}\right)
    &\le \sum_{b=1}^\infty \left(p_1^b - p_1^{b+1}\right) = p_1 < 1.
\end{align*}

Thus, this difference is negligible compared to the $\theta(\ln 1/p_n)$ growth rate of the integral which we will demonstrate.

We now continue by simplifying the integral from \eqref{eq:r1-mu-next},
\begin{align*}
    \int_{b=1}^\infty \frac{p_1^b}{b} db =Gamma[0, - \ln(p_1)] = Gamma[0, - \ln(1-p_n)],
\end{align*}
where Gamma is the incomplete gamma function.
Note that
\begin{align*}
    \lim_{p_n \to 0} Gamma[0, - \ln(1-p_n)] + \ln p_n = \gamma,
\end{align*}
where $\gamma \approx 0.57726$ is the Euler-Mascheroni constant.
Thus,
\begin{align}
    \label{eq:r1-incomplete-gamma}
    \int_{b=1}^\infty \frac{p_1^b}{b} db = - \ln p_n + O(1) = \ln \frac{1}{p_n} + O(1).
\end{align}
We have demonstrated the desired growth rate for the summation, and plugging that growth rate into \eqref{eq:r1-mu-mid} completes the proof.
\end{proof}
We next characterize the limiting scaled response time $\E[\Delta(Y_d)]$, given in \eqref{eq:delta-limiting-scaled}, in this regime, to complete the proof of \cref{thm:n-server-dominated}. 
\begin{theorem}
    \label{thm:r1-delta}
  In regime 1, which is the $n$-server dominated regime, namely $p_n = \omega(1/n)$, as $p_n \to 0, n \to \infty$, then   
  \begin{align*}
    \E[\Delta(Y_d)] = (1+o(1))\frac{1}{2 p_n}.
\end{align*}
\end{theorem}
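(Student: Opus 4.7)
The plan is to substitute the asymptotic $\mu = (1 + o(1))\mu_1 / (p_n\ln(1/p_n))$ from \cref{lem:r1-mu} into the explicit formula of \cref{lem:explicit-delta} and extract the leading order. Writing $\tilde\mu := \mu/\mu_1$, I expand the inner product as
\[
(1 - \tilde\mu/i)(i - \tilde\mu H_i) = i - \tilde\mu(H_i + 1) + \tilde\mu^2 H_i/i,
\]
which splits the main sum in \cref{lem:explicit-delta} into four pieces $S_1 + S_2 + S_3 + S_4$, where $S_1 = p_n \sum_{i=1}^{n-1} i\, p_1^i$, $S_2 = -\tilde\mu p_n \sum_{i=1}^{n-1} p_1^i H_i$, $S_3 = -\tilde\mu p_n \sum_{i=1}^{n-1} p_1^i$, and $S_4 = \tilde\mu^2 p_n \sum_{i=1}^{n-1} p_1^i H_i/i$, plus a separate boundary term at $i=n$.

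The boundary term is handled first. In regime 1, $np_n \ln(1/p_n) = \omega(\ln n)$, so both $\mu H_{n-1}/\mu_1$ and $1/p_n$ are $o(n)$, making the third factor of the $i=n$ term bounded by $O(n)$. Combined with $p_1^n \le e^{-np_n}$, the boundary contribution is at most $O(n e^{-np_n})$, and since $(np_n) e^{-np_n} \to 0$ as $np_n \to \infty$, this is $o(1/p_n)$.

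Next I would evaluate the four interior sums. For $S_2$, the exact identity $\sum_{i \ge 1} p_1^i H_i = (1/p_n)\sum_{k \ge 1} p_1^k/k = \ln(1/p_n)/p_n$ (obtained by swapping the order of summation) together with the relation $\tilde\mu \ln(1/p_n) = 1 + o(1)$ from \cref{lem:r1-mu} yields $S_2 = -(1+o(1))/p_n$, modulo an $o(1/p_n)$ tail beyond $i=n-1$ (controlled exactly as in the boundary estimate above). Analogously, $S_1 = p_n \cdot p_1/p_n^2 - (\text{tail}) = 1/p_n + o(1/p_n)$ and $S_3 = -\tilde\mu(1-p_n) + o(1/p_n) = o(1/p_n)$. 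For $S_4$, I would use either the closed-form identity $\sum_{i\ge 1} p_1^i H_i/i = -\operatorname{Li}_2(-p_1/p_n)$, whose dilogarithm reflection gives $\tfrac12 \ln^2(1/p_n) + \pi^2/6 + o(1)$, or the sum-to-integral substitution $y = ip_n$ followed by the incomplete-gamma bookkeeping used in the proof of \cref{lem:r1-mu}. Either route gives $\sum p_1^i H_i/i = \tfrac12 \ln^2(1/p_n)(1+o(1))$, hence $S_4 = \tilde\mu^2 p_n \cdot \tfrac12 \ln^2(1/p_n)(1+o(1)) = (1+o(1))/(2p_n)$.

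Adding the pieces, the leading $\pm 1/p_n$ contributions from $S_1$ and $S_2$ cancel and $S_4$ supplies the result: $\E[\Delta(Y_d)] = 1/(2p_n) + o(1/p_n)$. The main obstacle is exactly this cancellation; since each of $S_1$ and $S_2$ is of order $1/p_n$, I need $\tilde\mu \ln(1/p_n) = 1 + o(1)$ to $1+o(1)$ precision (provided by \cref{lem:r1-mu}) and the tail $\sum_{i\ge n} p_1^i H_i$ controlled to order $o(1/p_n)$ before the $1/(2p_n)$ contribution from $S_4$ becomes visible. A secondary delicacy is extracting the specific coefficient $\tfrac12$ in $S_4$, which demands the leading-order $\tfrac12 \ln^2(1/p_n)$ asymptotic for $\sum p_1^i H_i/i$ rather than a mere $O(\ln^2(1/p_n))$ bound.
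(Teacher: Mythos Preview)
Your proposal is correct and follows essentially the same route as the paper: the identical decomposition of \cref{lem:explicit-delta} into a boundary term plus the four summands $S_1,\dots,S_4$, with the same asymptotic values $1/p_n$, $-1/p_n$, $o(1/p_n)$, $1/(2p_n)$ for each. The only notable difference is computational: where the paper replaces $H_i$ by $\ln i$, converts sums to integrals, and appeals to special-function limits (the incomplete gamma and logarithmic integral, and Mathematica for \eqref{eq:ln-i-i-infty}), you instead invoke exact harmonic-sum identities—the order-swap $\sum_{i\ge1}p_1^iH_i=\frac{1}{p_n}\sum_{k\ge1}p_1^k/k$ and the Landen-type dilogarithm identity for $\sum p_1^iH_i/i$—which is cleaner and avoids the integral-approximation bookkeeping, at the cost of requiring those identities to be known.
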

\begin{proof}
    First, we have the explicit formula for $\E[\Delta(Y_d)],$ from \cref{lem:explicit-delta},
    \begin{align}
        \E[\Delta(Y_d)] &=  p_1^n (1 - \frac{\mu}{n \mu_1}) ( n-1 + \frac{1}{p_n} - \frac{\mu}{\mu_1} H_{n-1} - \frac{\mu}{n p_n \mu_1}) \nonumber\\
   & +  \sum_{i=1}^{n-1} p_1^i p_n(1 - \frac{\mu}{i \mu_1}) (i - \frac{\mu}{\mu_1} H_i).
   \label{eq:r1-delta-first}
    \end{align}

    We first rewrite expression \eqref{eq:r1-delta-first} in terms of the difference of two infinite summations:
    \begin{align}
        \E[\Delta(Y_d)] &=  p_1^n (1 - \frac{\mu}{n \mu_1}) ( n-1 + \frac{1}{p_n} - \frac{\mu}{\mu_1} H_{n-1} - \frac{\mu}{n p_n \mu_1}) \label{eq:r1-non-summation}\\
   & +  \sum_{i=1}^\infty p_1^i p_n(1 - \frac{\mu}{i \mu_1}) (i - \frac{\mu}{\mu_1} H_i) \label{eq:r1-main}\\
   &- \sum_{i=n}^\infty p_1^i p_n(1 - \frac{\mu}{i \mu_1}) (i - \frac{\mu}{\mu_1} H_i). \label{eq:r1-tail}
    \end{align}

    We will show that \eqref{eq:r1-non-summation} and \eqref{eq:r1-tail} are each $o(1/p_n)$, and thus negligible.
    We will then split up \eqref{eq:r1-main} into four terms:
    \begin{align}
        \label{eq:r1-p1i-i}
         &p_n \sum_{i=1}^\infty p_1^i i \\
        \label{eq:r1-p1i}
        - &\frac{\mu p_n}{\mu_1} \sum_{i=1}^\infty p_1^i \\
        \label{eq:r1-p1i-hi}
        - &\frac{\mu p_n}{\mu_1} \sum_{i=1}^\infty p_1^i H_i \\
        \label{eq:r1-p1i-hi-i}
        + &\frac{\mu^2 p_n}{\mu_1^2} \sum_{i=1}^\infty p_1^i \frac{1}{i} H_i
    \end{align}
    and show that the asymptotic growth rates of these terms are:
        \begin{enumerate}
        \item $p_1^i i$ term \eqref{eq:r1-p1i-i}: $\frac{1}{p_n} + o(1/p_n)$. 
        \label{it:r1-1}
        \item $p_1^i$ term \eqref{eq:r1-p1i}: $o(1/p_n)$.
        \label{it:r1-2}
        \item $p_1^i H_i$ term \eqref{eq:r1-p1i-hi}: $-\frac{1}{p_n} + o(1/p_n)$.
        \label{it:r1-3}
        \item $p_1^i H_i / i$ term \eqref{eq:r1-p1i-hi-i}:  $\frac{1}{2p_n} + o(1/p_n)$.
        \label{it:r1-4}
    \end{enumerate}
    These growth rates sum to $\frac{1}{2 p_n} + o(1/p_n)$, as desired.

    First, we will show that \eqref{eq:r1-non-summation} is negligible:
    \begin{align*}
    &p_1^n (1 - \frac{\mu}{n \mu_1}) ( n-1 + \frac{1}{p_n} - \frac{\mu}{\mu_1} H_{n-1} - \frac{\mu}{n p_n \mu_1}) \\
    =& \theta(p_1^n ( n - \frac{\mu}{n p_n})) \\
    =& \theta(p_1^n ( n - o(\frac{1}{p_n\log n}))) \\
    =& \theta(n p_1^n).
\end{align*}

Next we verify that $n p_1^n = o(1/p_n)$ which is
equivalent to showing that $n p_n = o(1/p_1^n) = o((\frac{1}{1-p_n})^n)$.

Note that $\frac{1}{1-p_n} = 1 + p_n + p_n^2 + \ldots \ge 1 + p_n$.
Thus,
\begin{align*}
    \left(\frac{1}{1-p_n}\right)^n \ge (1 + p_n)^n \ge p_n^2 \frac{n(n-1)}{2} = \theta(p_n^2 n^2).
\end{align*}
Note that in regime 1, $np_n = \omega(1)$. Thus, $p_n^2 n^2 = \omega(n p_n)$, and equivalently $n p_n = o(1/p_1^n)$, as desired and \eqref{eq:r1-non-summation} is negligible.

Next, we will show that \eqref{eq:r1-tail} is negligible:
\begin{align*}
    &\sum_{i=n}^\infty p_1^i p_n(1 - \frac{\mu}{i \mu_1}) (i - \frac{\mu}{\mu_1} H_i).
\end{align*}

Let us start by bounding the term corresponding to $i=n$ of \eqref{eq:r1-tail}, namely,
\begin{align*}
    p_1^n p_n(1 - \frac{\mu}{n \mu_1}) (n - \frac{\mu}{\mu_1} H_n).
\end{align*}

Recall that we have shown that $p_1^n = o(\frac{1}{n p_n})$. Then

\begin{align*}
    &p_1^n p_n(1 - \frac{\mu}{n \mu_1}) (n - \frac{\mu}{\mu_1} H_n) \\
    =& o(1/n)(1 - \frac{\mu}{n \mu_1}) (n - \frac{\mu}{\mu_1} H_n) \\
    =& o(1/n)(1 - \frac{1}{n p_n \log 1/p_n}) (n - \frac{\log n}{p_n \log 1/p_n}) \\
    =& o(1/n) (n - \frac{\log n}{p_n \log 1/p_n}) \\
    =& o(1) - o(\frac{\log n}{n p_n \log 1/p_n}).
\end{align*}

Note that the function $x \ln 1/x$ is increasing in $x$ in the interval $(0, 1/e)$.
Recall that $p_n =\omega(1/n)$.
Thus, $p_n \log 1/p_n = \Omega(1/n \log n).$
We therefore find that
\begin{align*}
    \frac{\log n}{n p_n \log 1/p_n}
    = O(\frac{\log n}{n (1/n \log n)}) = O(1).
\end{align*}

As a result, we find that the term corresponding to $i=n$  of \eqref{eq:r1-tail} is $o(1)$.

Next, consider the first $\lfloor 1/p_n \rfloor$ terms of \eqref{eq:r1-tail}, ranging from $i=n$ to $i=n+\lfloor 1/p_n \rfloor$.
By the same argument, each of these terms is $o(1)$, and the sum is bounded accordingly:
\begin{align*}
    \sum_{i=n}^{i=n + \lfloor 1/p_n \rfloor} p_1^i p_n(1 - \frac{\mu}{i \mu_1}) (i - \frac{\mu}{\mu_1} H_i) = o(1/p_n).
\end{align*}

Call these $\lfloor 1/p_n \rfloor$ terms the first \emph{block} of terms, and similarly, define the $k$th block of terms:
\begin{align*}
    b(k) := \sum_{i=n+ \lfloor (k-1)/p_n \rfloor}^{i=n+ \lfloor k/p_n \rfloor} p_1^i p_n(1 - \frac{\mu}{i \mu_1}) (i - \frac{\mu}{\mu_1} H_i).
\end{align*}

We can bound the $k$th block's ratio with respect to the first term of \eqref{eq:r1-tail}. The $p_1^i$ term has shrunk by a factor of $p_1^{k/p_n}$, which is $(1+o(1))e^{-k}$. The  $1-\frac{\mu}{i\color{blue}\mu_1}$ term differs negligibly from 1 and the $i - \frac{\mu}{\mu_1} H_i$ term is $O(i)$ as shown above. As a result, the ratio between this term in the $k$th block and the first block is at most $k$. Thus, we can bound the ratio of the blocks:
\begin{align*}
    b(k)/b(1) &\le e^{-k}(k+1) \\
    \sum_{k=1}^\infty b(k) &\le b(1) \sum_{k=1}^\infty e^{-k} (k+1) \le 3 b(1) = o(1/p_n).
\end{align*}
As a result, the infinite sum in \eqref{eq:r1-tail} is $o(1/p_n)$, as desired.

Now, we're ready to characterize the four subterms of the main term \eqref{eq:r1-main}, starting with \eqref{eq:r1-p1i-i}, confirming \cref{it:r1-1}:
\begin{align*}
    p_n \sum_{i=1}^\infty p_1^i i = p_n \frac{p_1}{p_n^2} = \frac{p_1}{p_n} = \frac{1}{p_n} + o(1/p_n).
\end{align*}

Next, we have \eqref{eq:r1-p1i}, confirming \cref{it:r1-2}:
\begin{align*}
    - &\frac{\mu p_n}{\mu_1} \sum_{i=1}^\infty p_1^i =  - \frac{\mu p_n}{\mu_1} \frac{p_1}{p_n} = - \frac{\mu p_1}{\mu_1} = \theta(\mu) = \theta(\frac{1}{p_n \log(1/p_n)} )= o(1/p_n).
\end{align*}

Next, we consider \eqref{eq:r1-p1i-hi}, in which we replace $H_i$ with $\ln i + O(1)$:
\begin{align*}
    - &\frac{\mu p_n}{\mu_1} \sum_{i=1}^\infty p_1^i H_i = 
    - \frac{\mu p_n}{\mu_1} \sum_{i=1}^\infty p_1^i (\ln i + O(1)).
\end{align*}
We next switch from a summation to an integral, and bound the difference from this change:
\begin{align*}
    \sum_{i=1}^\infty p_1^i \ln i - \int_{i=1}^\infty p_1^i \ln i di
    = \sum_{i=1}^\infty \left( p_1^i \ln i - \int_{j=i}^{i+1} p_1^j \ln j dj \right).
\end{align*}

Because $p_1^j$ is decreasing in $j$ and $\ln j$ is increasing in $j$, we can upper bound this difference as
\begin{align*}
    \sum_{i=1}^\infty \left( p_1^i \ln i - \int_{j=i}^{i+1} p_1^j \ln j dj \right)
    &\le \sum_{i=1}^\infty \left( p_1^i \ln i -  p_1^{i+1} \ln i \right)
    = \sum_{i=1}^\infty \left( (p_1^i - p_1^{i+1}) \ln i \right) \\
    &= (1 - p_1)\sum_{i=1}^\infty \left( p_1^i  \ln i \right).
\end{align*}

We have  shown that the difference between the summation and the integral is equal to the summation multiplied by a $1-p_1 = o(1)$ term.
We can thus relate \eqref{eq:r1-p1i-hi} to the integral as follows:
\begin{align*}
    - \frac{\mu p_n}{\mu_1} \sum_{i=1}^\infty p_1^i H_i
    &=- \frac{\mu p_n}{\mu_1} \sum_{i=1}^\infty p_1^i (\ln i + O(1))
    = - \frac{\mu p_n}{\mu_1} \left( O(\frac{1}{p_n}) + \sum_{i=1}^\infty p_1^i \ln i \right) \\
    &= - \frac{\mu p_n}{\mu_1} \left( O(\frac{1}{p_n}) + (1+o(1)) \int_{i=1}^\infty p_1^i \ln i di \right).
\end{align*}

Now, let's focus on the integral:
\begin{align*}
    \int_{i=1}^\infty p_1^i \ln(i) di &= \frac{LogIntegral[p_1]}{\ln(p_1)},
\end{align*}
where LogIntegral is the logarithmic integral function $\int_0^x \frac{dt}{\ln t}$.

We can take the following limits:
\begin{align}
    \label{eq:logintegral}
    \lim_{p_n \to 0^+} LogIntegral[1-p_n] - \ln p_n &= \gamma \implies LogIntegral[1-p_n] =  \ln p_n + O(1)\\
    \label{eq:log-p1}
    \lim_{p_n \to 0^+} \frac{\ln(1-p_n)}{p_n} &= -1  \implies \ln(1-p_n) = -p_n + o(p_n)\\
    \nonumber
    \implies \frac{LogIntegral[p_1]}{\ln(p_1)} &= \frac{\ln p_n + O(1)}{-p_n + o(p_n)} = -(1+o(1)) \frac{\ln p_n}{p_n}.
\end{align}

Combining it all together, we get 
\begin{align*}    
    - \frac{\mu p_n}{\mu_1} \sum_{i=1}^\infty p_1^i H_i &= - \frac{\mu p_n}{\mu_1} \left(-(1+o(1)) \frac{\ln p_n}{p_n} + O(1/p_n)\right) \\
        &= (1+o(1)) \frac{1}{\ln(1/p_n)} \frac{\ln(p_n)}{p_n} \\
    &= -(1+o(1))\frac{1}{p_n}.
\end{align*}
This confirms \cref{it:r1-3}, the specified growth rate of \eqref{eq:r1-p1i-hi}.

We now finish with \eqref{eq:r1-p1i-hi-i}. We start by transforming $H_i$ to $\ln i$ and the summation to an integral in the same fashion as for \eqref{eq:r1-p1i}, with the same error bound:
\begin{align*}
    \frac{\mu^2 p_n}{\mu_1^2} \sum_{i=1}^\infty p_1^i \frac{1}{i} H_i
    = \frac{\mu^2 p_n}{\mu_1^2} \left((1+o(1))\int_{i=1}^\infty p_1^i \frac{1}{i} \ln i di + O(1/p_n)\right).
\end{align*}
We evaluate this integral  using a symbolic algebraic package (specifically, Mathematica \cite{mathematica}) and obtain
\begin{align}
    \label{eq:ln-i-i-infty}
    \int_{i=1}^\infty p_1^i \ln(i)/i di &= \frac{1}{2} \ln(1/p_n)^2 + O(\ln(1/p_n)).
\end{align}

Putting it all together:
\begin{align*}
        p_n \frac{\mu^2}{\mu_1^2} \sum_{i=1}^\infty p_1^i  \frac{H_i}{i} 
    &= p_n \frac{\mu^2}{\mu_1^2} \left( \frac{1}{2} \ln(1/p_n)^2 + O(\ln(1/p_n))\right) \\
    &= (1+o(1)) \frac{1}{p_n \ln(1/p_n)^2}\frac{1}{2} \ln(1/p_n)^2 \\
    &= (1+o(1))\frac{1}{2} \frac{1}{p_n}.
\end{align*}

This confirms \cref{it:r1-4}, the desired growth rate of \eqref{eq:r1-p1i-hi-i} and completes the proof.
\end{proof}
\section{Regime 2: $p_n = c/n$: Balanced load}
\label{sec:regime-2}

This section focuses on regime 2, namely, the balanced regime, where the load of $n$-server jobs and $1$-server jobs remain in fixed proportion, for a general proportionality constant $c$.
\textcolor{black}{To prove the result in \cref{thm:n-server-dominated} for regime 2, we start by characterizing the throughput $\mu$ in this regime.
\cref{lem:r2-mu} proves the first equation of \cref{thm:n-server-dominated}}.

\begin{lemma}
    \label{lem:r2-mu}
    In regime 2, which is the balanced regime, namely $p_n = c/n$, for some positive constant $c$, as $p_n \to 0, n \to \infty$,
    \begin{align*}
        \mu = (1 + o(1)) \frac{\mu_1}{p_n \ln 1/p_n}.
    \end{align*}
\end{lemma}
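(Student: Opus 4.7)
The plan is to follow essentially the same template as the proof of \cref{lem:r1-mu}, starting from the explicit formula in \cref{lem:explicit-mu}
\begin{align*}
    \mu = \frac{1}{p_n}\left(\frac{1}{\mu_n} + \frac{1}{n\mu_1}\frac{p_1^n}{p_n} + \sum_{b=1}^{n-1}\frac{p_1^b}{b\mu_1}\right)^{-1},
\end{align*}
and showing that the denominator is dominated by $\frac{1}{\mu_1}\sum_{b=1}^{n-1}\frac{p_1^b}{b}$, which in turn is $(1+o(1))\ln(1/p_n)$. The key qualitative distinction from regime 1 is that $np_n = c$ is now a positive constant, so $p_1^n = (1-c/n)^n \to e^{-c}$ is a nonzero constant, and therefore $\frac{1}{n\mu_1}\frac{p_1^n}{p_n} = \frac{p_1^n}{c\mu_1} = \theta(1)$ rather than $o(1)$. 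Nevertheless, both this term and $\frac{1}{\mu_n}$ are still $O(1)$, which will be multiplicatively negligible against a summation of order $\ln(1/p_n) = \omega(1)$.

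First I would peel off the two $O(1)$ terms, reducing the problem to showing $\sum_{b=1}^{n-1}\frac{p_1^b}{b} = (1+o(1))\ln(1/p_n)$. Next I would extend the finite sum to an infinite sum and control the error: the tail obeys
\begin{align*}
    \sum_{b=n}^{\infty}\frac{p_1^b}{b} \le \frac{p_1^n}{n}\sum_{b=0}^{\infty} p_1^b = \frac{p_1^n}{n p_n} = \frac{p_1^n}{c} = O(1),
\end{align*}
so the tail is additively negligible against $\ln(1/p_n)$. Note this bound is where regime 2 differs from regime 1: there the tail was handled via a geometric-block argument because $1/p_n = o(n)$; here the tail is immediately controlled by the constant $p_1^n/c$.

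Then I would pass from the infinite sum to the integral $\int_1^\infty \frac{p_1^b}{b}\,db$ using the same decreasing-integrand comparison used in \cref{lem:r1-mu}, which bounds the difference by $p_1 < 1 = O(1)$. Finally I would invoke the incomplete gamma expression
\begin{align*}
    \int_1^\infty \frac{p_1^b}{b}\,db = \mathrm{Gamma}[0,-\ln(1-p_n)] = -\ln p_n + O(1) = \ln(1/p_n) + O(1),
\end{align*}
using the Euler--Mascheroni expansion exactly as in \eqref{eq:r1-incomplete-gamma}. Combining these gives $\sum_{b=1}^{n-1}\frac{p_1^b}{b} = \ln(1/p_n) + O(1) = (1+o(1))\ln(1/p_n)$, and plugging back into the explicit formula yields the claim.

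There is no serious obstacle here, since essentially all pieces of the regime 1 proof transfer verbatim once the tail of the sum is bounded by the simpler constant argument above. The only place where care is needed is verifying that the $\theta(1)$ term $\frac{p_1^n}{n\mu_1 p_n}$ is still multiplicatively absorbed into the $(1+o(1))$ factor in front of $\frac{\mu_1}{p_n\ln(1/p_n)}$, which follows immediately because $\frac{1}{\mu_n} + \frac{p_1^n}{c\mu_1} = O(1) = o(\ln(1/p_n))$.
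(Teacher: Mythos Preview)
Your proposal is correct and follows essentially the same route as the paper's proof: start from \cref{lem:explicit-mu}, observe the two non-summation terms are $O(1)$, extend the sum to infinity, pass to the integral, and invoke the incomplete gamma asymptotic \eqref{eq:r1-incomplete-gamma}. The only difference is your tail bound $\sum_{b\ge n}\frac{p_1^b}{b}\le \frac{p_1^n}{np_n}=\frac{p_1^n}{c}=O(1)$, which is a cleaner one-line alternative to the geometric-block argument the paper actually reuses from regime~1.
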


\begin{proof}
    From \cref{lem:explicit-mu}, we have the following expression for $\mu$:
    \begin{align*}
        \mu &= \frac{1}{p_n} \left( \frac{1}{\mu_n} + \frac{1}{n\mu_1} \frac{p_1^n}{p_n} + \sum_{b=1}^{n-1} \frac{p_1^b}{b \mu_1}\right)^{-1}.
    \end{align*}
    Note that as $n \to \infty$, $p_1^n = (1-p_n)^n = (1+o(1))e^{-c}$. Then
\begin{align}
    \nonumber
    \mu &= \frac{1}{p_n} \left( \frac{1}{\mu_n} + (1+o(1)) \frac{e^{-c}}{c\mu_1} + \sum_{b=1}^{n-1} \frac{p_1^b}{b \mu_1}\right)^{-1} \\
    \label{eq:r2-mu-mid}
    &= \frac{1}{p_n} \left( O(1) + \sum_{b=1}^{n-1} \frac{p_1^b}{b \mu_1}\right)^{-1}.
\end{align}
We will show that $\sum_{b=1}^{n-1} \frac{p_1^b}{b} = \omega(1)$, and hence that the $O(1)$ term is negligible.

To do so, we first change from a finite summation to an infinite summation, and then change to an integral:
\begin{align*}
    \sum_{b=1}^{n-1} \frac{p_1^b}{b}  = (1+o(1)) \sum_{b=1}^{\infty} \frac{p_1^b}{b} =(1+o(1))\int_{b=1}^\infty \frac{p_1^b}{b} db.
    \end{align*}

To justify the change to an infinite summation,
note that $p_n = c/n$ and hence, $1/p_n = n/c$.
At index $b = \lfloor 1/p_n \rfloor$, $p_1^b \approx 1/e,$ and $p_1^b/b \approx p_n/e$.
All indices from $\lfloor 1/p_n \rfloor$ to $\lfloor 2/p_n \rfloor$ sum to at most $1/e$,
and indices from $\lfloor 2/p_n \rfloor$ to $\lfloor 3/p_n\rfloor$ sum to at most $2/e^2$.
In general,
\begin{align*}
    \sum_{b=\lfloor k/p_n \rfloor}^{\lfloor (k+1)/p_n \rfloor} \frac{p_1^b}{b} &\le \frac{k}{e^k} \\
    \implies \sum_{b=\lfloor 1/p_n \rfloor}^\infty \frac{p_1^b}{b} &\le \sum_{k=1}^\infty \frac{k}{e^k} = \frac{e}{(e-1)^2} = O(1).
\end{align*}
As $1/p_n = \theta(n)$, we similarly find that the sum from index $n$ to $\infty$ is $O(1)$.
As stated above, we will show that this $O(1)$ term is negligible.

The transformation from the summation to the integral holds for the same reason as given in \cref{lem:r1-mu}, again with a negligible $O(1)$ difference.

Next, as shown in \eqref{eq:r1-incomplete-gamma}, we have
\begin{align*}
    \int_{b=1}^\infty \frac{p_1^b}{b} db= -\ln p_n + O(1).
\end{align*}
Note that the above integral is thus $\omega(1)$, so the prior $O(p_n)$ term is negligible as specified above. Substituting everything into \eqref{eq:r2-mu-mid}, we have
\begin{align*}
    \mu = \frac{1}{p_n} \left( O(1) - \ln p_n\right)^{-1}
     = (1+o(1))\frac{1}{p_n \ln 1/p_n},
\end{align*}
as desired.
\end{proof}

\textcolor{black}{We next characterize the limiting scaled response time $\E[\Delta(Y_d)]$ for regime 2 to complete the proof of \cref{thm:n-server-dominated} in the balanced regime}. 

\begin{theorem}
    \label{thm:r2-delta}
    In regime 2, which is the balanced regime, namely $p_n = c/n$, for some positive constant $c$, as $p_n \to 0, n \to \infty$,
    \begin{align*}
        \E[\Delta(Y_d)] = (1+o(1)) \frac{1}{2 p_n}.
    \end{align*}
\end{theorem}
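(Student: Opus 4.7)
The plan is to mirror the structure of the proof of \cref{thm:r1-delta} (regime 1), applying \cref{lem:explicit-delta} for the explicit formula for $\E[\Delta(Y_d)]$ and \cref{lem:r2-mu} to substitute $\mu = (1+o(1))\mu_1/(p_n \ln 1/p_n)$. Two features distinguish regime 2 from regime 1: $p_1^n = (1+o(1))e^{-c}$ is now bounded away from zero, and $n = c/p_n$. Consequently, the boundary term $p_1^n(1-\frac{\mu}{n\mu_1})(n-1+\frac{1}{p_n} - \frac{\mu}{\mu_1}H_{n-1}-\frac{\mu}{np_n\mu_1})$ is no longer negligible, and the finite summation $\sum_{i=1}^{n-1}$ cannot simply be replaced by its infinite analogue without a nontrivial tail correction at index $n$.

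First I would analyze the boundary term directly. Using $\ln n = \ln(1/p_n) + O(1)$, the asymptotic $\mu$ yields $\frac{\mu}{n\mu_1} = o(1)$, $\frac{\mu}{np_n\mu_1} = o(1/p_n)$, $\frac{\mu}{\mu_1}H_{n-1} = (1+o(1))/p_n$, and $n - 1 + 1/p_n = (1+o(1))(c+1)/p_n$. Combining, the boundary term equals $(1+o(1))ce^{-c}/p_n$, a genuine $\theta(1/p_n)$ contribution.

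Next I would decompose $\sum_{i=1}^{n-1} p_1^i p_n(1-\frac{\mu}{i\mu_1})(i - \frac{\mu}{\mu_1}H_i)$ into four subterms $T_1, T_2, T_3, T_4$ exactly as in \cref{thm:r1-delta}, and evaluate each. For $T_1 = p_n\sum_{i=1}^{n-1} i p_1^i$, the difference from the infinite sum is $\sum_{i \ge n} i p_1^i = (1+o(1))e^{-c}(c+1)/p_n^2$, yielding $T_1 = (1+o(1))(1-(1+c)e^{-c})/p_n$. The term $T_2$ is $o(1/p_n)$ by the same geometric argument as in regime 1. For $T_3$, I would replace $H_i$ by $\ln i$ and the sum by an integral as in regime 1, then subtract the tail $\int_n^\infty p_1^i \ln i \, di = (1+o(1))e^{-c}\ln(1/p_n)/p_n$ from the infinite-integral value $(1+o(1))\ln(1/p_n)/p_n$, giving $T_3 = -(1+o(1))(1-e^{-c})/p_n$. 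For $T_4$, the corresponding tail is only $O(\ln(1/p_n))$, negligible compared to the $\tfrac{1}{2}\ln^2(1/p_n)$ main contribution, so $T_4 = (1+o(1))/(2p_n)$, identically to regime 1.

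Summing the boundary term and $T_1, T_2, T_3, T_4$, the $e^{-c}$-dependent pieces combine as $ce^{-c} + (1-(1+c)e^{-c}) + 0 - (1-e^{-c}) + 1/2 = 1/2$, giving the desired $\E[\Delta(Y_d)] = (1+o(1))/(2p_n)$. The hard part will be the careful bookkeeping of this cancellation, since three separate sources each contribute $\theta(e^{-c}/p_n)$ terms with $c$-dependent coefficients that must combine exactly; in particular, the tail estimate for $T_3$ requires handling the integral $\int_0^\infty p_1^j \ln(n+j)\,dj$ in the regime where $j \sim 1/p_n$ and $n \sim 1/p_n$ are of comparable order, so the $\ln(1+j/n)$ correction to $\ln(n+j)$ must be shown to be $O(1)$ uniformly in $j$, unlike the much cleaner regime 1 situation where $n \gg 1/p_n$.
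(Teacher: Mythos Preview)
Your proposal is correct and follows essentially the same approach as the paper: the identical five-term decomposition (boundary term plus $T_1,\dots,T_4$), the same asymptotic values for each piece ($ce^{-c}/p_n$, $(1-(1+c)e^{-c})/p_n$, $o(1/p_n)$, $-(1-e^{-c})/p_n$, $1/(2p_n)$), and the same cancellation. The only cosmetic difference is that you evaluate the finite sums by subtracting tails from the regime-1 infinite-sum results, whereas the paper evaluates the finite integrals directly via the special functions $ExpIntegralEi$ and $LogIntegral$; your tail estimates are correct and this route works equally well.
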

\begin{proof}
    We start with the formula for $\E[\Delta(Y_d)]$, from \cref{lem:explicit-delta}, which we split into five terms:
    \begin{align}
        \label{eq:r2-non-summation}
        &p_1^n (1 - \frac{\mu}{n \mu_1}) ( n-1 + \frac{1}{p_n} - \frac{\mu}{\mu_1} H_{n-1} - \frac{\mu}{n p_n \mu_1}) \\
        \label{eq:r2-p1i-i}
        + &p_n \sum_{i=1}^{n-1} p_1^i i \\
        \label{eq:r2-p1i}
        - &\frac{\mu p_n}{\mu_1} \sum_{i=1}^{n-1} p_1^i \\
        \label{eq:r2-p1i-hi}
        - &\frac{\mu p_n}{\mu_1} \sum_{i=1}^{n-1} p_1^i H_i \\
        \label{eq:r2-p1i-hi-i}
        + &\frac{\mu^2 p_n}{\mu_1^2} \sum_{i=1}^{n-1} p_1^i \frac{1}{i} H_i
    \end{align}
    We will demonstrate the following asymptotic growth rates for each of these terms:
    \begin{enumerate}
        \item Non-summation term \eqref{eq:r2-non-summation}: $\frac{c e^{-c}}{p_n} + o(1/p_n)$.
        \label{it:r2-1}
        \item $p_1^i i$ term \eqref{eq:r2-p1i-i}: $\frac{1 - e^{-c}(1+c)}{p_n} + o(1/p_n)$.
        \label{it:r2-2}
        \item $p_1^i$ term \eqref{eq:r2-p1i}: $o(1/p_n)$.
        \label{it:r2-3}
        \item $p_1^i H_i$ term \eqref{eq:r2-p1i-hi}: $- \frac{1-e^{-c}}{p_n} + o(1/p_n)$.
        \label{it:r2-4}
        \item $p_1^i H_i / i$ term \eqref{eq:r2-p1i-hi-i}: $\frac{1}{2 p_n} + o(1/p_n)$.
        \label{it:r2-5}
    \end{enumerate}
    Note that the non-summation term \eqref{eq:r2-non-summation} is now significant, which was not the case in the $n$-server dominated regime in \cref{sec:regime-1}.
    
    These growth rates sum to $\frac{1}{2 p_n} + o(1/p_n)$, as desired. Furthermore, $c$ does not appear in the result, as its contributions cancel out.

    Now, to verify \cref{it:r2-1,it:r2-2,it:r2-3,it:r2-4,it:r2-5}, we start with the non-summation term \eqref{eq:r2-non-summation}, and simplify. First, we note that $p_1^n \to e^{-c}$:
\begin{align*}
    &p_1^n (1 - \frac{\mu}{n \mu_1}) ( n-1 + \frac{1}{p_n} - \frac{\mu}{\mu_1} H_{n-1} - \frac{\mu}{n p_n \mu_1}) \\
    &= (1+o(1))e^{-c}(1 - \frac{\mu}{n \mu_1}) ( n-1 + \frac{1}{p_n} - \frac{\mu}{\mu_1} H_{n-1} - \frac{\mu}{n p_n \mu_1}).
\end{align*}
Next, we remove negligible terms.
Note that $\mu/n$ is negligible, because $\mu = O(\frac{1}{p_n \ln(1/p_n)})$, and $n = O(1/p_n)$.
\begin{align*}
    &(1+o(1))e^{-c}(1 - \frac{\mu}{n \mu_1}) ( n-1 + \frac{1}{p_n} - \frac{\mu}{\mu_1} H_{n-1} - \frac{\mu}{n p_n \mu_1}) \\
    &= (1+o(1))e^{-c} ( \frac{c + 1}{p_n} - \frac{\mu}{\mu_1} H_{n-1}).
\end{align*}
We now apply \cref{lem:r2-mu}, and use the facts that that $|H_{n-1} - \ln n| \le 1$ by standard harmonic series bounds, and $\ln n - \ln 1/p_n = \ln c$. Thus,
\begin{align*}
    &(1+o(1))e^{-c} ( \frac{c + 1}{p_n} - \frac{\mu}{\mu_1} H_{n-1}) = (1+o(1))e^{-c} ( \frac{c}{p_n}).
\end{align*}
This verifies \cref{it:r2-1}, the claimed asymptotic growth  rate for \eqref{eq:r2-non-summation}.

Next, we examine \cref{it:r2-2}, the claimed asymptotic growth rate for \eqref{eq:r2-p1i-i}. This summation has a simple closed form:
\begin{align*}
    p_n \sum_{i=1}^{n-1} p_1^i i &=
    p_n \frac{p_1+p_1^{n+1}(-1-n+n p_1)}{p_n^2}
    = (1+o(1))\frac{1+e^{-c}(-1-n(1-p_1))}{p_n} \\
    &= (1+o(1))\frac{1-e^{-c}(1+c)}{p_n}.
\end{align*}

Next, we examine \cref{it:r2-3}, the claimed asymptotic growth rate for \eqref{eq:r2-p1i}. This summation also has a simple closed form:
\begin{align*}
    \sum_{i=1}^{n-1} p_1^i &= \frac{1-p_1^n}{1-p_1} = \frac{1-e^{-c}}{p_n} \\
\implies    -\frac{\mu p_n}{\mu_1} \sum_{i=1}^{n-1} p_1^i &= -(1+o(1))\frac{1-e^{-c}}{p_n \ln(1/p_n)} = -(1+o(1))\frac{1-e^{-c}}{p_n \ln(1/p_n)}
\end{align*}

Note that $\frac{1}{p_n \ln (1/p_n)} = o(1/p_n)$, confirming the growth rate given in \cref{it:r2-3}.

Next, we examine \cref{it:r2-4}, the claimed asymptotic growth rate for \eqref{eq:r2-p1i-hi}:
\begin{align*}
    - &\frac{\mu p_n}{\mu_1} \sum_{i=1}^{n-1} p_1^i H_i.
\end{align*}

We claim that replacing $H_i$ by $\ln(i)$ has a negligible impact on the sum, as does transforming the sum into an integral.
In particular, $H_i - \ln(i) \in [0, 1]$. Similar to  the argument in \cref{lem:r1-mu}, replacing the sum by an integral changes the growth rate by at most a bounded constant per term.
Both have at most an additive impact per term, resulting in net change on the order of $\frac{\mu p_n}{\mu_1} \sum_{i=1}^{n-1} p_1^i$, which matches the formula from \eqref{eq:r2-p1i}, and is similarly negligible. As a result, we have an integral that can be evaluated explicitly. Thus,
\begin{align}
    \label{eq:r2-replace}
    \sum_{i=1}^{n-1} p_1^i H_i &= (1+o(1)) \int_{i=1}^n{p_1^i \ln(i)} di \\
    \nonumber
    \int_{i=1}^n p_1^i \ln i di &= \frac{-ExpIntegralEi[n \ln p_1] + p_1^n \ln n + LogIntegral[p_1]}{\ln p_1},
\end{align}
where $ExpIntegralEi[x] := -\int_{-x}^\infty \frac{e^{-t}}{t} dt$ is the exponential integral function.

Taking limits, we find that
\begin{align*}
    \lim_{n \to \infty} ExpIntegralEi[n \ln (1-c/n)] &= ExpIntegralEi[c]\\
    \text{equivalently, } ExpIntegralEi[n \ln p_1] &= \theta(1) 
\end{align*}
Recall the asymptotic expression from \cref{sec:regime-1}, in equations \eqref{eq:logintegral} and \eqref{eq:log-p1}.
There, we used the following limits:
\begin{align*}
    LogIntegral[p_1] &= \ln p_n + \theta(1) = -\ln n + \theta(1), \\
    \ln p_1 &= -p_n(1+o(1)).
\end{align*}

It is also simple to calculate that:
\begin{align*}
    p_1^n \ln n &= e^{-c} \ln n + o(\ln n), \\
    \frac{\mu p_n}{\mu_1} &= (1+o(1)) \frac{1}{\ln 1/p_n} = \frac{1}{\ln n} + o(1/\ln n).
\end{align*}

As a result,
\begin{align*}
    -ExpIntegralEi[n \ln p_1] + p_1^n \ln n + LogIntegral[p_1] &= (e^{-c} - 1) \ln n + o(\ln n) \\
    \implies \sum_{i=1}^{n-1} p_1^i H_i &= (1+o(1))\frac{(1 - e^{-c}) \ln n}{p_n} \\
    \implies \frac{\mu p_n}{\mu_1} \sum_{i=1}^{n-1} p_1^i H_i &= \frac{1 - e^{-c}}{p_n} + o(1/p_n).
\end{align*}

This confirms \cref{it:r2-4},  claimed asymptotic growth rate for \eqref{eq:r2-p1i-hi}.

Finally, we have \cref{it:r2-5}, the claimed asymptotic growth rate for \eqref{eq:r2-p1i-hi-i}:
\begin{align*}
    \frac{\mu^2 p_n}{\mu_1^2} \sum_{i=1}^{n-1} p_1^i \frac{1}{i} H_i.
\end{align*}
Let us focus on the leading multiplier first:
\begin{align*}
    \frac{\mu^2 p_n}{\mu_1^2} &= \frac{1}{p_n \ln(1/p_n)^2}.
\end{align*}

Now, we turn to the summation. Note that a component of the summation must be of the order of $\ln(1/p_n)^2$ to be non-negligible.

We perform the same transformation as in \eqref{eq:r2-replace},
replacing the summation by an integral and the $H_{i}$ with $\ln i$, with the same justification:
\begin{align*}
    \sum_{i=1}^{n-1} p_1^i \frac{H_i}{i}
    \approx \int_{i=1}^n p_1^i \frac{H_i}{i} di
    =\int_{i=1}^{c/p_n} (1-p_n)^i \frac{H_i}{i} di
    \approx \int_{i=1}^{c/p_n} (1-p_n)^i \frac{\ln i}{i} di.
\end{align*}

Note that this differs from \eqref{eq:ln-i-i-infty} because the upper limit of the integral is $n = c/p_n$, not $\infty$.
Nonetheless, using Mathematica \cite{mathematica}, we obtain the same asymptotics:

\begin{align*}
    \int_{i=1}^{c/p_n} (1-p_n)^i \frac{\ln i}{i} di = \frac{1}{2} \ln(1/p_n)^2 + \theta(\log(1/p_n))
\end{align*}


Thus the entire term satisfies the claim:
\begin{align*}
    \frac{\mu^2 p_n}{\mu_1^2} \sum_{i=1}^{n-1} p_1^i \frac{1}{i} H_i
    &= \frac{1}{2 p_n} + o(1/p_n).
\end{align*}

All five claims, namely \cref{it:r2-1,it:r2-2,it:r2-3,it:r2-4,it:r2-5}, are verified, completing the proof.
\end{proof}

\section{Regime 3: $p_n = o(1/n)$: Load dominated by 1-server jobs}
\label{sec:regime-3}
This section focuses on regime 3, namely, the 1-server dominated regime, where the load of $1$-server jobs asymptotically overwhelms the load of $n$-server jobs.
We start in \cref{sec:r3-gen} in the general setting, covering all asymptotic relationships for $p_n = o(1/n)$.
However, in \cref{sec:r3-spec}, we will primarily focus on a setting of polynomial scaling, where $p_n = 1/n^\alpha, \alpha > 1$.

\subsection{Regime 3, general setting: Any growth rate $p_n = o(1/n)$}
\label{sec:r3-gen}
This setting contains a wide variety of behaviors, so while we characterize asymptotic behavior of the throughput $\mu$ in significant detail,
our result on scaled mean queue length, $\E[\Delta(Y_d)]$, is less tight. Our tighter results are in the more specific setting of \cref{sec:r3-spec}.

We start by characterizing the throughput $\mu$ in this regime.
\begin{lemma}
    \label{lem:r3-gen-mu}
    If $p_n = o(1/n \ln n)$, then
    \begin{align*}
        \mu = (1+o(1)) n \mu_1.
    \end{align*}
    On the other hand, if $p_n = \omega(1/(n \ln n))$ and $p_n = o(1/n)$,
    then
    \begin{align*}
        \mu = (1+o(1)) \frac{\mu_1}{p_n \ln n}.
    \end{align*}
\end{lemma}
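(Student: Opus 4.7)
The approach is to start from the explicit expression for $\mu$ given in \cref{lem:explicit-mu},
\[
\mu = \frac{1}{p_n}\left(\frac{1}{\mu_n} + \frac{p_1^n}{n p_n \mu_1} + \sum_{b=1}^{n-1} \frac{p_1^b}{b \mu_1}\right)^{-1},
\]
and identify which of the three terms in the parenthesized sum dominates asymptotically. The key feature of regime 3 is that $n p_n = o(1)$, so each factor $p_1^b = (1-p_n)^b$ with $b \le n$ stays close to one, and the sum resembles the harmonic series rather than being exponentially damped as in regimes 1 and 2.

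Next I would pin down each ingredient. The constant $\frac{1}{\mu_n}$ is $O(1)$. From Bernoulli's inequality, $1 - b p_n \le p_1^b \le 1$ for every $b \le n$, so uniformly $p_1^b = 1 - o(1)$ across the range. This gives $\frac{p_1^n}{n p_n \mu_1} = (1+o(1))\frac{1}{n p_n \mu_1}$, which diverges since $n p_n \to 0$. Summing the pointwise bound yields $H_{n-1} - (n-1)p_n \le \sum_{b=1}^{n-1}\frac{p_1^b}{b} \le H_{n-1}$, and because $(n-1)p_n = o(1)$ while $H_{n-1} = (1+o(1))\ln n \to \infty$, we conclude $\sum_{b=1}^{n-1}\frac{p_1^b}{b\mu_1} = (1+o(1))\frac{\ln n}{\mu_1}$.

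With these three magnitudes in hand, the two subcases of the lemma correspond exactly to which of $\frac{1}{n p_n \mu_1}$ and $\frac{\ln n}{\mu_1}$ is larger; the $O(1)$ contribution from $\frac{1}{\mu_n}$ is negligible against either, since both diverge. If $p_n = o(1/(n \ln n))$, then $\frac{1}{n p_n} = \omega(\ln n)$, so the middle term dominates the bracket and inversion gives $\mu = (1+o(1)) n \mu_1$. If instead $p_n = \omega(1/(n \ln n))$ with $p_n = o(1/n)$, the ordering reverses, the sum dominates, and $\mu = (1+o(1)) \frac{\mu_1}{p_n \ln n}$. I expect the main obstacle to be resisting the temptation to redeploy the summation-to-integral machinery of \cref{lem:r1-mu} and \cref{lem:r2-mu}, since here the simpler Bernoulli-type estimate suffices precisely because $n p_n$ itself is now vanishing; after that observation, the proof is pure bookkeeping in each subcase.
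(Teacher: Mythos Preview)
Your proposal is correct and follows essentially the same approach as the paper: both start from \cref{lem:explicit-mu}, use the uniform bound $p_1^b = 1 + o(1)$ over $b \le n$ (you via Bernoulli, the paper via $p_1^n \le p_1^b \le 1$ with $p_1^n = 1+o(1)$) to reduce the bracket to $(1+o(1))\bigl(\frac{1}{n p_n \mu_1} + \frac{\ln n}{\mu_1}\bigr)$, discard the $O(1)$ term $\frac{1}{\mu_n}$, and then split into subcases according to which of $\frac{1}{n p_n}$ and $\ln n$ dominates. Your explicit remark that the integral machinery of \cref{lem:r1-mu} and \cref{lem:r2-mu} is unnecessary here is accurate and matches the paper's simpler treatment.
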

\begin{proof}
    We start with \cref{lem:explicit-mu}:
\begin{align*}
    \mu &= \frac{1}{p_n} \left( \frac{1}{\mu_n} + \frac{1}{n\mu_1} \frac{p_1^n}{p_n} + \sum_{b=1}^{n-1} \frac{p_1^b}{b \mu_1}\right)^{-1} \\
    &= \frac{1}{p_n} \left( \frac{1}{\mu_n} + (1+o(1))\frac{1}{n \mu_1 p_n} + \sum_{b=1}^{n-1} \frac{p_1^b}{b \mu_1}\right)^{-1}.
\end{align*}
This step holds because $p_n = o(1/n)$, so $p_1^n = 1+o(1)$.
As a result, we can similarly simplify the summation term -- note that $p_1^n \le p_1^b \le 1$. Furthermore, $n p_n = o(1)$, so the $1/\mu_n$  term is negligible. Then
\begin{align*}
    \mu &= \frac{1+o(1)}{p_n} \left(\frac{1}{n \mu_1 p_n} + \sum_{b=1}^{n-1} \frac{1}{b \mu_1}\right)^{-1} 
    = \frac{1+o(1)}{p_n} \left(\frac{1}{n \mu_1 p_n} + \frac{H_{n-1}}{\mu_1}\right)^{-1} \\
    &= \frac{1+o(1)}{p_n} \left( \frac{1}{n \mu_1 p_n} + \frac{\ln n}{\mu_1}\right)^{-1} 
\end{align*}

Either of these two terms can be larger in asymptotic limit,
depending on the asymptotic behavior of $p_n$.
If $p_n = o(1/(n \ln n))$, then $\frac{1}{n \mu_1 p_n}$ is larger.
If $p_n = \omega(1/(n \ln n))$, then $\frac{\ln n}{\mu_1}$ is larger.
In the former case,
\begin{align*}
    \mu &= \frac{1+o(1)}{p_n} \left( \frac{1}{n \mu_1 p_n}\right)^{-1} = (1+o(1)) \frac{n \mu_1 p_n}{p_n} = (1+o(1)) n \mu_1.
\end{align*}
In the latter case,
\begin{align*}
    \mu &= \frac{1+o(1)}{p_n} \left(\frac{\ln n}{\mu_1}\right)^{-1}
    = (1+o(1)) \frac{\mu_1}{p_n \ln n}.
\end{align*}
Finally, if $p_n = \theta(1/(n \ln n))$, then both terms $\frac{1}{n \mu_1 p_n}$ and $\frac{\ln n}{\mu_1}$
are within a constant factor of each other in the asymptotic limit, 
and both contribute to the asymptotic behavior of $\mu$.
\end{proof}

We now switch to $\E[\Delta(Y_d)]$, where we only focus on the case where $p_n = o(1/(n \ln n))$,
and only provide an upper bound on the asymptotic growth of $\E[\Delta(Y_d)]$. Our stronger results for $\E[\Delta(Y_d)]$ are in  \cref{sec:r3-spec}.

\begin{lemma}
    \label{lem:r3-gen-delta}
    If $p_n = o(1/(n \ln n))$, then
    \begin{align*}
        \E[\Delta(Y_d)] = o(1/p_n).
    \end{align*}
\end{lemma}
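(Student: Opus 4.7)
The plan is to apply the explicit formula for $\E[\Delta(Y_d)]$ from \cref{lem:explicit-delta}, substitute the asymptotic $\mu = (1+o(1)) n \mu_1$ from \cref{lem:r3-gen-mu}, and show that both the non-summation term \eqref{eq:r1-non-summation}-analogue and the summation term are individually $o(1/p_n)$. The hypothesis $p_n = o(1/(n \ln n))$ is equivalent to $n p_n \ln n = o(1)$, and this is the quantitative slack that ultimately makes the bound work; every leading term will end up carrying a factor of this form.

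For the non-summation term, I would use the refined estimate of $\mu$ implicit in the proof of \cref{lem:r3-gen-mu}: writing $\epsilon := 1 - \mu/(n \mu_1)$, one reads off $\epsilon = (1+o(1)) n p_n H_{n-1}$, which is $o(1)$ under our hypothesis. The bracketed expression in \cref{lem:explicit-delta} can be rewritten as
\begin{align*}
n - 1 + \frac{1}{p_n}\left(1 - \frac{\mu}{n\mu_1}\right) - \frac{\mu}{\mu_1} H_{n-1}
= n - 1 + \frac{\epsilon}{p_n} - n(1-\epsilon) H_{n-1},
\end{align*}
and since $\epsilon/p_n = (1+o(1)) n H_{n-1}$, the $H_{n-1}$ contributions cancel up to an $O(n \epsilon \ln n)$ remainder, leaving the bracket at $O(n)$. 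Multiplying by $p_1^n = 1+o(1)$ and by $\epsilon$ gives a non-summation contribution of order $n \epsilon = O(n^2 p_n \ln n)$, and then $n^2 p_n \ln n \cdot p_n = (n p_n)(n p_n \ln n) = o(1)$, so the term is $o(1/p_n)$.

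For the summation term, I would expand
\begin{align*}
\left(1 - \frac{\mu}{i \mu_1}\right)\left(i - \frac{\mu}{\mu_1} H_i\right)
= i - \frac{\mu}{\mu_1}(1 + H_i) + \frac{\mu^2}{\mu_1^2}\frac{H_i}{i},
\end{align*}
bound $p_1^i = 1 + o(1)$ uniformly over $i \le n$ (since $n p_n = o(1)$), and invoke the standard harmonic-sum identities $\sum_{i=1}^{n-1} i = n^2/2 + O(n)$, $\sum_{i=1}^{n-1} H_i = n \ln n + O(n)$, and $\sum_{i=1}^{n-1} H_i/i = \tfrac12 (\ln n)^2 + O(\ln n)$. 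With $\mu/\mu_1 = (1+o(1)) n$, substitution gives
\begin{align*}
p_n \sum_{i=1}^{n-1} p_1^i (1 - \tfrac{\mu}{i \mu_1})(i - \tfrac{\mu}{\mu_1} H_i)
= p_n \left[\tfrac{n^2}{2} - n^2(1 + \ln n) + \tfrac12 n^2 (\ln n)^2\right] + \text{lower order},
\end{align*}
and the dominant piece $\tfrac12 n^2 (\ln n)^2 p_n$ equals $o(1/p_n)$ exactly because $(n p_n \ln n)^2 = o(1)$.

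The main obstacle will be that several intermediate pieces are individually of size $\theta(n^2 p_n \ln n)$ or even naively $\theta(n/p_n)$, and only after careful cancellation do they combine into the single manageable leading term $\tfrac12 n^2 p_n (\ln n)^2$. The cleanest way to manage this is to keep $\epsilon = 1 - \mu/(n \mu_1)$ as an abstract small parameter throughout, expand consistently to leading order in $\epsilon$ and in $1/n$, and only at the very end invoke the numerical asymptotic $\epsilon \sim n p_n \ln n$, rather than substituting numerical asymptotics prematurely and losing track of which terms cancel.
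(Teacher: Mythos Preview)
Your approach is correct in spirit and reaches the right conclusion, but it is considerably more elaborate than what the lemma requires, and the paper's proof is much shorter.

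The paper does not track leading-order terms or invoke any cancellation. For the non-summation piece it simply groups the two large $1/p_n$ terms inside the bracket as $\frac{1}{p_n}(1-\mu/(n\mu_1)) = o(1/p_n)$, observes the remaining bracket entries are $O(n\ln n) = o(1/p_n)$, and multiplies by the prefactor $p_1^n(1-\mu/(n\mu_1)) = o(1)$. For the summation it uses the crude pointwise bounds $1-\mu/(i\mu_1) = O(n/i)$ and $i-(\mu/\mu_1)H_i = O(n\ln n)$, sums to $O(p_n n^2 \ln^2 n)$, and notes $(np_n\ln n)^2 = o(1)$. No expansion of the product, no harmonic-sum identities, no refined estimate of $\epsilon$ are needed.

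Your route---extracting $\epsilon \sim np_nH_{n-1}$, expanding the product $(1-\mu/(i\mu_1))(i-(\mu/\mu_1)H_i)$, and identifying $\tfrac12 n^2 p_n\ln^2 n$ as the dominant piece---is essentially the machinery used in \cref{sec:r3-spec} for the sharper polynomial-scaling result, imported into a setting where only an upper bound is asked for. It works, but the ``cancellation'' you worry about is a red herring here: each of your intermediate pieces ($p_n n^2/2$, $p_n n^2\ln n$, $\tfrac12 p_n n^2\ln^2 n$) is already $o(1/p_n)$ on its own under the hypothesis $np_n\ln n = o(1)$, so no cancellation is needed to close the argument. One small imprecision: your claim that the bracket in the non-summation term is $O(n)$ relies on $n\epsilon\ln n = O(n)$, i.e.\ $np_n\ln^2 n = O(1)$, which is not implied by $np_n\ln n = o(1)$; the bracket is really $O(n) + O(n^2 p_n\ln^2 n)$, but after multiplying by $\epsilon$ this still gives $o(1/p_n)$, so your final bound survives.
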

\begin{proof}
    We start with the expression of $\E[\Delta(Y_d)]$ from \cref{lem:explicit-delta}, which we separate into two terms.
    \begin{align}
    \label{eq:r3-gen-nonsummation}
    \E[\Delta(Y_d)] &=  p_1^n (1 - \frac{\mu}{n \mu_1}) ( n-1 + \frac{1}{p_n} - \frac{\mu}{\mu_1} H_{n-1} - \frac{\mu}{n p_n \mu_1}) \\
    \label{eq:r3-gen-summation}
    &+  \sum_{i=1}^{n-1} p_1^i p_n(1 - \frac{\mu}{i \mu_1}) (i - \frac{\mu}{\mu_1} H_i).
\end{align}

First consider \eqref{eq:r3-gen-nonsummation}.
Note that by \cref{lem:r3-gen-mu}, in the setting where $p_n = o(1/(n \ln n))$,
we have $\mu = (1+o(1)) n \mu_1$, and hence $1-\frac{\mu}{n \mu_1} = o(1)$.
Furthermore, $p_1^n \le 1$. Thus,
\begin{align*}
    &p_1^n (1 - \frac{\mu}{n \mu_1}) ( n-1 + \frac{1}{p_n} - \frac{\mu}{\mu_1} H_{n-1} - \frac{\mu}{n p_n \mu_1}) \\
    &= o(1) ( n-1 + \frac{1}{p_n} - \frac{\mu}{\mu_1} H_{n-1} - \frac{\mu}{n p_n \mu_1}).
\end{align*}
Note that $\frac{1}{p_n} = \omega(n \ln n)$, so the dominant terms are the two terms involving $\frac{1}{p_n}$. 
\begin{align*}
    ( n-1 + \frac{1}{p_n} - \frac{\mu}{\mu_1} H_{n-1} - \frac{\mu}{n p_n \mu_1}) 
    &=\frac{1 - \frac{\mu}{n \mu_1}}{p_n} + O(n \ln n) = o(1/p_n).
\end{align*}
Note that we only derive an upper bound on the growth rate of \eqref{eq:r3-gen-nonsummation}, namely $o(1/p_n)$. We do not have any lower bound on the growth rate of this term, due to the cancelation of the large quantities $1/p_n$ and $\mu/n p_n \mu_1$. Since these terms are of the same magnitude,  we can only upper bound the result.

Now, consider \eqref{eq:r3-gen-summation}.
Note that $p_1^i = 1+o(1)$ for any $i \le n$ and $-\frac{n}{i} \le 1-\frac{\mu}{i \mu_1} \le 1$.
Specifically, $1-\frac{\mu}{i \mu_1} = O(n/i)$.
Furthermore, $-n \ln i \le i - \frac{\mu}{\mu_1} h_i \le i$.
This gives the bound $i - \frac{\mu}{\mu_1} H_i = O(n \ln n).$

Thus, substituting these bounds into \eqref{eq:r3-gen-summation}, we find that
\begin{align*}
    &\sum_{i=1}^{n-1} p_1^i p_n(1 - \frac{\mu}{i \mu_1}) (i - \frac{\mu}{\mu_1} H_i)
    \le O(p_n) \sum_{i=1}^{n-1} \frac{n^2 \ln n}{i} \\
    =& O(p_n) n^2 \ln nH_{n-1}
    = O(p_n n^2 \ln^2 n)
    = o(n \ln n)
    = o(1/p_n).
    \qedhere
\end{align*}
\end{proof}

\subsection{Regime 3, specific setting:  Polynomial scaling $p_n = 1/n^\alpha, \alpha > 1$}
\label{sec:r3-spec}

In this section, we focus on the specific setting of polynomial $p_n$ scaling where $p_n = 1/n^\alpha, \alpha > 1$.
This allows us to obtain much stronger results. To prove \cref{thm:1-server-dominated}, we start by characterizing the asymptotic behavior of throughput $\mu$ in \cref{lem:r3-spec-mu}:
\begin{lemma}
    \label{lem:r3-spec-mu}
    If $p_n = 1/n^\alpha$ for $\alpha > 1$, then
    \begin{align*}
        \mu = n \mu_1 - \mu_1 n^{2 - \alpha} \ln n
    - (1 + o(1)) \mu_1 (\frac{\mu_1}{\mu_n} - 1) n^{2 - \alpha}.
    \end{align*}
\end{lemma}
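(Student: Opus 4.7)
The plan is to refine the bound $\mu = (1+o(1))n\mu_1$ from \cref{lem:r3-gen-mu} by computing
\[
T := \frac{1}{\mu_n} + \frac{p_1^n}{n p_n \mu_1} + \sum_{b=1}^{n-1} \frac{p_1^b}{b\mu_1}
\]
from \cref{lem:explicit-mu} to $o(1)$ precision. Since $\mu = n^\alpha/T$ and $T \sim n^{\alpha-1}/\mu_1$, each $O(1)$ perturbation of $T$ yields an $O(n^{2-\alpha})$ perturbation of $\mu$ --- exactly the order of the claimed correction --- so every $O(1)$ contribution to $T$ has to be pinned down.

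The key Taylor expansion is $p_1^n = (1-n^{-\alpha})^n = 1 - n^{1-\alpha} + O(n^{2-2\alpha})$, which uses $\alpha > 1$ to make $n^{1-\alpha} \to 0$. Plugging into the middle term of $T$ gives
\[
\frac{p_1^n}{n p_n \mu_1} = \frac{n^{\alpha-1}}{\mu_1} - \frac{1}{\mu_1} + o(1),
\]
where the $-1/\mu_1$ correction will combine with the $1/\mu_n$ term of $T$ to produce the constant $(\mu_1/\mu_n - 1)/\mu_1$. For the sum, $bp_n \le n^{1-\alpha} \to 0$ uniformly in $b \le n-1$, so the uniform Taylor bound $p_1^b - 1 = b\ln(1-p_n) + O(b^2 p_n^2)$ gives $\sum_{b=1}^{n-1}(p_1^b-1)/b = -(n-1)p_n(1 + O(p_n)) + O(n^2 p_n^2) = o(1)$, and therefore $\sum_{b=1}^{n-1} p_1^b/(b\mu_1) = (H_{n-1} + o(1))/\mu_1 = (\ln n + O(1))/\mu_1$. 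Assembling the three pieces,
\[
\mu_1 T = n^{\alpha-1} + \ln n + \frac{\mu_1}{\mu_n} - 1 + o(1),
\]
and a geometric-series inversion $n\mu_1/(1+u)$ with $u := (\ln n + \mu_1/\mu_n - 1)/n^{\alpha-1} \to 0$ delivers the stated expression for $\mu$, the quadratic $O(u^2)$ remainder being absorbed into $o(n^{2-\alpha})$.

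The main obstacle is the bookkeeping at $O(1)$ precision across three separate contributions to $T$: the $-1/\mu_1$ from the Taylor expansion of $p_1^n$, the harmonic-sum residue $H_{n-1} - \ln n$, and the higher-order Taylor corrections to $p_1^b$ each have to be shown to land at the correct order. The coarser bounds used in the proof of \cref{lem:r3-gen-mu} are not sharp enough, because there $p_1^n$ was treated as $1+o(1)$ and $p_1^b$ as $1+o(1)$ without tracking constant-level corrections. The constraint $\alpha > 1$ enters quantitatively at every step by ensuring that $n^{1-\alpha}$, $n^{2-2\alpha}$, and $n^2 p_n^2$ all vanish asymptotically; at $\alpha = 1$ the expansions of $p_1^n$ and the sum would both fail to collapse to $O(1)$ remainders, which is why the balanced-regime behavior in \cref{thm:balanced} is qualitatively different.
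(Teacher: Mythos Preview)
Your approach is essentially identical to the paper's: both start from the explicit formula in \cref{lem:explicit-mu}, Taylor-expand $p_1^n=(1-n^{-\alpha})^n=1-n^{1-\alpha}+O(n^{2-2\alpha})$, reduce the sum $\sum_{b=1}^{n-1}p_1^b/b$ to $H_{n-1}+o(1)$, and then invert $n\mu_1/(1+u)$ with $u=O(n^{1-\alpha}\ln n)$. The paper organizes the computation around $n\mu_1/\mu-1$ while you work with $T$ directly, but since $n\mu_1/\mu-1=n^{1-\alpha}\mu_1 T-1$ these are the same calculation.

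There is one bookkeeping slip worth flagging. You correctly write $\sum_{b=1}^{n-1}p_1^b/(b\mu_1)=(H_{n-1}+o(1))/\mu_1=(\ln n+O(1))/\mu_1$, but in the next line you assemble $\mu_1 T=n^{\alpha-1}+\ln n+\mu_1/\mu_n-1+o(1)$, silently downgrading that $O(1)$ to $o(1)$. In fact $H_{n-1}=\ln n+\gamma+o(1)$, so the Euler--Mascheroni constant $\gamma$ should survive into the $n^{2-\alpha}$ coefficient. The paper's own proof makes the same slip (it writes $n^{1-\alpha}\sum_b 1/b=n^{1-\alpha}\ln n+O(n^{2-2\alpha}\ln n)$, which loses the $\gamma n^{1-\alpha}$ term), and the lemma statement as written omits $\gamma$ as well --- so your argument faithfully reproduces the paper's computation, but strictly the constant in the $n^{2-\alpha}$ term should read $\mu_1/\mu_n-1+\gamma$ rather than $\mu_1/\mu_n-1$.
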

\begin{proof}

Consider the general expression for $\mu$ from \cref{lem:explicit-mu}:
\begin{align*}
        \mu &= \frac{1}{p_n} \left( \frac{1}{\mu_n} + \frac{1}{n\mu_1} \frac{p_1^n}{p_n} + \sum_{b=1}^{n-1} \frac{p_1^b}{b \mu_1}\right)^{-1}.
\end{align*}

From \cref{lem:r3-gen-mu}, we know that $\mu = (1+o(1)) n \mu_1$.
Note that for sufficiently large $n$, $\mu_n < n \mu_1$, and as a result $\mu < n \mu_1$, because the completion rate $\mu_i$ in state $i$ is at most $n \mu_1$ for all states $i$, for all $n \ge \frac{\mu_n}{\mu_1}$.

We therefore examine $\frac{n \mu_1}{\mu} - 1$:
\begin{align}
    \nonumber
    \frac{n \mu_1}{\mu} - 1 &= \left(n \mu_1 p_n \left( \frac{1}{\mu_n} + \frac{1}{n\mu_1} \frac{p_1^n}{p_n} + \sum_{b=1}^{n-1} \frac{p_1^b}{b \mu_1}\right) - 1\right) \\
    \nonumber
    &= \frac{n \mu_1 p_n}{\mu_n} + \frac{n \mu_1 p_n}{n\mu_1} \frac{p_1^n}{p_n} + n \mu_1 p_n \sum_{b=1}^{n-1} \frac{p_1^b}{b \mu_1} - 1 \\
    \label{eq:r3-spec-mu-first}
    &= \frac{n^{1-\alpha} \mu_1}{\mu_n} + p_1^n - 1 + n^{1-\alpha} \sum_{b=1}^{n-1} \frac{p_1^b}{b}.
\end{align}

We next expand $p_1^n = (1- n^{-\alpha})^n$ as
\begin{align*}
    (1- n^{-\alpha})^n = 1 - n n^{-\alpha} + \frac{n(n-1)}{2} n^{-2\alpha} \ldots
\end{align*}

Because $\alpha > 1$, we can summarize this expansion as:
\begin{align*}
    p_1^n = 1 - n^{1-\alpha} + O(n^{2-2\alpha}).
\end{align*}

Substituting the asymptotic behavior of $p_1^n$ into \eqref{eq:r3-spec-mu-first}, we find that
\begin{align*}
    \frac{n^{1-\alpha} \mu_1}{\mu_n} + p_1^n - 1 + n^{1-\alpha} \sum_{b=1}^{n-1} \frac{p_1^b}{b}
    = \frac{n^{1-\alpha} \mu_1}{\mu_n} - n^{1-\alpha} + O(n^{2-2\alpha})
    + n^{1-\alpha} \sum_{b=1}^{n-1} \frac{p_1^b}{b}.
\end{align*}

Note also that $p_1^b \in [p_1^n, 1]$ for all $b \le n$.
In particular, $p_1^b = 1+ O(n^{1-\alpha})$ (uniformly) for all $b \le n$.

We can therefore simplify the summation as

\begin{align*}
    n^{1-\alpha} \sum_{b=1}^{n-1} \frac{p_1^b}{b} = n^{1-\alpha} (1 + O(n^{1-\alpha})) \sum_{b=1}^{n-1} \frac{1}{b} = n^{1-\alpha}\ln n + O(n^{2-2\alpha} \ln n)).
\end{align*}

Putting it together, we find that
\begin{align}
    \nonumber
    \frac{n \mu_1}{\mu} - 1 &= \frac{n^{1-\alpha} \mu_1}{\mu_n} - n^{1-\alpha} + O(n^{2-2\alpha})
    + n^{1-\alpha}\ln n + O(n^{2-2\alpha})) \\
    \label{eq:r3-spec-mu-2}
    &= n^{1-\alpha} \left(\ln n + \frac{\mu_1}{\mu_n} - 1\right) +  O(n^{2-2\alpha} \ln n).
\end{align}

Let $x$ denote the growth rate of the final expression \eqref{eq:r3-spec-mu-2}, and use it to solve for $\mu$:

\begin{align*}
    \frac{n \mu_1}{\mu} - 1 &= x \implies
    n \mu_1 - \mu = x \mu \implies
    n \mu_1 = (1 + x) \mu \implies
    \frac{n \mu_1}{1 + x} = \mu.
\end{align*}

Note that for any $x'$ near zero, $\frac{1}{1+x'} = 1 - x' + O(x'^2)$. Furthermore, since $\alpha > 1$, $x$ is in fact near 0. Thus, we have

\begin{align*}
    \mu = \frac{n \mu_1}{1 + x} &= n \mu_1 (1 - x + O(x^2)) \\
    &=n \mu_1 - n \mu_1  \left(n^{1-\alpha} \left(\ln n + \frac{\mu_1}{\mu_n} - 1\right) +  O(n^{2-2\alpha} \ln n) + O(n^{2-2\alpha} \ln^2 n) \right) \\
    &= n \mu_1 - \mu_1 n^{2-\alpha} \ln n - \mu_1\left(\frac{\mu_1}{\mu_n} - 1\right) n^{2 - \alpha} + O(n^{3-2\alpha}\ln^2 n).
\end{align*}

Since $\alpha > 1$, we have $3 - 2\alpha < 2 - \alpha$, so our final expression matches our theorem statement with a more specific error bound.
\end{proof}

We are now ready to prove the remainder of \cref{thm:1-server-dominated} by characterizing the asymptotic mean queue length $\E[\Delta(Y_d)]$.

\begin{theorem}
    \label{thm:r3-spec-delta}
    If $p_n = 1/n^\alpha$ for $\alpha > 1$, then
    \begin{align*}
        \E[\Delta(Y_d)] = (1+o(1))\frac{1}{2} n^{2 - \alpha} \ln^2 n.
    \end{align*}
\end{theorem}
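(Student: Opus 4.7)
The plan is to follow the same strategy as in Theorems \ref{thm:r1-delta} and \ref{thm:r2-delta}: start from the explicit formula for $\E[\Delta(Y_d)]$ given in Lemma \ref{lem:explicit-delta}, split it into a non-summation piece and a summation piece, then expand the summation into four subterms by distributing the product $(1-\frac{\mu}{i\mu_1})(i-\frac{\mu}{\mu_1} H_i)$. The key input throughout is the sharp characterization of $\mu$ from Lemma \ref{lem:r3-spec-mu}, and two uniform approximations that are specific to this regime: $p_1^i = 1 + O(n^{1-\alpha})$ for all $i \le n$, and $\mu/\mu_1 = n(1+o(1))$.

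First I would show that the non-summation term \eqref{eq:r3-gen-nonsummation} contributes only $O(n^{2-\alpha}\ln n) = o(n^{2-\alpha}\ln^2 n)$. This requires tracking cancellations in the bracketed factor $n - 1 + \frac{1}{p_n} - \frac{\mu}{\mu_1}H_{n-1} - \frac{\mu}{n p_n \mu_1}$. Using Lemma \ref{lem:r3-spec-mu} to substitute $\mu/\mu_1 = n - n^{2-\alpha}\ln n - (\mu_1/\mu_n - 1)n^{2-\alpha} + O(n^{3-2\alpha}\ln^2 n)$, the $\frac{1}{p_n} = n^\alpha$ cancels with the leading $n^\alpha$ in $\mu/(n p_n \mu_1)$, and the $n\ln n$ piece of $\mu H_{n-1}/\mu_1$ cancels with the corresponding piece of $\mu/(n p_n \mu_1)$, leaving a factor of order $n$. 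Multiplying by $p_1^n = 1+o(1)$ and $1 - \mu/(n\mu_1) = (1+o(1))n^{1-\alpha}\ln n$ gives the claimed $\theta(n^{2-\alpha}\ln n)$ bound.

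Next I would expand the summation \eqref{eq:r3-gen-summation} as
\begin{align*}
p_n\sum_{i=1}^{n-1} p_1^i\, i
\;-\;\frac{p_n \mu}{\mu_1}\sum_{i=1}^{n-1} p_1^i H_i
\;-\;\frac{p_n \mu}{\mu_1}\sum_{i=1}^{n-1} p_1^i
\;+\;\frac{p_n \mu^2}{\mu_1^2}\sum_{i=1}^{n-1} \frac{p_1^i H_i}{i},
\end{align*}
and verify that only the fourth subterm contributes to leading order. Using $p_1^i = 1+O(n^{1-\alpha})$ uniformly and $\mu/\mu_1 = n(1+o(1))$, the first subterm is $\sim \tfrac{1}{2} n^{2-\alpha}$, the second is $\sim - n^{2-\alpha}\ln n$, the third is $\sim - n^{2-\alpha}$, all of which are $o(n^{2-\alpha}\ln^2 n)$. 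For the dominant fourth subterm, $p_n \mu^2/\mu_1^2 = (1+o(1))n^{2-\alpha}$, and an Euler--Maclaurin/integral estimate gives
\begin{align*}
\sum_{i=1}^{n-1}\frac{p_1^i H_i}{i} = (1+o(1))\sum_{i=1}^{n-1}\frac{H_i}{i} = (1+o(1))\int_1^n \frac{\ln x}{x}\, dx + O(\ln n) = (1+o(1))\tfrac{1}{2}\ln^2 n.
\end{align*}
Combining yields $(1+o(1))\tfrac{1}{2}n^{2-\alpha}\ln^2 n$, as required.

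The main obstacle is the non-summation term: it contains two quantities of order $n^\alpha$ and two of order $n\ln n$ that must cancel simultaneously to an accuracy below $o(n^{2-\alpha}\ln^2 n)$. This is why the sharp expansion of $\mu$ in Lemma \ref{lem:r3-spec-mu}, including the $(\mu_1/\mu_n - 1) n^{2-\alpha}$ correction, is needed here rather than just $\mu = (1+o(1)) n \mu_1$. By contrast, the summation analysis is cleaner because the $(\mu/\mu_1)^2$ prefactor in the fourth subterm clearly dominates the linear-in-$\mu/\mu_1$ prefactors in the second and third subterms, so routine bounds using $p_1^i = 1+O(n^{1-\alpha})$ suffice.
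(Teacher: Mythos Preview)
Your proposal is correct and follows essentially the same approach as the paper: the same five-term decomposition (non-summation piece plus the four subterms from expanding the product), the same identification of the $p_1^i H_i/i$ subterm as the unique leading contribution $(1+o(1))\tfrac{1}{2}n^{2-\alpha}\ln^2 n$, and the same use of Lemma~\ref{lem:r3-spec-mu}'s sharp expansion of $\mu$ to handle the cancellations in the non-summation term. Your observation that the non-summation term is the delicate step, requiring the second-order correction $(\mu_1/\mu_n-1)n^{2-\alpha}$ in $\mu$, matches the paper's analysis exactly; the paper computes that term more explicitly as $(1+o(1))(\mu_1/\mu_n-\gamma)n^{2-\alpha}\ln n$, but your $\theta(n^{2-\alpha}\ln n)$ bound suffices.
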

\begin{proof}

We start by splitting up $\E[\Delta(Y_d)]$ into five terms, as before:
\begin{align}
    \E[\Delta(Y_d)] &=
    \label{eq:r3-non-summation}
    p_1^n (1 - \frac{\mu}{n \mu_1}) ( n-1 + \frac{1}{p_n} - \frac{\mu}{\mu_1} H_{n-1} - \frac{\mu}{n p_n \mu_1}) \\
    \label{eq:r3-p1i-i}
    + &p_n \sum_{i=1}^{n-1} p_1^i i \\
    \label{eq:r3-p1i}
    - &\frac{\mu p_n}{\mu_1} \sum_{i=1}^{n-1} p_1^i \\
    \label{eq:r3-p1i-hi}
    - &\frac{\mu p_n}{\mu_1} \sum_{i=1}^{n-1} p_1^i H_i \\
    \label{eq:r3-p1i-hi-i}
    + &\frac{\mu^2 p_n}{\mu_1^2} \sum_{i=1}^{n-1} p_1^i \frac{1}{i} H_i
\end{align}
We will demonstrate the following asymptotic growth rates for each of these terms:
\begin{enumerate}
    \item Non-summation term \eqref{eq:r3-non-summation}: $(1+o(1)) (\frac{\mu_1}{\mu_n} - \gamma) n^{2-\alpha} \ln n$.
    \label{it:r3-1}
    \item $p_1^i i$ term \eqref{eq:r3-p1i-i}: $O(n^{2-\alpha})$.
    \label{it:r3-2}
    \item $p_1^i$ term \eqref{eq:r3-p1i}: $- (1+o(1)) p_1  n^{2 - \alpha}$.
    \label{it:r3-3}
    \item $p_1^i H_i$ term \eqref{eq:r3-p1i-hi}: $-n^{2-\alpha} \ln n + O(n^{2-\alpha})$.
    \label{it:r3-4}
    \item $p_1^i H_i / i$ term \eqref{eq:r3-p1i-hi-i}: $\frac{1}{2}n^{2-\alpha}\ln^2 n + O(n^{2-\alpha}\ln n)$.
    \label{it:r3-5}
\end{enumerate}
Note that only \cref{it:r3-1}, the asymptotic growth rate for \eqref{eq:r3-p1i-hi-i}, is non-negligible. As a result, these growth rates sum to $\frac{1}{2}n^{2-\alpha}\ln^2 + O(n^{2-\alpha} \ln n)$, as desired.

We start with the non-summation term \eqref{eq:r3-non-summation}:
\begin{align*}
    p_1^n (1 - \frac{\mu}{n \mu_1}) ( n-1 + \frac{1}{p_n} - \frac{\mu}{\mu_1} H_{n-1} - \frac{\mu}{n p_n \mu_1}).
\end{align*}

We will consider the three multiplicative terms one by one, left to right.
First, note that $p_1^n = 1+o(1)$.
Then, we expand $1-\frac{\mu}{n \mu_1}$. Using \cref{lem:r3-spec-mu}:
\begin{align*}
    1 - \frac{\mu}{n \mu_1} = \frac{n \mu_1 - \mu}{n \mu_1} = n^{1-\alpha}\ln n + (1+o(1)) (\frac{\mu_1}{\mu_n} -1) n^{1-\alpha}.
\end{align*}

Next expand the final (rightmost) parenthesis of \eqref{eq:r3-non-summation}. Again using \cref{lem:r3-spec-mu}, we have:
\begin{align}
    \nonumber
    &n-1 + \frac{1}{p_n} - \frac{\mu}{\mu_1} H_{n-1} - \frac{\mu}{n p_n \mu_1} 
    = n-1 - \frac{\mu}{\mu_1} H_{n-1} + n^\alpha (1-\frac{\mu}{n \mu_1}) \\
    \nonumber
    =& n-1 - \frac{\mu}{\mu_1} H_{n-1} + n \ln n + (1+o(1)) n (\frac{\mu_1}{\mu_n} -1)\\
    \label{eq:rightmost}
    =& n-1 - (n - \frac{\ln n}{n^{2-\alpha }}
    - (1 + o(1)) (\frac{\mu_1}{\mu_n} - 1)) n^{2 - \alpha}) H_{n-1} + n \ln n + (1+o(1)) n (\frac{\mu_1}{\mu_n} -1).
\end{align}
Note that  $2 - \alpha < 1$. Let $\epsilon > 0$ be a constant such that $2 - \alpha< 1-\epsilon < 1$.
As a result, we can find the asymptotic growth rate of \eqref{eq:rightmost} as follows:
\begin{align*}
    & n-1 - (n - n^{2 - \alpha} \ln n
    - (1 + o(1)) (\frac{\mu_1}{\mu_n} - 1)) n^{2 - \alpha}) H_{n-1} + n \ln n + (1+o(1)) n (\frac{\mu_1}{\mu_n} -1)\\
    =& n-1 - (n - O(n^{1-\epsilon})) H_{n-1} + n \ln n + (1+o(1)) n (\frac{\mu_1}{\mu_n} -1)\\
    =& n-1 - (n - O(n^{1-\epsilon})) (\ln n + \gamma + O(1/n)) + n \ln n + (1+o(1)) n (\frac{\mu_1}{\mu_n} -1))\\
    =& n-1 + O(n^{1-\epsilon}) (\ln n + \gamma + O(1/n)) - n (\gamma + O(1/n))+ (1+o(1)) n (\frac{\mu_1}{\mu_n} -1)\\
    =& n(\frac{\mu_1}{\mu_n} - \gamma) +o(n).
\end{align*}

Combining these results, we find that the overall growth rate of $\eqref{eq:r3-non-summation}$ is
\begin{align*}
    (1+o(1)) (\frac{\mu_1}{\mu_n} - \gamma) n^{2-\alpha} \ln n,
\end{align*}
confirming \cref{it:r3-1}.

We next consider \eqref{eq:r3-p1i-i} which can be quantified exactly as:
\begin{align*}
    \sum_{i=1}^{n-1} p_1^i p_n i &= p_n \frac{p_1 - n p_1^n + (n-1) p_1^{1 + n}}{(1 - p_1)^2} = \frac{p_1 - n p_1^n +n p_1 p_1^n - p_1 p_1^n}{p_n} \\
    &= \frac{p_1 ( 1-p_1^n) - n p_1^n (1-p_1)}{p_n} = \frac{p_1 ( 1-p_1^n) - n p_1^n (1-p_1)}{p_n}.
\end{align*}
We now expand $p_1^n = (1-p_n)^n$ using a binomial expansion, and recalling that $n p_n = o(1)$,
\begin{align*}
    &\frac{p_1 ( 1-p_1^n) - n p_1^n (1-p_1)}{p_n}
    = \frac{p_1 (n p_n + O(n^2 p_n^2)) - n p_n (1 - n p_n + O(n^2 p_n^2))}{p_n} \\
    &= \frac{-(1-p_1) n p_n + O(n^2 p_n^2)}{p_n}
    = \frac{-n p_n^2 + O(n^2 p_n^2)}{p_n}
    = \frac{O(n^2 p_n^2)}{p_n}
    = O(n^2 p_n)
    = O(n^{2 - \alpha}),
\end{align*}
confirming \cref{it:r3-2}.

Next consider \eqref{eq:r3-p1i} which again has an exact expression
\begin{align*}
    -\sum_{i=1}^{n-1} p_1^i p_n \frac{\mu}{\mu_1}
    &= - \frac{\mu}{\mu_1} p_n \frac{p_1-p_1^n}{1-p_1} = -\frac{\mu}{\mu_1}(p_1 - p_1^n) = -\frac{\mu}{\mu_1} p_1(1-p_1^{n-1}) \\
    &= - \frac{\mu}{\mu_1}p_1 (1+o(1)) n^{1-\alpha} = - p_1 (1+o(1)) n^{2 - \alpha},
\end{align*}
confirming \cref{it:r3-3}.

We next consider \eqref{eq:r3-p1i-hi}:
\begin{align}
    \nonumber
    -\sum_{i=1}^{n-1} p_1^i p_n \frac{\mu}{\mu_1} H_i
    &= -\sum_{i=1}^{n-1} p_1^i p_n \frac{\mu}{\mu_1} \ln i + O(n^{2-\alpha})\\
    \nonumber
    &= -p_n \frac{\mu}{\mu_1} \int_{i=1}^n p_1^i  \ln i di + O(n^{2-\alpha}) \\
    \label{eq:r3-p1i-hi-first}
    &= -p_n \frac{\mu}{\mu_1} \int_{i=1}^n (1-n^{-\alpha})^i  \ln i di + O(n^{2-\alpha}).
\end{align}

As usual, we convert the $H_i$ to $\ln i$ and the summation to an integral.
To bound the change caused by switching from $H_i$ to $\ln i$, note that $|H_i - \ln i| \le 1$, which gives rise to an error bounded by \eqref{eq:r3-p1i}, which is negligible compared to the growth rate we will establish for this term.

The change caused by switching from the summation to the integral is similarly bounded using the fact that $\ln i - \ln (i-1) \le 1$  (which again introduces an error bounded by \eqref{eq:r3-p1i}).

Now, we upper and lower bound the integral as follows:
\begin{align*}
     \int_{i=1}^n p_1^i  \ln i di &\le \int_{i=1}^n \ln i di = n \ln n - n +1,\\
     \int_{i=1}^n p_1^i \ln i di &\ge \int_{i=1}^n p_1^n \ln i di = (n \ln n - n +1) p_1^n.
\end{align*}
Using a binomial expansion of $p_1^n$, we have
\begin{align*}
    p_1^n = (1 - n^{-\alpha})^n
    &= 1 - n n^{-\alpha} + \frac{n(n-1)}{2} n^{-2\alpha} - \ldots = 1 - O(n^{1-\alpha}).
\end{align*}

We can therefore analyze the bound above as:
\begin{align*}
    (n \ln n - n +1) p_1^n &= (n \ln n - n +1) (1- O(n^{1-\alpha}))  \\
    &= n \ln n - n - O(n^{2-\alpha} \ln n) = n \ln n - O(n) \\
    \implies \int_{i=1}^n p_1^i \ln i di &= n \ln n - O(n).
\end{align*}

Substituting the integral into \eqref{eq:r3-p1i-hi-first}, multiplying by $-p_n \mu/\mu_1 = n^{1-\alpha} + O(n^{2-2\alpha} \ln n)$, and 
using \cref{lem:r3-spec-mu}, we get
\begin{align*}
    -\sum_{i=1}^{n-1} p_1^i p_n \frac{\mu}{\mu_1} H_i = -n^{2-\alpha} \ln n + O(n^{2-\alpha}),
\end{align*}
which matches \cref{it:r3-4}, the asymptotic growth rate for \eqref{eq:r3-p1i-hi}.

Finally, consider \eqref{eq:r3-p1i-hi-i}. As usual, we convert the $H_i$ to $\ln i$ and  the summation to an integral:
\begin{align}
    \label{eq:r3-p1i-hi-i-first}
    \sum_{i=1}^{n-1} p_1^i p_n \frac{\mu^2}{i \mu_1^2} H_i = p_n \frac{\mu^2}{\mu_1^2} \int_{i=1}^n p_1^i \frac{\ln i}{i} di + O(n^{2-\alpha} \ln n).
\end{align}
To justify the error term, note that both conversions introduce an error in the order of
\begin{align*}
    p_n \frac{\mu^2}{\mu_1} \sum_{i=1}^{n-1} p_1^i \frac{1}{i}.
\end{align*}

Making use of our bound from \cref{lem:r3-spec-mu}, which states that $\mu = n \mu_1 + O(n^{1-\alpha} \ln)$,
and the earlier shown result $p_1^i = 1 + O(n^{1-\alpha})$,
we have
\begin{align*}
    p_n \frac{\mu^2}{\mu_1} \sum_{i=1}^{n-1} p_1^i \frac{1}{i} = O(n^{2-\alpha} \ln n),
\end{align*}
which is the error term we used in \eqref{eq:r3-p1i-hi-i-first}.

We upper and lower bound the integral from \eqref{eq:r3-p1i-hi-i-first} as follows:
\begin{align}
    \label{eq:r3-p1i-hi-i-lower}
    \int_{i=1}^n p_1^i \frac{\ln i}{i} di &\le \int_{i=1}^n \frac{\ln i}{i} di = \frac{\ln^2 n}{2}, \\
    \nonumber
    \int_{i=1}^n p_1^i \frac{\ln i}{i} di &\ge \int_{i=1}^n p_1^n \frac{\ln i}{i} di = p_1^n
    \frac{\ln^2 n}{2}.
\end{align}

Using our prior result $p_1^n = 1 + O(n^{1-\alpha})$,
we get
\begin{align*}
    \int_{i=1}^n p_1^i \frac{\ln i}{i} di &\ge ( 1 + O(n^{1-\alpha})) \frac{\ln^2 n}{2}
    = \frac{\ln^2 n}{2} + O(n^{1-\alpha}\frac{\ln^2 n}{2}).
\end{align*}
Combining this upper bound with our lower bound in \eqref{eq:r3-p1i-hi-i-lower}, we obtain
\begin{align}    
    \label{eq:r3-p1i-hi-i-middle}
    \int_{=1}^n p_1^i \frac{\ln i}{i} di &= \frac{\ln^2 n}{2} + O(n^{1-\alpha}\frac{\ln^2 n}{2}).
\end{align}

We now need to incorporate the $p_n \frac{\mu^2}{\mu_1^2}$ term.
From \cref{lem:r3-spec-mu}, we have $\frac{\mu}{\mu_1} = n + O(n^{2-\alpha} \ln n)$. Then
\begin{align}
    \label{eq:r3-p1i-hi-i-last}
    p_n \frac{\mu^2}{\mu_1^2} = n^{-\alpha} (n + O(n^{2-\alpha} \ln n))^2
    = n^{-\alpha}(n^2 + O(n^{3-\alpha} \ln n))
    = n^{2-\alpha} + O(n^{3 - 2 \alpha} \ln n).
\end{align}

We now combine \eqref{eq:r3-p1i-hi-i-middle} and \eqref{eq:r3-p1i-hi-i-last} to get our result:
\begin{align*}
    \sum_{i=1}^{n-1} p_1^i p_n \frac{\mu^2}{i \mu_1^2} H_i &=
    \frac{1}{2} n^{2-\alpha} \ln^2 n + O(n^{2-\alpha} \ln n),
\end{align*}
which matches the claimed asymptotic growth rate for \eqref{eq:r3-p1i-hi-i}, and completing the proof.
\end{proof}

\section{Comparison: Asymptotic growth rate versus fixed-$n$ formula}
\label{sec:empirical}

We have asymptotically characterized throughput $\mu$ and scaled mean queue length $\E[\Delta(Y_d)]$
for each of our three regimes of interest, namely, $n$-server dominated, balanced load, and 1-server dominated, \textcolor{black}{in \cref{thm:1-server-dominated,thm:n-server-dominated}, within the load-focused multiserver scaling limit}. In this section, we numerically evaluate our asymptotic results, by comparing our asymptotic formulas with the explicit summation-based formulas in \cref{lem:explicit-mu,lem:explicit-delta}. \textcolor{black}{In \cref{extensions}, we derive practical insights from our asymptotic expressions and demonstrate that those insights generalize to settings beyond the ones studied in this paper, via numerical evaluation and simulation.}

We numerically evaluate these formulas at numbers of servers up to $n=10^8$.
We choose this upper limit because modern exascale datacenters reach about this number of CPU cores.
For instance, the Citadel Campus \cite{citadel} reports 650 MW of planned total energy consumption, corresponding to about 1-2 million 1U server racks and about 100 million CPU cores.
Values of $n$ around $10^8$ thus represent the approximate scale of the
largest multiserver-job queueing systems currently in operation or construction.

We explore three parameter families, setting $p_n = n^{-\alpha}$ for $\alpha \in \{0.5, 1, 2\}$.
These three values of $\alpha$ correspond to our three regimes.
In \cref{fig:n-dominated}, we compare the exact values of $\mu$ and $\E[\Delta(Y_d)]$ with the asymptotic growth rates in the $\alpha = 0.5$ parametrization, which is in the $n$-server dominated regime.
In \cref{fig:balanced}, we examine the $\alpha = 1$ parametrization, which is in the balanced load regime.
In \cref{fig:1-dominated}, we examine the $\alpha = 2$ parametrization, which is in the 1-server dominated regime. 
In this section, we specifically focus on the $\mu_1 = \mu_n = 1$ parameterization. See \cref{app:additional-numerical} for further parameterizations of $\mu_1$ and $\mu_n$.

\begin{figure}
    \centering
    \begin{subfigure}{0.49\linewidth}
        \includegraphics[width=\textwidth]{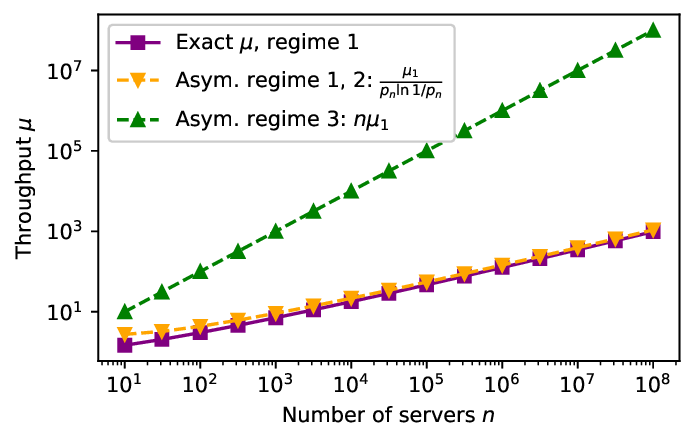}
    \end{subfigure}
    \begin{subfigure}{0.49\linewidth}
        \includegraphics[width=\textwidth]{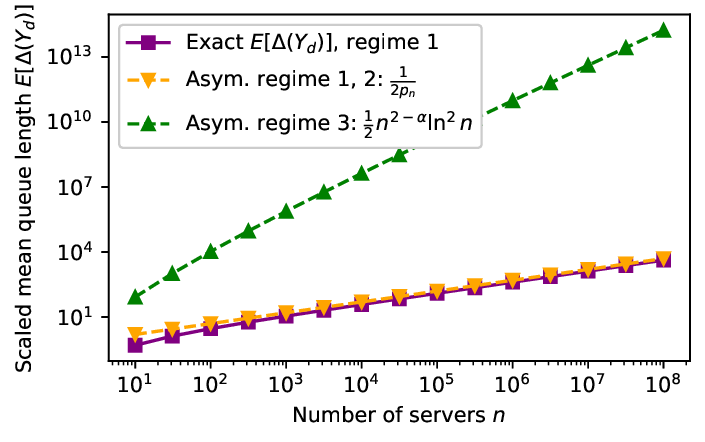}
    \end{subfigure}
    \caption{Exact versus asymptotic formulas in the $n$-server dominated load regime for $\mu$ and  $\E[\Delta(Y_d)]$ as functions of  $n$. Parametrization: $p_n=n^{-0.5}$ and $\mu_1=\mu_n=1$.}
    \label{fig:n-dominated}
\end{figure}

\begin{figure}
    \centering
    \begin{subfigure}{0.49\linewidth}
        \includegraphics[width=\textwidth]{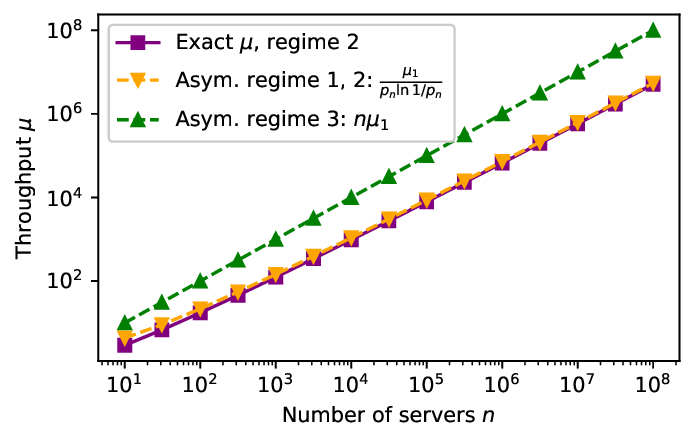}
    \end{subfigure}
    \begin{subfigure}{0.49\linewidth}
        \includegraphics[width=\textwidth]{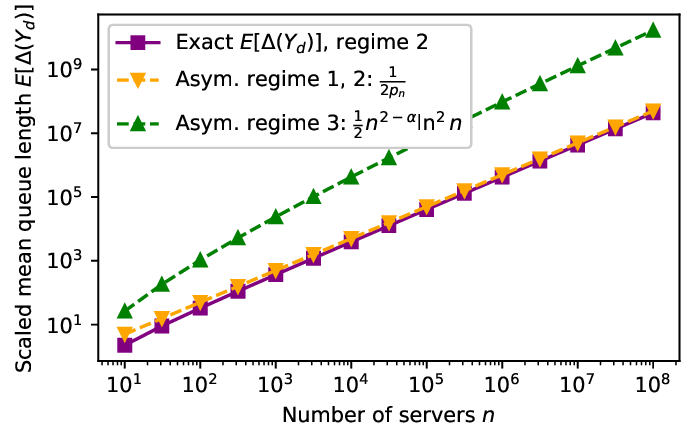}
    \end{subfigure}
    \caption{Exact versus asymptotic formulas in the balanced load regime for $\mu$ and  $\E[\Delta(Y_d)]$ as functions of  $n$. Parametrization: $p_n=n^{-1}$ and $\mu_1=\mu_n=1$.}
    \label{fig:balanced}
\end{figure}

\begin{figure}
    \centering
    \begin{subfigure}{0.49\linewidth}
        \includegraphics[width=\textwidth]{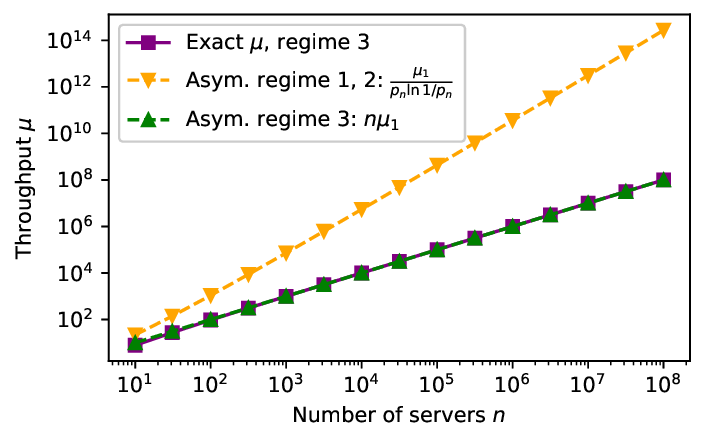}
    \end{subfigure}
    \begin{subfigure}{0.49\linewidth}
        \includegraphics[width=\textwidth]{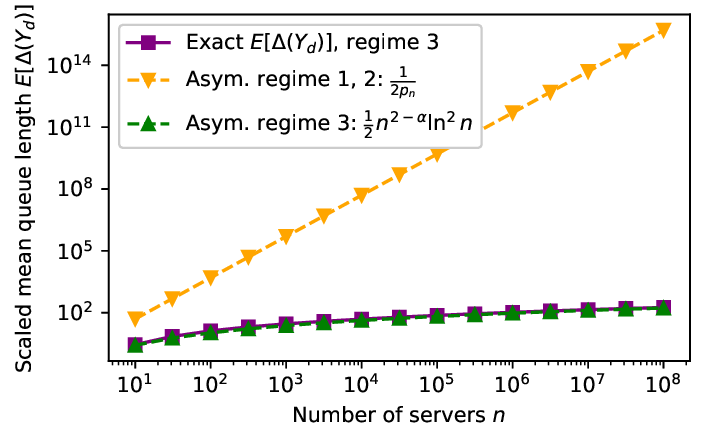}
    \end{subfigure}
    \caption{Exact versus asymptotic formulas in the $1$-server dominated regime for $\mu$ and  $\E[\Delta(Y_d)]$ as functions of  $n$. Parametrization: $p_n=n^{-2}$ and $\mu_1=\mu_n=1$.}
    \label{fig:1-dominated}
\end{figure}

For each parameter family, we compare the exact formulas against each of our two asymptotic growth rates. In each case, the exact values converge to one of the two asymptotic growth rates in the $n \to \infty$ limit, as predicted by our theoretical results.
Specifically, in \cref{fig:n-dominated}, where we examine the $n$-server dominated regime,
and in \cref{fig:balanced}, where we examine the balanced regime,
we compare the exact value of $\mu$, given in purple, with the asymptotic formula $\frac{\mu_1}{p_n \ln 1/p_n}$, in orange, and exact value of $\E[\Delta(Y_d)]$ with the asymptotic formula $1/2p_n$.
\textcolor{black}{In \cref{thm:n-server-dominated}} we proved that these two formulas, the purple and orange curves, differ by a ratio of $1+o(1)$
as $n \to \infty, p_n \to 0$.
We plot these formulas on a log-log plot, so this multiplicative convergence corresponds to a converging vertical gap.

Throughout the paper, for the benefit of monochromatic readers, the green curves use upward-triangle markers, the orange curves use downward-triangle markers, and the purple curves use square markers.

The convergence is clearly apparent over the interval $n \in [10^1, 10^8]$ spanned by our plots, confirming our theoretical results in both settings. For comparison, we also plot the formulas $n \mu_1$ against $\mu$ and $\frac{1}{2} n^{2 - \alpha} \ln^2 n$ against $\E[\Delta(Y_d)]$, in green, which are the formulas that we derived for the 1-server dominated regime. As expected, this green curve diverges from the true value and from the correct asymptotic curve.

In \cref{fig:1-dominated}, we examine the $1$-server dominated regime, and as expected from \cref{thm:n-server-dominated}, the tight asymptotics are the green curves in this setting: $n \mu_1$ for $\mu$ and $\frac{1}{2} n^{2 - \alpha} \ln^2 n$ for $\E[\Delta(Y_d)]$.
In fact, throughout the range of $n$ that we examine, there is nearly no perceptible separation between the purple and green curves, indicating not just multiplicative convergence, but a tight multiplicative approximation across all $n$ and $p_n$.

\section{Empirical Evaluation of Extension Models} \label{extensions}

\textcolor{black}{Our asymptotic theoretical results imply two practical insights in the 1-and-$n$ model, as the mix of jobs changes from $n$-server focused to 1-server focused:}
\begin{enumerate}
    \item \textcolor{black}{The throughput of the system rises monotonically, with an inflection point around regime 2, where the loads of both types of jobs are similar.}
    \item \textcolor{black}{The mean response time of the system, holding fraction-of-throughput constant, is a bell curve, rising for $n$-server-dominated systems, peaking for a balanced mixture, and falling again for $1$-server dominated systems.}
\end{enumerate}

\textcolor{black}{We will show that these insights hold not just in the 1-and-$n$ system that is the primary focus of this paper, but also in the following three generalizations:}
\begin{itemize}
    \item \textcolor{black}{A system with an additional scaling behavior: 1-server jobs complete at rate $\mu_1 = n$, while $\mu_n = 1$, so the $n$-server jobs have much longer duration than the 1-server jobs, in addition to their larger server need. (See \cref{sec:emp-duration-scaling})}
    \item \textcolor{black}{A system where the large-resource jobs require $n/2$ servers, rather than $n$ servers. (See \cref{sec:emp-half})}
    \item \textcolor{black}{A system with three classes of jobs: $n$-server, $n/2$-server and $1$-server jobs. (See \cref{sec:emp-three})}
\end{itemize}

\textcolor{black}{To investigate these systems, we use a combination of exact calculations, using the formulas proven in  \cref{lem:explicit-mu,lem:explicit-delta} where applicable,
as well as simulation. In all cases, we show that the same two practical insights apply, appropriately tailored to the relevant systems.}

\textcolor{black}{We start in \cref{sec:emp-original} by investigating the original 1-and-$n$ system via exact calculation and simulation, to obtain a baseline for the behavior corresponding to our insights, then turn to the extensions.}

\subsection{Calculations and Simulation for 1-and-$n$ System}
\label{sec:emp-original}

\begin{figure}
    \centering
    \begin{subfigure}[t][][t]{0.49\linewidth}
        \includegraphics[width=\textwidth]{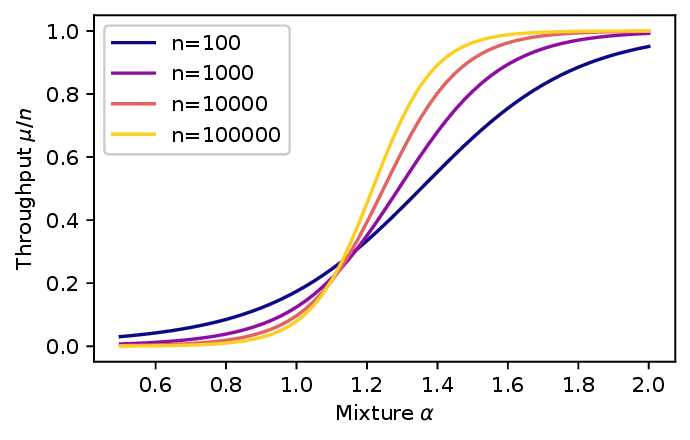}
        \caption{Throughput $\mu/n$, calculated using \cref{lem:explicit-mu}.}
        \label{fig:original-calc-mu}
    \end{subfigure}
    \begin{subfigure}[t][][t]{0.49\linewidth}
        \includegraphics[width=\textwidth]{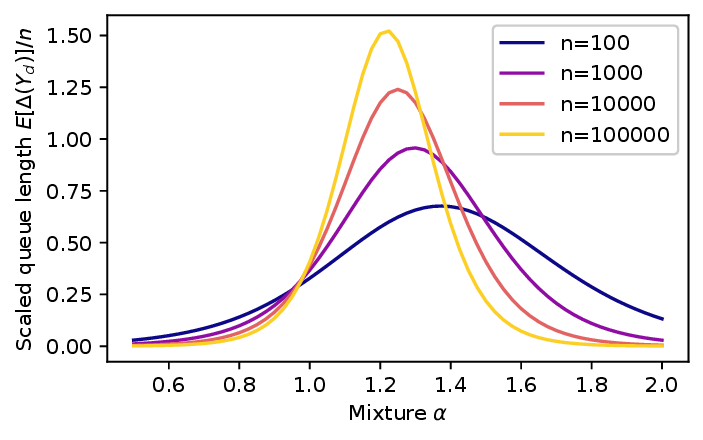}
        \caption{Scaled mean queue length $\E[\Delta(Y_d)]/n$, calculated using \cref{lem:explicit-delta}.}
        \label{fig:original-calc-delta}
    \end{subfigure}
    \caption{Main setting with server needs $1$ and $n$, $\mu_1 = \mu_n = 1$, and  $p_n = 1/n^\alpha$.}
    \label{fig:original-calc}
\end{figure}

\begin{figure}
    \centering
    \begin{subfigure}[t][][t]{0.49\linewidth}
        \includegraphics[width=\textwidth]{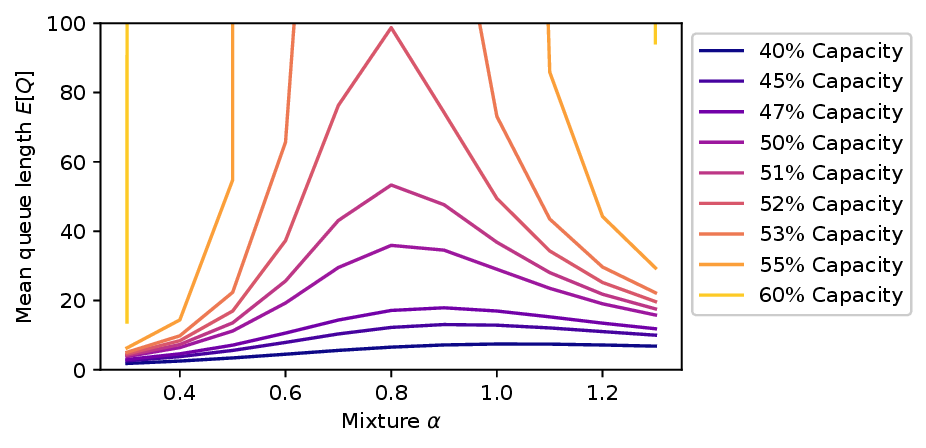}
        \caption{Main setting with  server needs $1$ and  10, $n=10$ servers and $\mu_1 = \mu_{10} = 1$.}
        \label{fig:original-sim}
    \end{subfigure}
    \begin{subfigure}[t][][t]{0.49\linewidth}
        \includegraphics[width=\textwidth]{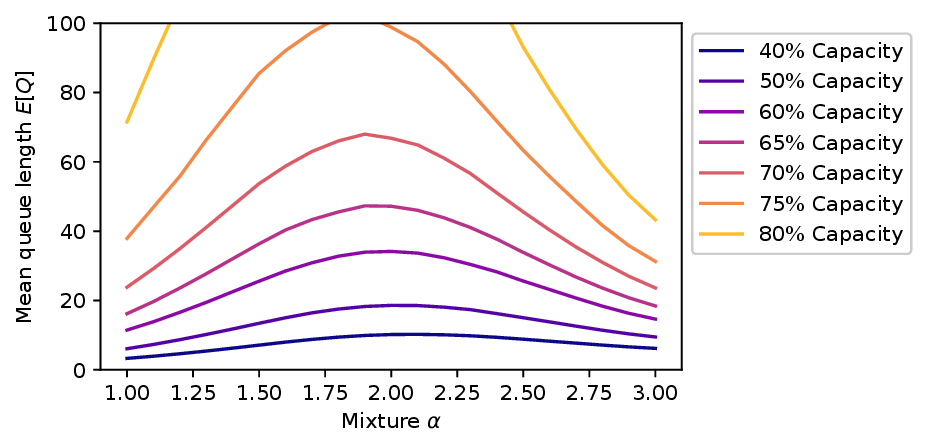}
        \caption{Equal-area setting with server needs $1$ and $10$, $n=10$ servers and $\mu_1 = 1, \mu_{10} = 10$.}
        \label{fig:area-sim}
    \end{subfigure}
    \caption{Simulation results ($10^8$ jobs per data point) for mean queue length when $p_{10} = 1/n^\alpha$  and  $\lambda$ is chosen so that a constant fraction of capacity is utilized.}
    \label{fig:original-and-area-sim}
\end{figure}

\textcolor{black}{In \cref{fig:original-calc}, we see both insights verified in the 1-and-$n$ system,
via direct calculation of throughput $\mu$ and the scaled queue length $\E[\Delta(Y_d)]$
using \cref{lem:explicit-mu,lem:explicit-delta}.
Each curve represents a fixed number of servers $n$ and varying parameter $\alpha$ controlling the fraction of large jobs $p_n = \frac{1}{n^\alpha}$. We scale throughput $\mu$ and scaled queue length $\E[\Delta(Y_d)]$ by a factor of $n$ to make the curves comparable.}

\textcolor{black}{In \cref{fig:original-calc-mu}, we observe that throughput increases monotonically with $\alpha$, with an inflection point around $\alpha \approx 1.2$, as predicted by \cref{thm:n-server-dominated,thm:1-server-dominated}.}

\textcolor{black}{In \cref{fig:original-calc-delta}, we observe that scaled mean queue length increases with $\alpha$ to a peak around $\alpha=1.2$ and  then decreases back to 0 as $\alpha$ gets large, also as predicted by \cref{thm:n-server-dominated,thm:1-server-dominated}.}

\textcolor{black}{In \cref{fig:original-sim}, we observe via simulation that these insights are not restricted to calculations or to very large $n$. Here, we simulate $n=10$ servers, with each curve representing a different fraction of capacity demanded, using the following formula:}
\begin{align*}
\textcolor{black}{\text{Fraction of capacity} := \frac{\mu_1 p_1 + \mu_n p_n n}{n}}    
\end{align*}

\textcolor{black}{The bell-curve behavior of mean queue length is clearly visible, peaking around $\alpha=0.8$, and is especially clear for capacity fractions around the boundary of the stability region.
Our bell-curve insight thus applies not only for scaled mean queue length $\E[\Delta(Y_d)]$,
but also for mean queue length $\E[Q]$ itself, including in systems with as few as $n=10$ servers.}

\subsection{Calculations and Simulation for Duration-Scaling System}
\label{sec:emp-duration-scaling}
\begin{figure}
    \centering
    \begin{subfigure}[t][][t]{0.49\linewidth}
        \includegraphics[width=\textwidth]{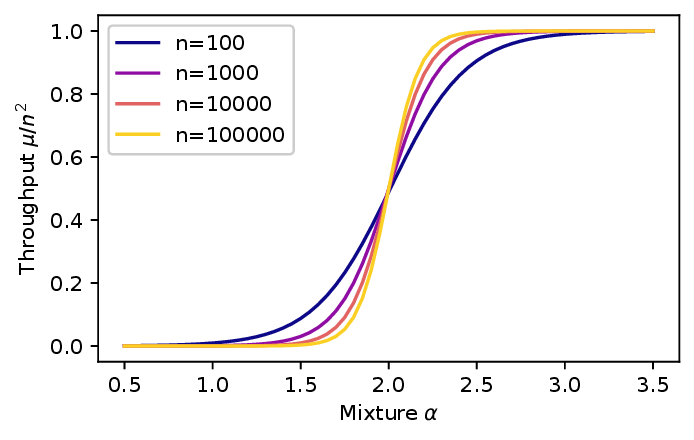}
        \caption{Throughput $\mu/n^2$, calculated using \cref{lem:explicit-mu}.}
    \label{fig:area-calc-mu}
    \end{subfigure}
    \begin{subfigure}[t][][t]{0.49\linewidth}
        \includegraphics[width=\textwidth]{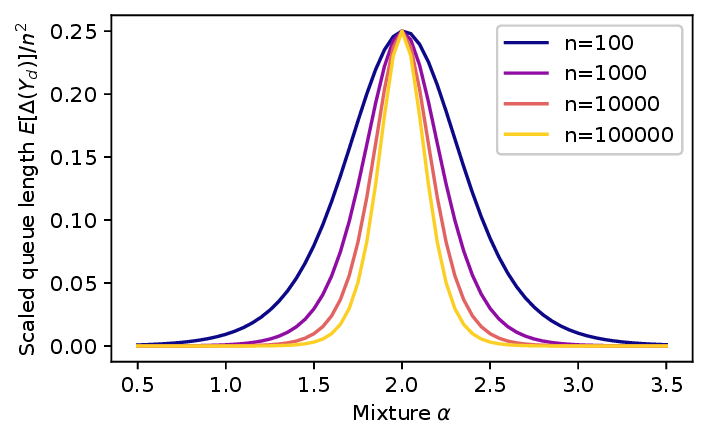}
        \caption{Scaled mean queue length $\E[\Delta(Y_d)]/n^2$, calculated using \cref{lem:explicit-delta}.}
    \label{fig:area-calc-delta}
    \end{subfigure}    
    \caption{ Scaled service rate setting with server needs $1$ and $n$, $\mu_1 = n, \mu_n = 1$ and $p_n = 1/n^\alpha$.}
    \label{fig:area-calc}
\end{figure}

\textcolor{black}{In this section, we investigate the realistic scenario where low-resource jobs also require much lower service duration than high-resource jobs. This is realistic in datacenters, for instance, where a single-server analytics job might only take a few seconds, while an LLM-training job might take most of the datacenter for a period of months. To model this scenario, we set the service rate $\mu_1$ of the 1-server jobs equal to $n$, the number of servers, while leaving the server rate $\mu_n$ of the $n$-server jobs constant.}

\textcolor{black}{In \cref{fig:area-calc}, we again observe that  both of our insights are verified even more clearly in the duration-scaling system where throughput $\mu$ and the scaled queue length $\E[\Delta(Y_d)]$ are computed 
using \cref{lem:explicit-mu,lem:explicit-delta}. Note that we use the scaling $\mu/n^2$ and $\E[\Delta(Y_d)]/n^2$
to account for the fact that the 1-server jobs have both $1/n$ the mean duration  and $1/n$ the resource demand of the $n$-server jobs.}

\textcolor{black}{\cref{fig:area-calc-mu} shows that throughput $\mu$ has an inflection point at $\alpha=2$, and \cref{fig:area-calc-delta} shows that scaled mean queue length has a particularly symmetrical bell curve centered on $\alpha=2$ as well. We expect the inflection point to be around $\alpha=2$, because this is where the load of $n$-server and $1$-server jobs is equalized. Thus our insights are verified in this setting as well, despite leaving the domain of applicability of asymptotic results.}

\textcolor{black}{Note that \cref{fig:area-sim} shows that our insights again carry over into simulation and into low-$n$ settings, with mean queue length $\E[Q]$ peaking around $\alpha =1.9$, across a wide range of fractions of capacity for $n=10$ servers. Hence, the bell-curve behavior is even smoother and more consistent in this duration-scaling setting than in \cref{sec:emp-original}.}

\subsection{Calculations and Simulation for Half-Size Large Jobs}
\label{sec:emp-half}

\begin{figure}
    \centering
    \begin{subfigure}[t][][t]{0.42\linewidth}
        \includegraphics[width=\textwidth]{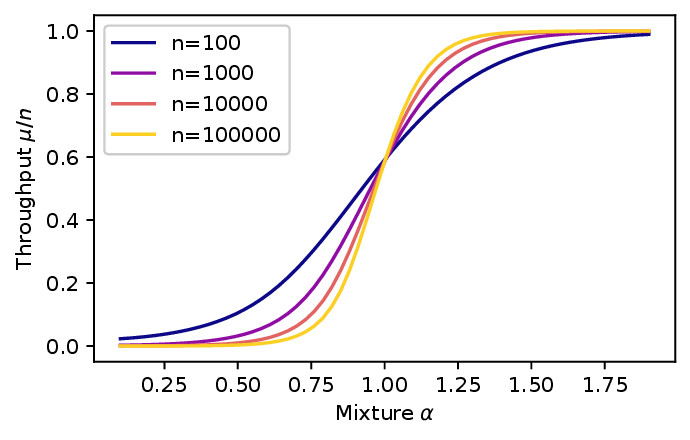}
    \caption{Throughput $\mu/n$, calculated using \cite[Section 5]{grosof_new_2023}, \cite[Section 5.1]{comte_graph_2025}.}
    \label{fig:half-calc-mu}
    \end{subfigure}
    \begin{subfigure}[t][][t]{0.57\linewidth}
        \includegraphics[width=\textwidth]{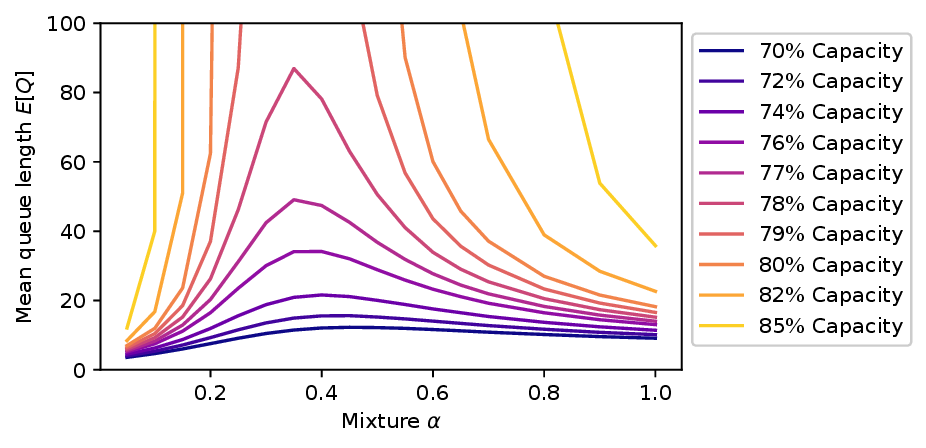}
        \caption{Simulation results ($10^8$ jobs per data point) for mean queue length when $\lambda$ is chosen so that a  constant fraction of capacity is utilized and $n=10$ servers.}
        \label{fig:half-sim}
    \end{subfigure}
    \caption{Calculations and simulation for the half-size large jobs setting with server needs $1$ and $n/2$, $\mu_1 = \mu_{n/2} = 1$ and  $p_n = 1/n^\alpha$.}
    \label{fig:half-calc-sim}
\end{figure}

\textcolor{black}{In this section, we investigate the case where the large-resource jobs require $n/2$ servers, rather than $n$ servers as in our asymptotic results.
This mimics the definition of leadership-class jobs on the Frontier supercomputer at Oak Ridge National Labs, which consists of jobs that require at least 60\% of the capacity of the supercomputer, giving top priority to such large-resource jobs and setting policy goals to attain a large fraction of load devoted to leadership-class jobs \cite{Frontier}.}

\textcolor{black}{With half-size large jobs, we can no longer use the exact results in \cref{lem:explicit-mu,lem:explicit-delta} to compute throughput and scaled queue length. Instead, we compute the throughput $\mu$ using the product-form stationary distribution formula for the MSJ 2-class saturated system from \citet[Section 5]{grosof_new_2023}, \cite[Section 5.1]{comte_graph_2025}. \cref{fig:half-calc-mu} shows the result of this calculation, with throughput increasing monotonically with increasing $\alpha$, as the mixture of load moves from $n$-server dominated to 1-server dominated, with an inflection point around $\alpha=1$.}

\textcolor{black}{Since we do not have an exact formula for limiting scaled mean queue length $\E[\Delta(Y_d)]$, we instead turn to simulation in \cref{fig:half-sim}, again with $n=10$ servers. Here we see the same bell-curve behavior of mean queue length $\E[Q]$ near the boundary of the stability region, with a peak around $\alpha=0.4$. Note that in this setting, the fraction of capacity in use is significantly higher than in any of the other setting around the boundary of the stability region for $\alpha$ near that peak -- around $78\%$ of capacity in this setting, compared to $50-60\%$ of capacity in the other settings. This follows from the fact that this system will never leave $n/2$ or more servers empty, while settings with jobs that demand $n$ servers may leave up to $n-1$ servers empty. Note also that in this setting, the peak of mean queue length $\E[Q]$ does not occur at the same $\alpha$ value as the inflection point of throughput, in contrast to \cref{sec:emp-duration-scaling,sec:emp-original}. We leave the analysis of this behavior to future work.}

\subsection{Calculations and Simulation for Three-class System}
\label{sec:emp-three}

\begin{figure}
    \centering
        \includegraphics[width=0.57\textwidth]{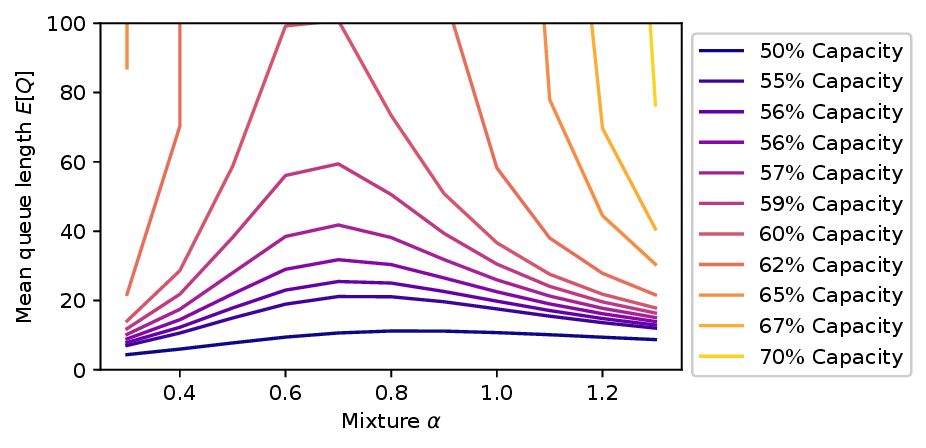}
        \caption{Simulation results ($10^8$ jobs per data point) for three-class setting with server needs $1, 5$, $10$, $n=10$ servers, $\mu_1 = \mu_5= \mu_{10} = 1$, and  $p_5 = p_{10} = n^\alpha/2$.}
    \label{fig:three-sim}
\end{figure}

\textcolor{black}{In this section, we investigate the case where there are three classes of jobs, requiring $1, n/2,$ and $n$ servers,  respectively. This is the most realistic setting because in practice, queueing systems where jobs require varying resources are not limited to two possibilities, but instead vary widely.}

\textcolor{black}{In this setting, it is more difficult to calculate throughput $\mu$ than in the half-size large jobs setting of \cref{sec:emp-half}. There is an exact product-formula stationary distribution formula for this setting, given by \citet{rumyantsev_stability_2017}, as well as \cite[Section 3]{grosof_new_2023}. However, the number of terms in this formula is exponential in $n$, while the formula in \cref{lem:explicit-mu} and the formula used in \cref{sec:emp-half} each contained a number of terms that was linear in $n$. We leave the problem of efficiently calculating $\mu$ in this setting to future work -- a formula with a number of terms that is linear in $n$ is likely doable. No exact formula for $\E[\Delta(Y_d)]$ is known in this setting.}

\textcolor{black}{We now turn to simulation in \cref{fig:three-sim}, again with $n=10$ servers. There is a larger parameter space of probability distributions over the three classes, but we choose arbitrarily to equalize the fraction of jobs in each of the two larger-resource classes. Here we see the same bell-curve behavior of mean queue length $\E[Q]$ near the boundary of the stability region, with a peak around $\alpha=0.7$, between the peaks in the original setting in \cref{sec:emp-original} and the half-size large jobs setting in \cref{sec:emp-half}. The fraction of capacity in use is likewise between those two settings around the boundary of the stability region near that peak. Even for this more complicated setting, our key bell-curve insight is again confirmed.}

\section{Conclusion} \label{conclusions}

In this paper, we studied the multiserver-job (MSJ) setting, and specifically the 1-and-$n$ system, in the load-focused multilevel scaling limit.
Within this limit, we provided the first analysis of the asymptotic growth rate of throughput $\mu$ and scaled mean queue length $\E[Q(1-\rho)] \to \E[\Delta(Y_d)]$
in each of three regimes: the $n$-server dominated load regime, balanced load regime, and 1-server dominated load regimes.
\textcolor{black}{Our \cref{thm:n-server-dominated,thm:1-server-dominated}
demonstrate} that the balanced load regime represents a tipping point with respect to throughput: In the $n$-server dominated regime, throughput is dominated by 1-server jobs that run while an $n$-server job is blocking the head of the queue, while in the 1-server dominated regime, throughput is dominated by $1$-server jobs running on all $n$ servers.
Similarly, we prove that the balanced load regime exhibits the largest scaled response time, indicating the most variation in service rates: $\E[\Delta(Y_d)] = \theta(n)$ in this regime, with slower growth rates as a function of $n$ in the other two regimes.

\textcolor{black}{Furthermore, our numerical and simulation results indicate that these insights can be generalized to duration-scaled systems where low-resource jobs require much lower service duration than high-resource jobs, systems where the large-resource jobs require $n/2$ servers rather than $n$ servers, and systems with three classes where jobs require $1$, $n/2$, and $n$ servers. A natural direction for future research would be to prove asymptotic results in these settings.}


\section{Competing interests declaration}
Competing interests: The authors declare none.

\section*{Acknowledgement}
The work of the first author was supported by a Tennenbaum Postdoctoral Fellowship at Georgia Tech and by AFOSR Grant FA9550-24-1-0002. The work of the second author was supported by the National Science Foundation Grant CMMI-2127778.

\bibliographystyle{plainnat}
\bibliography{refs}

\appendix
\section{Additional numerical comparisons}
\label{app:additional-numerical}

In this section, we numerically confirm our theoretical results \textcolor{black}{in \cref{thm:n-server-dominated,thm:1-server-dominated}, showing} the convergence of $\mu$ and $\E[\Delta(Y_d)]$ to our theoretically predicted asymptotic behaviors. In \cref{sec:empirical}, we demonstrated this convergence for the parametrization $\mu_1 = \mu_n = 1$. In this section, we explore the cases $\mu_1 = 10, \mu_n = 1$, where 1-server jobs have much shorter durations than $n$-server jobs, and $\mu_1=1, \mu_n= 10$, where 1-server jobs have much longer durations than $n$-server jobs.

In \cref{fig:n-dominated-10-1,fig:balanced-10-1,fig:1-dominated-10-1}, we explore the case where $\mu_1 = 10, \mu_n = 1$, under the parametrizations $p_n = n^{-0.5}, p_n= n^{-1},$ and $p_n = n^{-2}$, respectively.

\begin{figure}
    \centering
    \begin{subfigure}{0.49\linewidth}
        \includegraphics[width=\textwidth]{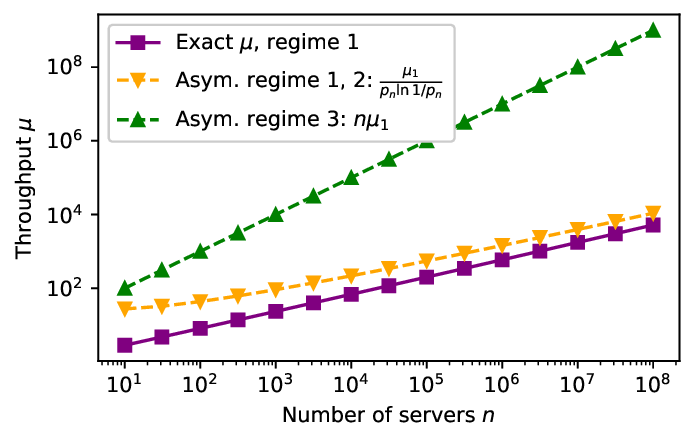}
    \end{subfigure}
    \begin{subfigure}{0.49\linewidth}
        \includegraphics[width=\textwidth]{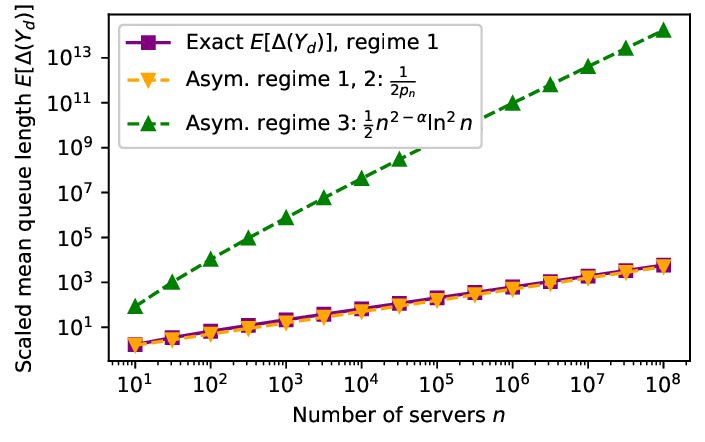}
    \end{subfigure}
    \caption{Exact versus asymptotic formulas in the $n$-server dominated regime for $\mu$ and  $\E[\Delta(Y_d)]$ as functions of  $n$. Parametrization: $p_n=n^{-0.5}$ and $\mu_1=10, \mu_n=1$.}
    \label{fig:n-dominated-10-1}
\end{figure}

\begin{figure}
    \centering
    \begin{subfigure}{0.49\linewidth}
        \includegraphics[width=\textwidth]{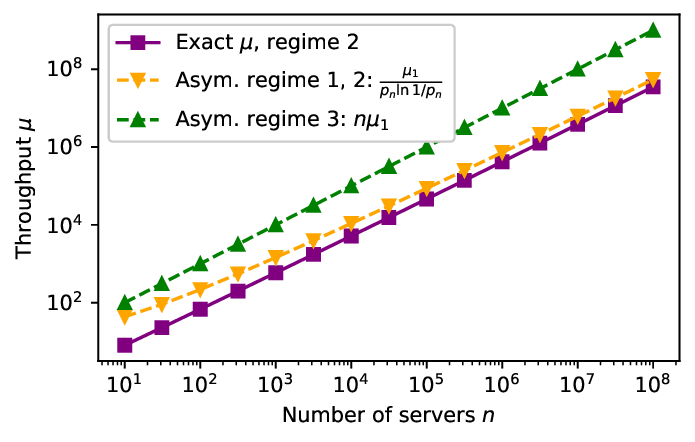}
    \end{subfigure}
    \begin{subfigure}{0.49\linewidth}
        \includegraphics[width=\textwidth]{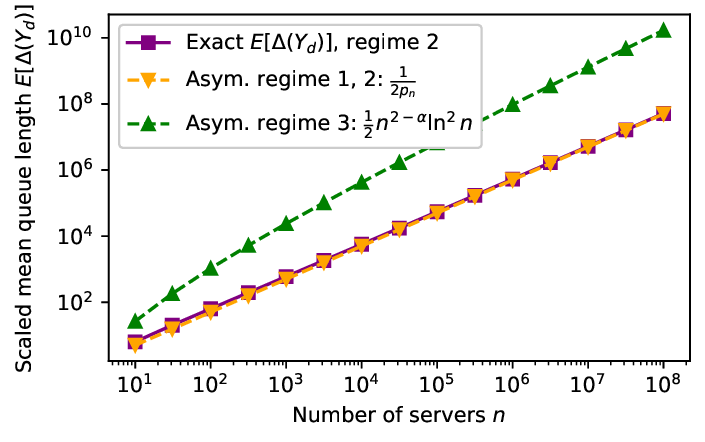}
    \end{subfigure}
    \caption{Exact versus asymptotic formulas in the balanced load  regime for $\mu$ and  $\E[\Delta(Y_d)]$ as functions of  $n$.
    Parametrization: $p_n=n^{-1}$ and $\mu_1=10, \mu_n=1$.}
    \label{fig:balanced-10-1}
\end{figure}

\begin{figure}
    \centering
    \begin{subfigure}{0.49\linewidth}
        \includegraphics[width=\textwidth]{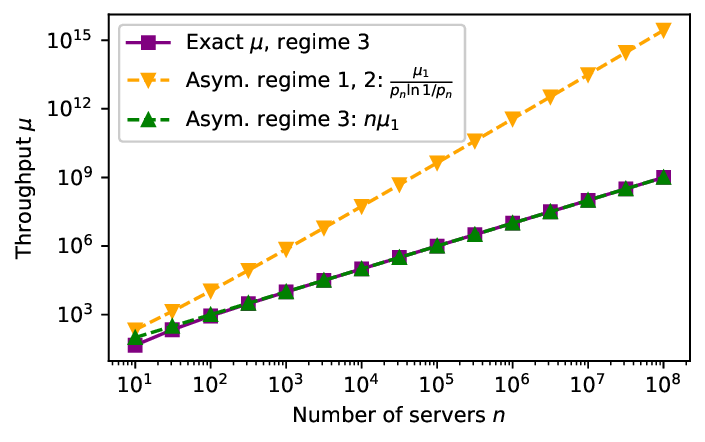}
    \end{subfigure}
    \begin{subfigure}{0.49\linewidth}
        \includegraphics[width=\textwidth]{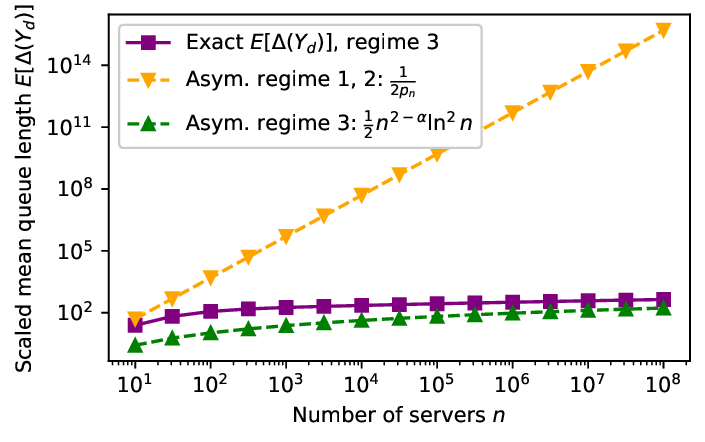}
    \end{subfigure}
    \caption{Exact versus asymptotic formulas in the $1$-server dominated regime for $\mu$ and  $\E[\Delta(Y_d)]$ as functions of  $n$.
    Parametrization: $p_n=n^{-2}$ and $\mu_1=10, \mu_n=1$.}
    \label{fig:1-dominated-10-1}
\end{figure}

These figures demonstrate the same multiplicative convergence shown in \cref{sec:empirical} and \textcolor{black}{proven in \cref{thm:n-server-dominated,thm:1-server-dominated}:} In \cref{fig:n-dominated-10-1,fig:balanced-10-1}, the orange and purple curves converge, while in \cref{fig:1-dominated-10-1}, the green and orange curves converge.

Throughout the paper, for the benefit of monochromatic readers, the green curves use upward-triangle markers, the orange curves use downward-triangle markers, and the purple curves use square markers.

In some cases, the convergence is faster or slower than in the $\mu_1 = \mu_n = 1$ setting shown in \cref{sec:empirical}. For instance, in \cref{fig:n-dominated-10-1}, the convergence for $\mu$ to its asymptotic curve is slower than it was in \cref{fig:n-dominated}, while the convergence for $\E[\Delta(Y_d)]$ is faster than in \cref{fig:n-dominated}, and in fact the asymptotic expression is a tight approximation across all $n$.

As for why the quality of approximation changes, for the $\mu$ plots, the two asymptotic formulas emphasize terms corresponding to periods of time when 1-server jobs are in service. The orange asymptotic curve emphasizes the period when 1-server jobs are emptying from service to allow an $n$-server job to be served, while the green asymptotic curve emphasizes the period when 1-server jobs are being served on all $n$ servers.
In the $\mu_1 = 10, \mu_n = 1$ parameterization, the 1-server jobs have shorter durations, raising the importance of the time when the $n$-server jobs are in service, which is not asymptotically relevant, but shows up in these plots for small $n$.

For the $\E[\Delta(Y_d)]$ plots, we do not have a clear interpretation -- in \cref{fig:n-dominated-10-1,fig:balanced-10-1}, the orange asymptotics are tighter than in \cref{sec:empirical}, while in \cref{fig:1-dominated-10-1}, the green asymptotics are looser.

\begin{figure}
    \centering
    \begin{subfigure}{0.49\linewidth}
        \includegraphics[width=\textwidth]{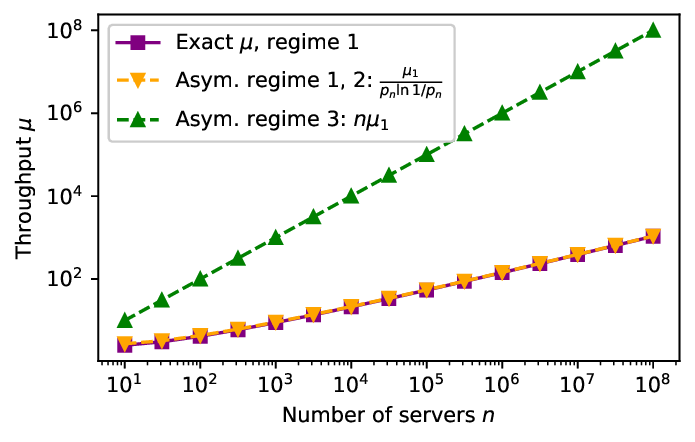}
    \end{subfigure}
    \begin{subfigure}{0.49\linewidth}
        \includegraphics[width=\textwidth]{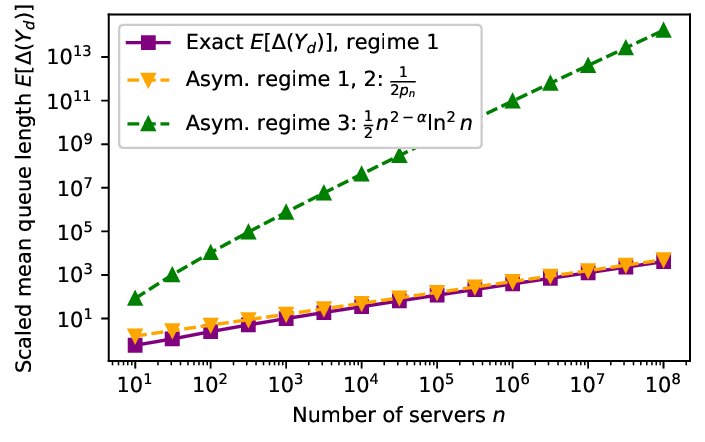}
    \end{subfigure}
    \caption{Exact versus asymptotic formulas in the $n$-server dominated regime for $\mu$ and  $\E[\Delta(Y_d)]$ as functions of  $n$.
    Parametrization: $p_n=n^{-0.5}$ and $\mu_1=1, \mu_n=10$.}
    \label{fig:n-dominated-1-10}
\end{figure}

\begin{figure}
    \centering
    \begin{subfigure}{0.49\linewidth}
        \includegraphics[width=\textwidth]{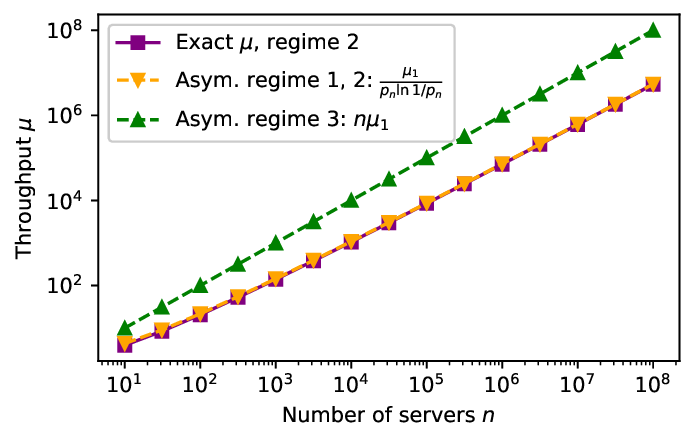}
    \end{subfigure}
    \begin{subfigure}{0.49\linewidth}
        \includegraphics[width=\textwidth]{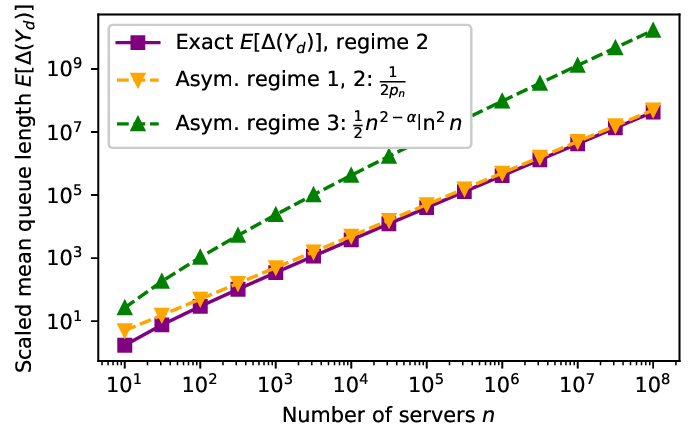}
    \end{subfigure}
    \caption{Exact versus asymptotic formulas in the balanced load  regime for $\mu$ and  $\E[\Delta(Y_d)]$ as functions of  $n$.
    Parametrization: $p_n=n^{-1}$ and $\mu_1=1, \mu_n=10$.}
    \label{fig:balanced-1-10}
\end{figure}

\begin{figure}
    \centering
    \begin{subfigure}{0.49\linewidth}
        \includegraphics[width=\textwidth]{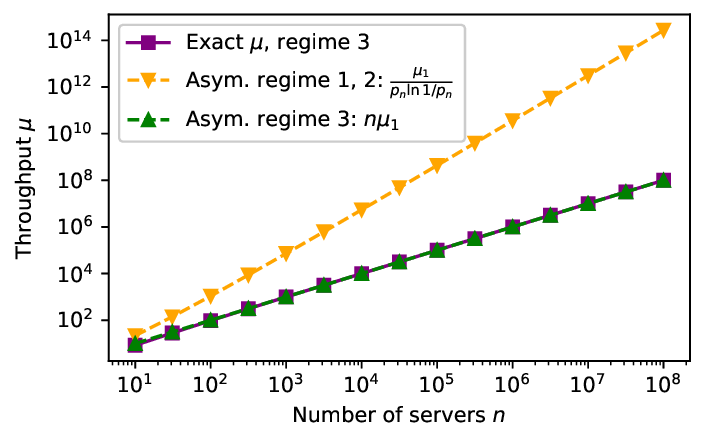}
    \end{subfigure}
    \begin{subfigure}{0.49\linewidth}
        \includegraphics[width=\textwidth]{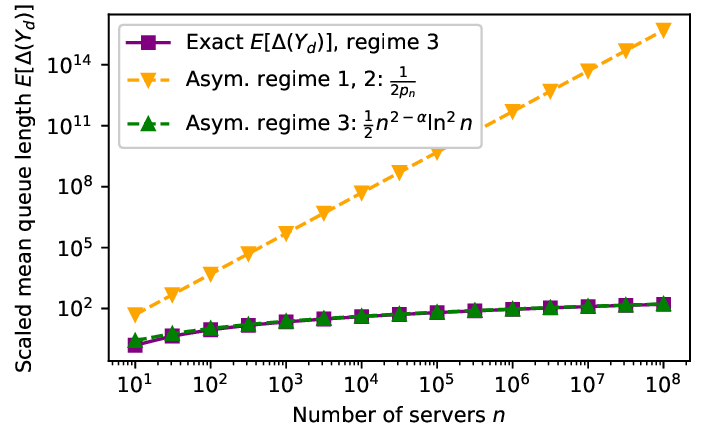}
    \end{subfigure}
    \caption{Exact versus asymptotic formulas in the $1$-server dominated regime for $\mu$ and  $\E[\Delta(Y_d)]$ as functions of  $n$.
    Parametrization: $p_n=n^{-2}$ and $\mu_1=1, \mu_n=10$.}
    \label{fig:1-dominated-1-10}
\end{figure}

Likewise, in figures \cref{fig:n-dominated-1-10,fig:balanced-1-10,fig:1-dominated-1-10}, we explore the case where $\mu_1 = 1, \mu_n = 10$, under the parametrizations $p_n = n^{-0.5}, p_n= n^{-1},$ and $p_n = n^{-2}$, respectively.

As before, in \cref{fig:n-dominated-1-10,fig:balanced-1-10}, the orange and purple curves converge, while in \cref{fig:1-dominated-1-10}, the green and orange curves converge.

As predicted by the discussion above, the exact formulas for $\mu$ are very tight in all three regimes, as this parametrization lengthens the periods when 1-server jobs are being served, improving the accuracy of the asymptotics at low $n$. 

Again, it is unclear when our asymptotic formulas for $\E[\Delta(Y_d)]$ are tight for lower values of $n$. In this case, the green and purple curves are very close in \cref{fig:1-dominated-1-10}, the reverse of the situation in \cref{fig:1-dominated-10-1}.
\end{document}